\documentclass[12pt, draftclsnofoot, onecolumn]{IEEEtran}
\usepackage[utf8]{inputenc}
\usepackage{tabularx}
\usepackage{amsmath}
\usepackage{amsthm}
\usepackage{amssymb}
\usepackage{enumerate}
\usepackage{latexsym}
\usepackage{mathtools}
\usepackage{multirow}
\usepackage[ruled,linesnumbered]{algorithm2e}
\usepackage{longtable}

\usepackage{makecell}

\SetKwRepeat{Do}{do}{while}

\usepackage[unicode=true,
bookmarks=true,bookmarksnumbered=true,bookmarksopen=true,bookmarksopenlevel=1,
breaklinks=false,pdfborder={0 0 0},pdfborderstyle={},backref=false,colorlinks=false]
{hyperref}

\makeatletter
\theoremstyle{plain}

\newtheorem{theorem}{Theorem}[section]
\newtheorem{corollary}{Corollary}[section]

\newtheorem{lemma}{Lemma}[section]

\newtheorem{example}{Example}[section]
\theoremstyle{remark}

\newcommand{\F}{\mathbb{F}}

\usepackage{color, colortbl}
\usepackage[noadjust,sort,compress]{cite}

\@ifundefined{showcaptionsetup}{}{%
	\PassOptionsToPackage{caption=false}{subfig}}
\usepackage{subfig}

\usepackage[justification=centering]{caption}

\definecolor{Gray}{gray}{0.9}
\usepackage[first=0,last=9]{lcg}

\begin{document}

\title{Optimal Codes with Positive Griesmer Defects, Related Optimal and Almost Optimal LRC Codes}

\author{Yurui Wang, Hao Chen and Xia Wu\thanks{Yurui Wang and Xia Wu are with School of Mathematics, Southeast University , Nanjiang 210096, Jiangsu, China (e-mail: wangyurui0201@163.com, wuxia80@seu.edu.cn). Hao Chen is with College of Information Science and Technology, Jinan University, Guangzhou, Guangdong Province, 510632, China (e-mail: haochen@jnu.edu.cn). The research of Hao Chen was supported by NSFC Grant 62032009. The work of Xia Wu was supported by NSFC Grant 12371035 and by Jiangsu
Provincial Scientific Research Center of Applied Mathematics under Grant
BK20233002. (Corresponding author: Hao Chen)}}

\maketitle

\begin{abstract}
Solomon and Stiffler constructed infinitely many families of linear codes meeting the Griesmer bound in 1965. It is well-known in 1990's that certain Griesmer codes (codes with the zero Griesmer defect) are equivalent to Solomon-Stiffler codes or Belov codes. Griesmer codes constructed in some recent papers published in IEEE Trans. Inf. Theory are actually Solomon-Stiffler codes or affine Solomon-Stiffler codes proposed in our previous paper. Therefore it is more challenging to construct optimal codes with positive Griesmer defects.\\

In this paper, we construct several infinite families of optimal codes with positive Griesmer defects. Then these codes are certainly not equivalent to Solomon-Stiffler codes or Belov codes. Weight distributions and subcode support weight distributions of these optimal codes are determined. On the other hand, some of constructed optimal linear codes are optimal locally recoverable codes (LRCs) meeting the Cadambe-Mazumdar (CM) bound. Some of our constructed optimal codes are very close to the CM bound. Localities of these optimal or almost optimal LRC codes are two.\\

{\bf Index terms:} Optimal code, Griesmer defect, Weight distribution, Subcode support weight distribution, LRC code, Cadambe-Mazumdar bound.

\end{abstract}

\section{Introduction}

\subsection{Preliminaries}

Let $q$ be a prime power and ${\bf F}_q$ be the finite field with $q$ elements. We denote by ${\bf F}_q^n$ the length $n$ linear space over ${\bf F}_q$. The linear space ${\bf F}_q^n$ is endowed with the Hamming metric. The support of a vector ${\bf a}=(a_0,\ldots,a_{n-1})\in {\bf F}_q^n$ is $$supp({\bf a})=\{i: a_i \neq 0\}.$$ The Hamming weight $wt({\bf a})$ of a vector ${\bf a} \in {\bf F}_q^n$ is the cardinality of its support. The distance $d({\bf a}, {\bf b})$ between two vectors ${\bf a}$ and ${\bf b}$ is $d({\bf a}, {\bf b})=wt({\bf a}-{\bf b})$. The minimum distance of a code ${\bf C} \subset {\bf F}_q^n$ is, $$d({\bf C})=\min_{{\bf a} \neq {\bf b}} \{d({\bf a}, {\bf b}),  {\bf a} \in {\bf C}, {\bf b} \in {\bf C} \}.$$  A linear $[n,k,d]_q$ code in ${\bf F}_q^n$ of a dimension $k$ subspace with the minimum distance $d$. It is obvious that the minimum distance of a linear code is the minimum weight of its nonzero codewords.  In coding theory, one of the main goals is to construct linear codes with large $k$ and $d$ when $n$ is given.  We call a linear $[n,k,d]_q$ code optimal, if there is no linear $[n,k,d+1]_q$ code. Let $${\bf C}^{\perp}=\{(y_1,\ldots,y_n) \in {\bf F}_q: \Sigma_{i=1}^n y_ix_i=0, \forall(x_1,\ldots,x_n) \in {\bf C}\}$$ be the dual code of the linear code ${\bf C} \subset {\bf F}_q^n$. The minimum
distance $d^{\perp}$ of ${\bf C}^{\perp}$ is called the dual distance of ${\bf C}$. The linear code ${\bf C}$ is called projective if its dual distance is at least $3$. We refer to \cite{HP,MScode} for the detail.

The Singleton bound in \cite{Singleton} for linear codes asserts
$$d \leq n-k+1$$ and codes attaining this bound are called maximal distance separable
(MDS) codes. Reed-Solomon codes are well-known linear MDS codes, see \cite{MScode}. The Griesmer bound for a linear $[n, k, d]_q$ code in \cite{Griesmer} asserts $$n \geq \Sigma_{i=0}^{k-1} \left\lceil \frac{d}{q^i} \right\rceil.$$
A linear $[n, k, d]_q$ code meeting this bound is called a Griesmer code. It is clear that Griesmer codes are optimal. We set $g_q(k,d)=\Sigma_{i=0}^{k-1} \left\lceil \frac{d}{q^i} \right\rceil$. The difference $GD=n-g_q(k,d)$ is called the Griesmer defect of this linear $[n,k,d]_q$ code. Linear code with small Griesmer defects are in general have good parameters. Set $\delta_q(k,d, d')=g_q(k,d+d')-g_q(k,d)$, $d'\geq 1$. If $\delta_q(k,d,d')$ is larger than the Griesmer defect, the optimal minimum weight of this linear $[n, k]_q$ code is at most $d+d'-1$.

\subsection{Subcode support weight distribution}

Let $A_i({\bf C})$ be the number of codewords $i$ in the code ${\bf C}$, then $$\{A_0({\bf C}), A_1({\bf C}), \ldots,
A_n({\bf C})\}$$ is called the weight distribution of this code. We refer to \cite{Wei} for the generalized Hamming weights of linear codes. Let ${\bf D} \subset {\bf C}$ be a nonzero subcode.
The support of ${\bf D}$ consists of the nonzero positions of all codewords in ${\bf D}$, that is,
$$supp({\bf D})=\{i:\exists \,{\bf a}=(a_{1}, a_2, \cdots,a_{n})\in {\bf D} \text{ with } a_{i}\neq0\}.$$
The weight of the subcode ${\bf D}$ is $|supp({\bf D})|$.
The $r$-generalized Hamming weight ($r$-GHW) of ${\bf C}$ is defined as
$$d_{r}({\bf C})=\min\{|supp({\bf D})|:\text{${\bf D}$ is a subcode of ${\bf C}$ with dimension $r$}\}.$$
The sequence $[d_{1}({\bf C}),d_{2}({\bf C}),\cdots,d_{k}({\bf C})$ is called the weight hierarchy of ${\bf C}$.
When $r=1$, the parameter $d_{1}({\bf C})$ is the minimum distance $d({\bf C})$.
The number of subcodes with dimension $r$ in a linear code ${\bf C} \in \F_q^n$ with dimension $k$ is given by $Gaussian(k,r)_q$, where $1\leq r \leq k$ and $Gaussian(k,r)_q$ represents the $q$-ary Gaussian binomial coefficient,  defined as
\begin{equation}\label{e.2}
    Gaussian(k,r)_q=\frac{(q^k-1)(q^k-q)\cdots(q^k-q^{r-1})}{(q^r-1)(q^r-q)\cdots(q^r-q^{r-1})}.
\end{equation}

The sequence $[A_1^{r}({\bf C}), A_2^{r}({\bf C}),\cdots, A_n^{r}({\bf C})]$ is called the $r$-subcode support weight distribution ($r$-SSWD) of ${\bf C}$,
where
$$A_i^{r}({\bf C})=|\{ {\bf D}:{\bf D} \text{ is a subspace of ${\bf C}$ with dimension $r$ and $|supp ({\bf D})|=i$} \}|,$$
for $0\leq i\leq n$ and $1\leq r\leq k.$
When $r=1$, it is clear that $A_i({\bf C}) = (q-1)A_i^1({\bf C})$. Then subcode support weight distributions are natural generalizations of weight distributions of linear codes.

Let ${\bf C} \subset {\bf F}_q^n$ be a linear $[n,k]_q$ code. ${\bf C}$ be a linear $[n,k]_q$ code with the dual distance at least $3$. Let $G$ be one generator matrix of ${\bf C}$. Then any two columns in $G$ are linearly independent vectors in ${\bf F}_q^k$. Suppose that $H$ takes all codimension $r$ linear subspaces of ${\bf F}_q^k$. It is well-known that $$d_r({\bf C})=\min_{H} \{n-|H \bigcap G|\},$$ where $H \bigcap G$ is the set of columns of $G$ lied in the codimension $r$ linear subspace $H$. Then $$A_i^{r}({\bf C})=|\{H: |H \bigcap G|=n-i\}|,$$ that is, $A_i^{r}({\bf C})$ is number of codimension $r$ linear subspace of ${\bf F}_q^k$ satisfying $|H \bigcap G|=n-i$.

For a $k$-dimensional vector space $U$ over $\mathbf{F}_q$, assume
\begin{equation}
    SUB(U)=\{V|V\ \text{is a nonzero subspace of } U \}
\end{equation}
and
\begin{equation}
    SUB^r(U)=\{V|V\ \text{is a  subspace of }\ U \ \text{with dimension}\ r\}.
\end{equation}
Let $G=[\alpha_1,\cdots,\alpha_n]$ be a $k\times n$ matrix, where $\alpha_i^T\in \mathbf{F}_q^k$ for $1\leq i\leq n$. For any $V\in SUB(\mathbf{F_q^k})$, the function $m_G:SUB(\mathbf{F_q^k})\rightarrow \mathbf{N}$ is defined as following:
\begin{equation}
    m_G(V)=|\{1\leq i\leq n |\alpha_i^T \in V\}|.
\end{equation}
Let $\mathbf{C}$ be an $[n,k]_q$-linear code with a generator matrix $G$. We have for any subspace $U\in SUB^r(\mathbf{C})$, there exists a subspace $V\in SUB^r(\mathbf{F}_q^k)$ such that $U=\{\mathbf{x}G|\mathbf{x}\in V\}$.
The following lemma is easy to verify.
\begin{lemma}
    For an $[n,k]_q$-linear code $\mathbf{C}$ and a subspace $U\in SUB^r(\mathbf{C})$, the subcode support weight of $U$ is
    \begin{equation}
        |supp(U)|=n-m_G(V^\perp).
    \end{equation}
    And the $r$-GHW of $\mathbf{C}$ is
    \begin{equation}
        d_r(\mathbf{C})=n-max\{m_G(V)|V\in SUB^{k-r}(\mathbf{F}_q^k)\}
    \end{equation}
    where $1\leq r\leq k.$
\end{lemma}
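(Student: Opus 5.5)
The plan is to compute the support of $U$ coordinate by coordinate, and then to get the formula for $d_r(\mathbf{C})$ by minimizing over all $r$-dimensional subcodes via duality of subspaces.

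\textbf{Step 1: the $i$-th coordinate.} Write $U=\{\mathbf{x}G \mid \mathbf{x}\in V\}$ with $V\in SUB^r(\mathbf{F}_q^k)$. Such a $V$ exists because $\mathbf{x}\mapsto \mathbf{x}G$ is an isomorphism $\mathbf{F}_q^k\to\mathbf{C}$, since $G$ has rank $k$. For $\mathbf{x}\in\mathbf{F}_q^k$, the $i$-th coordinate of $\mathbf{x}G$ is $\mathbf{x}\cdot\alpha_i^T$ (with $\alpha_i^T\in\mathbf{F}_q^k$ read as a row vector). Hence position $i$ lies outside $supp(U)$ if and only if $\mathbf{x}\cdot\alpha_i^T=0$ for all $\mathbf{x}\in V$. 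Here $V^\perp$ denotes the orthogonal complement of $V$ under the standard bilinear form. The condition says exactly that $\alpha_i^T\in V^\perp$.

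\textbf{Step 2: the support weight formula.} Counting the positions outside $supp(U)$ gives $n-|supp(U)|=|\{i \mid \alpha_i^T\in V^\perp\}|=m_G(V^\perp)$. This is the first formula. One caveat: when $V^\perp=\{0\}$ (the case $r=k$), the function $m_G$ is only defined on nonzero subspaces. Its natural value there is the number of zero columns, and I would note this convention. Nondegeneracy of the standard form gives $\dim V^\perp=k-r$.

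\textbf{Step 3: the generalized Hamming weight.} The map $V\mapsto V^\perp$ is a bijection from $SUB^r(\mathbf{F}_q^k)$ onto the $(k-r)$-dimensional subspaces. Also $V\mapsto\{\mathbf{x}G \mid \mathbf{x}\in V\}$ is a bijection onto $SUB^r(\mathbf{C})$. Therefore
\[
d_r(\mathbf{C})=\min_{U}|supp(U)|=n-\max\{m_G(W)\mid W\in SUB^{k-r}(\mathbf{F}_q^k)\},
\]
again with the convention above for $r=k$.

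\textbf{Expected difficulty.} Nothing here is a genuine obstacle. The only points needing care are the orientation of the column vectors $\alpha_i$ against the row vectors $\mathbf{x}$, and the degenerate case $r=k$.
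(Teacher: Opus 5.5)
Your proof is correct and follows the standard verification the paper has in mind; the paper gives no proof beyond calling the lemma easy to verify, and it uses the same correspondence $d_r(\mathbf{C})=\min_H\{n-|H\cap G|\}$ over codimension-$r$ subspaces $H$ stated just before the lemma. Your argument is that coordinate $i$ vanishes on $U$ exactly when $\alpha_i^T\in V^\perp$, and that $V\mapsto V^\perp$ is a bijection $SUB^r(\mathbf{F}_q^k)\to SUB^{k-r}(\mathbf{F}_q^k)$; your remark that $m_G(\{0\})$ must be read as the number of zero columns when $r=k$ is a worthwhile clarification, since the paper defines $m_G$ only on nonzero subspaces.
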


\subsection{Locally recoverable codes and the Cadambe-Mazudar bound}

For a linear code C over ${\bf F}_q$ with length $n$,
dimension $k$ and the minimum distance $d$, we define the locality as follows. Given $a \in {\bf F}_q$, set $$C(i,a)=\{{\bf x} \in {\bf C}: x_i=a\},$$ where $i \in \{1,2,\ldots,n\}$.  The
set $C_A(i,a)$ is the restriction of $C(i,a)$ to the set $A \subset \{1,2,\ldots,n\}$ of coordinate
positions. The linear code ${\bf C}$ is a locally
recoverable code (LRC) with the locality $r$, if for each $i \in \{1,2,\ldots,n\}$,
there exists a subset $A_i \subset \{1,2,\ldots,n\}\setminus \{i\}$ of cardinality at
most $r$, such that $$C_{A_i}(i,a)\bigcap C_{A_i}(i,a')=\emptyset$$ for any given $a \neq a'$, see \cite{Gopalan,PD}. It was proved in \cite{Gopalan,PD} the following Singleton-like bound for LRC codes, $$d \leq n-k+2-\left\lceil \frac{k}{r}\right\rceil.$$

Another important bound on LRC codes is the following Cadambe-Mazumdar (CM) bound proposed in \cite{CM}.\\

\begin{theorem}
    For any \(r\)-locally recoverable code \((n, k, d)\) over \(\mathbb{F}_q\),
\begin{equation}
    k \leq \min_{t \in \mathbb{Z}_+} \{ t r + k_{\text{opt}}^{(q)}(n - t(r + 1), d) \},
\end{equation}
where $k_{\text{opt}}^{(q)}(n,d)$ is the maximum dimension $k$ of linear $[n,k,d]_q$ codes.
\end{theorem}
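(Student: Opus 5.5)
The plan is the standard shortening argument of Cadambe and Mazumdar, specialised to the linear setting of the paper. Fix a positive integer $t$ with $t(r+1)<n$; if $t(r+1)\ge n$ the term $k_{\text{opt}}^{(q)}(n-t(r+1),d)$ is either degenerate or $0$, and the inequality $k\le tr+\cdots$ is then handled separately. I will build a coordinate set $S\subset\{1,\ldots,n\}$ with $|S|\le t(r+1)$ such that the coordinates in $S$ span a space of dimension at most $tr$ in the dual sense. Concretely, I want the projection $\pi_S:{\bf C}\to{\bf F}_q^{S}$ to have image of dimension at most $tr$.

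\textbf{Building $S$.} The set is grown greedily in $t$ rounds. At round $j$, pick a coordinate $i_j\notin S$ whose column in the generator matrix $G$ is not already in the span of the columns indexed by $S$. If no such coordinate exists, every remaining coordinate is determined by $S$, and we stop early. Adjoin $\{i_j\}\cup A_{i_j}$ to $S$. Locality says that $x_{i_j}$ is a function of $x_{A_{i_j}}$, and for linear codes this function is linear. Hence column $i_j$ lies in the span of the columns in $A_{i_j}$. So each round adds at most $r+1$ coordinates but raises the rank of the column set $G_S$ by at most $r$. After $t$ rounds, $|S|\le t(r+1)$ and $\operatorname{rank}(G_S)\le tr$. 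Pad $S$ with arbitrary extra coordinates until $|S|=t(r+1)$. This raises the rank by at most the number of coordinates added, and keeping the bound $\operatorname{rank}\le tr + (\text{padding})$ in check is where care is needed. I would handle it by padding only with coordinates whose columns already lie in the span, when possible. Otherwise I would note that the early-stop case makes all columns span $\le tr$, so $k\le tr$ directly.

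\textbf{Shortening.} Let ${\bf C}_0=\{{\bf x}\in{\bf C}: x_S=0\}$. Then
\[
\dim {\bf C}_0 \ge k-\operatorname{rank}(G_S)\ge k-tr .
\]
Puncture ${\bf C}_0$ on $S$. Since its codewords vanish on $S$, the punctured code has length $n-|S|$ and the same dimension. Its minimum distance is still at least $d$, because weights are unchanged and nonzero codewords remain nonzero. Therefore
\[
k-tr\le k_{\text{opt}}^{(q)}(n-t(r+1),d),
\]
using that $k_{\text{opt}}$ is monotone in $d$: a code of distance $\ge d$ can be modified to one of distance exactly $d$, or the definition is read as distance at least $d$. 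Minimizing over $t$ gives the bound.

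\textbf{Main obstacle.} The delicate step is the bookkeeping when the greedy process stops early or $|S|<t(r+1)$. I resolve it as above. If no new independent column exists, then $\operatorname{rank}(G)=k\le tr$ and the bound is trivial. Otherwise each round genuinely uses $r+1$ fresh coordinates, possibly after harmlessly enlarging $A_{i_j}$ with coordinates in the current span. For nonlinear $(n,k,d)$ codes, the same argument works with $\log_q$ of code sizes in place of ranks.
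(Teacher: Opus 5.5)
The paper does not prove this statement; it quotes the bound from Cadambe and Mazumdar, so there is no in-paper argument to compare against. Your shortening argument is the standard proof of the linear version, and most of its ingredients are sound. Locality in the paper's sense does force column $i$ of $G$ into the span of the columns indexed by $A_i$. The kernel count gives $\dim{\bf C}_0=k-\operatorname{rank}(G_S)$. Puncturing ${\bf C}_0$ on $S$ keeps both dimension and distance.

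\textbf{The gap is in the padding step.} The invariant you carry, $|S|\le t(r+1)$ and $\operatorname{rank}(G_S)\le tr$, discards exactly the information you need. Whenever some $A_{i_j}$ meets the current $S$, or has fewer than $r$ elements, you end with $|S|<t(r+1)$. Padding then only yields $\operatorname{rank}(G_S)\le tr+\bigl(t(r+1)-|S|\bigr)$, which is too weak.

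Your first repair is to pad with in-span coordinates when possible and otherwise invoke early stopping. This misses a case: all $t$ rounds may have been carried out while every column outside $S$ lies outside the span of $G_S$, and that is not the early-stop situation. Your second repair is to enlarge $A_{i_j}$ so that every round uses $r+1$ fresh coordinates. That is the right idea, but it cannot be done ``with coordinates in the current span'', since such coordinates need not exist. It does work with arbitrary fresh coordinates: column $i_j$ still lies in the span of the enlarged set, and each extra coordinate raises the rank by at most one.

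\textbf{A clean fix is to track the deficiency} $\delta(S)=|S|-\operatorname{rank}(G_S)$. In each round take any $i\notin S$ (no independence condition is needed) and set $S'=S\cup\{i\}\cup A_i$. Then
\[
\operatorname{rank}(G_{S'})=\operatorname{rank}(G_{S\cup A_i})\le\operatorname{rank}(G_S)+|A_i\setminus S|,\qquad |S'|=|S|+1+|A_i\setminus S|,
\]
so $\delta(S')\ge\delta(S)+1$ and $|S'|\le|S|+r+1$. Adding a single coordinate raises $|S|$ by one and the rank by at most one, so padding never decreases $\delta$. If $t(r+1)\le n$, then before each round $|S|\le(t-1)(r+1)<n$, so all $t$ rounds are possible. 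Padding to $|S|=t(r+1)$ then gives $\operatorname{rank}(G_S)\le t(r+1)-t=tr$ with no case analysis. The early-stop discussion becomes unnecessary, and the case $t(r+1)=n$ needs no separate treatment.

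Finally, your remark on nonlinear codes holds only if $k_{\text{opt}}^{(q)}$ is reinterpreted as $\log_q$ of the maximum size of an arbitrary code; the paper defines it through linear codes. The kernel count must also be replaced by a pigeonhole argument over the fibres of ${\bf x}\mapsto{\bf x}_S$, using $|{\bf C}_{S\cup B}|\le|{\bf C}_S|\,q^{|B\setminus S|}$, where ${\bf C}_S$ denotes the projection of ${\bf C}$ onto $S$.
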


We call the difference of $\min_{t \in \mathbb{Z}_+} \{ t r + k_{\text{opt}}^{(q)}(n - t(r + 1), d) \}-k$ the CM (Cadambe-Mzumdar) defect. It is clear that a LRC code with the zero CM defect is optimal. These LRC codes with CM defect equal to one or two are very close to optimal case. We call these LRC codes with one or two CM defect almost optimal LRC codes.
\subsection{Related works}

In 1965, Solomon and Stiffler proposed a method to construct linear codes meeting the Griesmer bound, and infinitely many infinite families of binary and $q$-ary Griesmer codes were constructed in \cite{Solomon}. It was then proved that for fixed $q$ and $k$, there are Griesmer codes for any given fixed large minimum distance $d$, see \cite{BM,Hill}. Then numerous Griesmer codes have been constructed from various methods, see \cite{Hill,Hell,Heng,LDT,Hyun,LL,Hu}. Optimal linear codes close to the Griesmer bound were also constructed in \cite{BJV,KM,Hyun,Hu}. For example, optimal codes in \cite{KM} have Griesmer defects one or two.

Linear codes with few nonzero weights have been studied by many authors, see \cite{DingTang,HengDing,LDT,Hu} and the references therein. For Solomon-Stiffler codes and Belov codes, we refer to \cite{Hell2,Solomon}. Some of them have only few weights and their weight distributions can be determined, see \cite{Hyun}.
From the following result due to \cite{Hamada} and \cite{Hell2,Hell3}, some Griesmer code with their minimum weights satisfying certain condition are actually equivalent to Solomon-Stiffler codes or Belov codes.

\begin{theorem}[see \cite{Hell2,Hell3,Hamada}]\label{T-1-3}%Theorem 3
1) Any Griesmer $[g(d,k,2),k,d]_2$ code for which $d \leq 2^{k-1}$ is either a Solomon-Stiffler code or a Belov code.

2) Let ${\bf C}$ be a linear Griesmer $[n,k,d]_q$ code, where $d=q^{k-1}-\Sigma_{i=1}^hq^{u_i-1}$ and $k>u_1>u_2>\cdots>u_h \geq 1$. Then ${\bf C}$ is $q$-ary Solomon-Stiffler code.\\
\end{theorem}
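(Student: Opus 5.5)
This is a cited classical result (Hamada; Helleseth), so the plan follows the standard geometric route through minihypers. Both parts run the same way.

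\emph{Step 1: translate the code into a multiset of points.} Let ${\bf C}$ be a Griesmer $[n,k,d]_q$ code. We may assume ${\bf C}$ has no zero coordinate, since a Griesmer code has none. Its generator matrix $G$ then defines a multiset $\mathcal{M}$ of $n$ points in $PG(k-1,q)$. By the Lemma of Section 1, the weight of the codeword attached to a hyperplane $H$ is $n-m_G(H)$. So $d$ is the minimum weight exactly when every hyperplane meets $\mathcal{M}$ in at most $n-d$ points.

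Write $d=sq^{k-1}-\sum_{i=1}^h q^{u_i-1}$ with $s=\lceil d/q^{k-1}\rceil$; in part 2) we have $s=1$. We use $\theta_j=(q^{j+1}-1)/(q-1)$ for the number of points of $PG(j,q)$. Computing $g_q(k,d)$ term by term gives
\[
n=s\theta_{k-1}-\sum_i \theta_{u_i-1}.
\]
Next take the complement $\mathcal{F}=s\cdot PG(k-1,q)-\mathcal{M}$. This is a genuine multiset, because a Griesmer code with $d\le sq^{k-1}$ has no point of multiplicity exceeding $s$. This is the standard fact obtained by iterating the residual-code argument. $\mathcal{F}$ has $\sum_i\theta_{u_i-1}$ points, and every hyperplane contains at least $\sum_i\theta_{u_i-2}$ of them. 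So $\mathcal{F}$ is a $\{\sum_i \theta_{u_i-1},\sum_i\theta_{u_i-2};k-1,q\}$-minihyper.

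\emph{Step 2: characterize the minihyper.} This is Hamada's theorem and the main obstacle. One must show that, under the hypothesis $k>u_1>\cdots>u_h$ (distinct $u_i$), $\mathcal{F}$ is the disjoint union of subspaces of dimensions $u_1-1,\dots,u_h-1$. I would argue by induction on $k$. Take a hyperplane $H$ meeting $\mathcal{F}$ in exactly the minimal number of points. By counting the hyperplanes through a codimension-2 space (a residual-code count), $\mathcal{F}\cap H$ is again a minihyper with the analogous parameters in $PG(k-2,q)$, so induction applies inside $H$. One then lifts the resulting subspace structure back to $PG(k-1,q)$. The largest piece is done first: show that a $(u_1-1)$-flat is contained in $\mathcal{F}$ by counting points on lines and planes. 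Then remove it and recurse on $h$.

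The delicate points are that the $u_i$ are distinct and that $u_1<k$, which in the $q$-ary case is exactly the condition $s=1$. These are used to rule out overlapping or "blocking-set-like" configurations. Those configurations are precisely what creates the Belov exceptions in the binary case with $s\ge 2$.

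\emph{Step 3: conclude.} Once $\mathcal{F}$ is a disjoint union of flats, $\mathcal{M}$ is $PG(k-1,q)$ minus disjoint flats. This is by definition the Solomon–Stiffler construction, which proves 2).

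For 1), with $q=2$ and $d\le 2^{k-1}$, we have $s=1$, and the same argument applies when the $u_i$ are distinct. When the binary expansion forces repeated terms, one allows $\sum\theta_{u_i-1}$ to correspond to a Belov-type configuration, namely flats that may intersect in a controlled way. Hamada and Helleseth's binary classification of the minihypers then yields either Solomon–Stiffler or Belov codes.
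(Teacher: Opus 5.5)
The paper gives no proof of this theorem; it only cites Hamada and Helleseth. Your outline therefore has to stand on its own. Step 1 is the standard translation and is fine. Step 2, which carries all the content, asserts something false when $q=2$.

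Take the even-weight $[5,4,2]_2$ code, whose columns $e_1,e_2,e_3,e_4,e_1+e_2+e_3+e_4$ form a frame of $PG(3,2)$. It is Griesmer, since $g_2(4,2)=2+1+1+1=5$. Its minimum distance is $d=2=2^3-2^2-2^1$, so $k=4>u_1=3>u_2=2\geq 1$ and $s=1$. No four frame points are coplanar, so the complement $\mathcal F$ has $10=\theta_2+\theta_1$ points and meets every plane in at least $4=\theta_1+\theta_0$ points. It is exactly the minihyper of your Step 1. Yet a plane and a line of $PG(3,2)$ always meet ($3+2>4$ in $\mathbf{F}_2^4$). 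Hence $\mathcal F$ is not a disjoint union of a plane and a line, and the code is not Solomon--Stiffler. The same holds for every $[k+1,k,2]_2$ code with $k\geq 4$.

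The example also shows where your induction breaks. A minimal plane $H$, one containing three frame points, meets $\mathcal F$ in a line plus a point, just as the inductive hypothesis predicts. But $\mathcal F$ contains no plane at all, because every plane of $PG(3,2)$ contains a frame point. So the lifting step, where you ``show that a $(u_1-1)$-flat is contained in $\mathcal F$ by counting points on lines and planes'', cannot succeed from distinctness of the $u_i$ and $u_1<k$ alone. Some further input is indispensable. One option is the condition $u_1+u_2\leq k$; without it, disjoint flats of dimensions $u_1-1$ and $u_2-1$ cannot exist in $PG(k-1,q)$. Another, for $q\geq 3$, is a separate argument excluding $u_1+u_2>k$. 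Your sketch never says where such a condition enters. In particular, 2) as quoted cannot be proved for $q=2$ without an extra hypothesis of this kind.

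Two further steps are unsupported.
\begin{enumerate}
\item \emph{Recursion on $h$.} Suppose $\Pi\subseteq\mathcal F$ is the $(u_1-1)$-flat and $H\supseteq\Pi$ is a hyperplane. The minihyper inequality then only gives $|H\cap(\mathcal F\setminus\Pi)|\geq\sum_{i\geq 2}\theta_{u_i-2}-q^{u_1-1}$. So $\mathcal F\setminus\Pi$ is not automatically a minihyper with the smaller parameters, and hyperplanes through $\Pi$ need their own argument before you can ``remove it and recurse on $h$''.
\item \emph{Part 1).} For $q=2$ the exponents in $2^{k-1}-d$ are automatically distinct, so ``repeated terms'' never occur. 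Also $d\leq 2^{k-1}$ forces $s=1$, so your remark about Belov exceptions for $s\geq 2$ is beside the point. Taken at face value, your Step 2 would make every binary Griesmer code with $d\leq 2^{k-1}$ Solomon--Stiffler and the Belov alternative superfluous; the $[5,4,2]_2$ code refutes that. Beyond the closing citation, part 1) receives no argument.
\end{enumerate}
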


We refer to Theorem 5 of \cite{Hell3} for the extension of conclusion 2) in $q$-ary linear Griesmer code case. As indicated in \cite{Xie}, many Griesmer codes constructed in previous papers are actually Solomon-Stiffler codes. {\bf For example, let us indicate that Griesmer codes 10 and 11 in Table XII of a recent early access paper in IEEE Trans. Inf. Theory \cite{Wu2026} are special case of affine Solomon-Stiffler codes constructed in \cite{Chen}.} Therefore, it is more interesting and challenging to construct optimal codes with positive Griesmer defects.

 There have been many papers to construct LRC codes, which are optimal with the Singleton-like bound or the Cadambe-Mazumdar bound, we refer to \cite{BTV,CM,LL,HMM,MP} and references therein.

\subsection{Our contributions}

In this paper, we construct several families of optimal codes with positive Griesmer defects. Then the results in \cite{Hamada} and \cite{Hell2} and Theorem 1.2 cannot be applied to these optimal codes and they are not equivalent to known Griesmer codes. Weight distributions and subcode support weight distributions (SSWD) of these optimal codes are determined explicitly. Subcode support weight distributions of some optimal codes are also determined. More interestingly, these optimal codes are LRC codes with small locality $2$. Some of these LRC codes meet the CM bound (see Theorem 4.3). and they are optimal LRC codes simultaneously. Some of them are very close to the CM-bound (see Theorem 4.2 and 4.3). The method in this paper is inspired from the classical work of Solomon and Stiffler \cite{Solomon}.

The paper is organized as follows. In Section 2 and Section 3, we construct several families of optimal codes with positive Griesmer defects. Their weight distributions and subcode support weight distributions are determined. In Section 4, we prove that some of these optimal codes meet the CM bound simultaneously and some of them have CM defects equal to one or two. Section 5 concludes the paper.

\section{First family of optimal codes, their weight distributions and subcode support weight distributions}

Let $h$ be a positive integer satisfying $0< h\leq q^m-1$. The $q$-adic expansion of $h$ is defined by
\begin{equation}
    h=h_{m-1}q^{m-1}+\dots +h_0,
\end{equation}
where $0\leq h_i<q$ for any $0\leq i\leq m-1$.
Let $i_h$ be the minimum integer such that $h_{i_h}>0$ where $i_h\geq 0$.

\subsection{Construction}

Let $k$ be a positive integer satisfying $k\ge3$. Let $u_1,\dots,u_h$ be integers satisfying $u_1=\cdots=u_h=u\ge2$. Let $U_1,\dots,U_h$ be $h$ linear subspaces of $\mathbf{F}_q^k$ of dimension $u_1,\dots,u_h$ satisfying
\begin{equation}
    U_i \cap U_j = \{0\}, \   \text{if} \ i \neq j ;
\end{equation}
\begin{equation}\label{eq:rank1}
     q^k - q^{k-1} > h(q^u-1).
\end{equation}

Let $\mathbf{g}_1, \dots,\mathbf{g}_{q^k-1}$ be all nonzero vectors of  $\mathbf{F}_q^k$. Let $U$ be the set of all vectors among these $q^k-1$ nonzero vectors in  $\mathbf{F}_q^k$ by deleting these nonzero vectors in $U_1 \cup \cdots \cup U_h$. Then
\begin{equation}
    |U|=(q^k-1)-h(q^u-1).
\end{equation}
The column vectors in $U$ form a $k\times|U|$ matrix G.

If vector $\mathbf{v}$ is a column vector of $U$, $\lambda\mathbf{v}$ is also a column vector of $U$, where $\lambda \in \mathbf{F}_q^* $. If we pick up only one vector from $\{\lambda\mathbf{v}|\lambda \in \mathbf{F}_q^* \}$ for every vector $\mathbf{v}\in U$, these vectors form a $k\times \frac{|U|}{q-1}$ matrix $\widetilde{G}$.

\begin{lemma}\label{lemma2.1}
    The rank of $\widetilde{G}$ is k.
\end{lemma}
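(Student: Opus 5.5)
Note first that the columns of $\widetilde{G}$ span the same subspace of $\mathbf{F}_q^k$ as the columns of $G$, because each column of $G$ is a nonzero scalar multiple of some column of $\widetilde{G}$. So it suffices to show that the vectors in $U$ span $\mathbf{F}_q^k$. I will argue by contradiction: suppose the span $W$ of $U$ is a proper subspace. Then $W$ lies in some hyperplane $H$ of $\mathbf{F}_q^k$, so $\dim H = k-1$.

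The key step is counting. Every nonzero vector outside $H$ is not in $U$, so it must lie in $U_1\cup\cdots\cup U_h$. There are $q^k-q^{k-1}$ vectors of $\mathbf{F}_q^k$ outside $H$, all of them nonzero. Hence
\begin{equation*}
q^k-q^{k-1} \le \Big|\bigcup_{i=1}^h U_i\setminus\{0\}\Big| \le h(q^u-1),
\end{equation*}
where the last inequality uses $|U_i\setminus\{0\}| = q^u-1$. (Since the $U_i$ meet pairwise only in $0$, this is in fact an equality, although we only need the upper bound.) This contradicts hypothesis (\ref{eq:rank1}), namely $q^k-q^{k-1} > h(q^u-1)$. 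Therefore $W=\mathbf{F}_q^k$, and $\mathrm{rank}(\widetilde{G})=k$.

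There is no real obstacle in this argument. The only step needing care is the reduction from $\widetilde{G}$ to $G$: we must check that picking one representative from each class $\{\lambda\mathbf{v}\}$ does not shrink the span. It does not, since every member of the class $\{\lambda \mathbf{v} : \lambda\in\mathbf{F}_q^*\}$ is a scalar multiple of the chosen representative.
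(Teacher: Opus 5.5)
Your proof is correct and takes essentially the same approach as the paper. The paper notes that $|U|=(q^k-1)-h(q^u-1)>q^{k-1}-1$ (equivalent to (\ref{eq:rank1})), so $U$ cannot lie in a hyperplane, and then uses $\mathrm{rank}(\widetilde{G})=\mathrm{rank}(G)$; you run the same count on the complement $\mathbf{F}_q^k\setminus H$, which needs only the union bound and not the exact size of $U$ (and hence not the pairwise trivial intersection of the $U_i$).
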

    \begin{proof}
        We have $U$ is not contained in any $(k-1)$-dimensional subspace of $\mathbf{F}_q^k$, since
        $|U|=(q^k-1)-h(q^u-1)>q^{k-1}-1$ with (\ref{eq:rank1}). Then $r(\widetilde{G})=r(G)=k$.
    \end{proof}

From \textbf{Lemma} \ref{lemma2.1}, let $\mathbf{C}$ be the code with the generator matrix $\widetilde{G}$. Then we have the following result.

\begin{theorem}\label{theorem2.1}
        The code $\mathbf{C}$ is a linear $[\frac{(q^k-1)-h(q^u-1)]}{q-1},k,d\geq q^{k-1}-hq^{u-1}]_q$ code. This code has at most $h+1$ weights.
\end{theorem}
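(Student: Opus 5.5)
Length and dimension come first. The columns of $\widetilde{G}$ are one representative from each $\mathbf{F}_q^*$-orbit of $U$, so the length is $|U|/(q-1)=\frac{(q^k-1)-h(q^u-1)}{q-1}$. The dimension is $k$ by Lemma \ref{lemma2.1}.

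For the weights I use the standard geometric description. Take a nonzero $\mathbf{x}\in\mathbf{F}_q^k$. The codeword $\mathbf{x}\widetilde{G}$ has weight equal to the number of columns $\mathbf{v}$ of $\widetilde{G}$ with $\mathbf{x}\cdot\mathbf{v}\neq 0$. Multiplying by $q-1$, this is the number of vectors of $U$ lying outside the hyperplane $H=\mathbf{x}^\perp$. The full set $\mathbf{F}_q^k\setminus\{0\}$ has exactly $q^k-q^{k-1}$ vectors outside $H$. Since the $U_i$ pairwise meet only in $0$, the sets $U_i\setminus\{0\}$ are disjoint, so we may subtract their contributions separately:
\begin{equation*}
(q-1)\,wt(\mathbf{x}\widetilde{G})=(q^k-q^{k-1})-\sum_{i=1}^h |U_i\setminus H|.
\end{equation*}
For each $i$, $\dim(U_i\cap H)$ equals $u$ if $U_i\subseteq H$ and $u-1$ otherwise. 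Hence $|U_i\setminus H|$ is $0$ or $q^u-q^{u-1}$.

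Let $t(\mathbf{x})$ be the number of indices $i$ with $U_i\not\subseteq H$, so $0\le t(\mathbf{x})\le h$. Then
\begin{equation*}
wt(\mathbf{x}\widetilde{G})=q^{k-1}-t(\mathbf{x})\,q^{u-1}.
\end{equation*}
The weight depends only on $t(\mathbf{x})\in\{0,1,\dots,h\}$, so there are at most $h+1$ nonzero weights. The minimum is attained at the largest possible $t$, which gives $d\ge q^{k-1}-hq^{u-1}$. Condition (\ref{eq:rank1}) gives $q^{k-1}-hq^{u-1}>q^{k-1}-\frac{h(q^u-1)}{q-1}\cdot\frac{q-1}{q}\cdot\frac{q^u}{q^u-1}$; more directly, it already guarantees via Lemma \ref{lemma2.1} that every nonzero $\mathbf{x}$ yields a nonzero codeword. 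So the bound is a genuine positive lower bound.

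I do not expect a real obstacle. The one point needing care is that the subtraction is legitimate only because the $U_i\setminus\{0\}$ are disjoint. The bound is stated as an inequality because $t=h$ might not be attained, for example when the $U_i$ span all of $\mathbf{F}_q^k$, or when $q^{k-1}-hq^{u-1}\le0$ is impossible under (\ref{eq:rank1}). I will not try to decide whether equality holds here.
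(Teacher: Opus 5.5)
Your proof is correct and follows the paper's argument: count the points of $U$ off the hyperplane $H=\mathbf{x}^\perp$, use that the sets $U_i\setminus\{0\}$ are disjoint, and use the dichotomy $\dim(H\cap U_i)\in\{u,u-1\}$ to get $wt(\mathbf{x}\widetilde{G})=q^{k-1}-t(\mathbf{x})q^{u-1}$. The one blemish is your aside on positivity: the displayed strict inequality is really an equality, since its right side simplifies to $q^{k-1}-hq^{u-1}$. The aside is not needed for the theorem, and positivity follows directly from $h(q^u-q^{u-1})\le h(q^u-1)<q^{k-1}(q-1)$.
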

    \begin{proof}
    For any hyperplane $H\subset \mathbf{F}_q^k$, let $H$ be defined by
    \begin{equation*}
        a_1x_1+\dots +a_kx_k=0
    \end{equation*}
    where $\mathbf{a}=(a_1,\dots,a_k)$. The weight of nonzero codeword $\mathbf{a}\cdot \widetilde{G}$ corresponding to $H$ is
    \begin{equation}
        \frac{|U|-|H-H\cap (U_1\cup\dots\cup U_h)|}{q-1}\geq q^{k-1}-hq^{u-1}.
    \end{equation}
    Since the dimension of $H\cap U_i$ has two possible values $u$ and $u-1$, then the conclusion follows directly.
    \end{proof}

\begin{lemma}\label{lemma2.2}
    The Griesmer defect of $\mathbf{C}$ is lower than $\sum_{i=1}^{k-u}\lfloor\frac{h}{q^i}\rfloor$.
\end{lemma}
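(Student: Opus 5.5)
The plan is to compare the length $n$ of $\mathbf{C}$ with $g_q(k,d)$, using the lower bound on $d$ from Theorem \ref{theorem2.1} together with the monotonicity of $g_q(k,\cdot)$ in $d$. Every summand $\lceil d/q^i\rceil$ is nondecreasing in $d$, so $g_q(k,d)\ge g_q(k,d_0)$ with $d_0=q^{k-1}-hq^{u-1}$. Condition (\ref{eq:rank1}) gives $d_0>0$, since $q^k-q^{k-1}>h(q^u-1)$ implies $q^{k-1}(q-1)>hq^{u-1}(q-1)-h(q-1)$. Hence it suffices to show
\[
n-g_q(k,d_0)=\sum_{t=1}^{k-u}\left\lfloor \frac{h}{q^t}\right\rfloor .
\]
Then the Griesmer defect $GD=n-g_q(k,d)\le n-g_q(k,d_0)$ is at most this sum, which is how I read ``lower than'' in the statement.

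First, expand the length as a geometric sum:
\[
n=\frac{(q^k-1)-h(q^u-1)}{q-1}=\sum_{i=0}^{k-1}q^i-h\sum_{j=0}^{u-1}q^j .
\]

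Next, compute $g_q(k,d_0)=\sum_{i=0}^{k-1}\lceil q^{k-1-i}-hq^{u-1-i}\rceil$, splitting the index range at $i=u-1$.
\begin{itemize}
\item For $0\le i\le u-1$ the summand is an integer, namely $q^{k-1-i}-hq^{u-1-i}$. These terms contribute $\sum_{i=0}^{u-1}q^{k-1-i}-h\sum_{j=0}^{u-1}q^j$.
\item For $u\le i\le k-1$, write $hq^{u-1-i}=h/q^{t}$ with $t=i-u+1\in\{1,\dots,k-u\}$. Using $\lceil N-x\rceil=N-\lfloor x\rfloor$ for an integer $N$, the summand equals $q^{k-1-i}-\lfloor h/q^t\rfloor$.
\end{itemize}
Summing the two ranges gives
\[
g_q(k,d_0)=\sum_{i=0}^{k-1}q^{i}-h\sum_{j=0}^{u-1}q^j-\sum_{t=1}^{k-u}\left\lfloor\frac{h}{q^t}\right\rfloor .
\]
Subtracting this from $n$ leaves exactly $\sum_{t=1}^{k-u}\lfloor h/q^t\rfloor$, as required.

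The only delicate points are bookkeeping ones. The first is the change of index $t=i-u+1$, so that the range $i\in[u,k-1]$ becomes $t\in[1,k-u]$. The second is the ceiling-of-negative identity. The third is noting that monotonicity lets us pass from the bound $d\ge d_0$ to $GD\le n-g_q(k,d_0)$. Equality holds precisely when $d=d_0$; this is presumably refined later by computing the exact minimum weight. I do not expect a real obstacle beyond keeping these indices straight.
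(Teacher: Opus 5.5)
Your argument is correct and matches the paper's proof: you bound $GD\le n-g_q(k,d_0)$ using $d\ge d_0$ from Theorem~\ref{theorem2.1} together with monotonicity, then evaluate $n-g_q(k,d_0)$ exactly, carrying out the split at $i=u-1$ that the paper only states as a one-line equality. One harmless slip: your displayed inequality $q^{k-1}(q-1)>hq^{u-1}(q-1)-h(q-1)$ only gives $d_0>-h$; for $d_0>0$ use $h(q^u-1)\ge hq^{u-1}(q-1)$ instead, although positivity of $d_0$ is not actually needed, since each $\lceil x/q^i\rceil$ is nondecreasing in $x$ for all real $x$.
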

    \begin{proof}
    \begin{equation*}
    \begin{aligned}
        GD&=n-g_q(k,d)\\
        &\leq \frac{(q^k-1)-h(q^u-1)}{q-1}-\sum_{i=0}^{k-1} \left\lceil \frac{q^{k-1}-hq^{u-1}}{q^i} \right\rceil\\
        &=\sum_{i=1}^{k-u}\lfloor\frac{h}{q_i}\rfloor.
    \end{aligned}
    \end{equation*}
    \end{proof}

Specially, the Griesmer defect of $\mathbf{C}$ is $\sum_{i=1}^{k-u}\lfloor\frac{h}{q^i}\rfloor$ when $d=q^{k-1}-hq^{u-1}$.
\begin{theorem}\label{theorem2.2}

    The linear code $\mathbf{C}$ is a linear $[\frac{(q^k-1)-h(q^u-1)}{q-1},k,d=q^{k-1}-hq^{u-1}]_q$ code if $h\leq q^u$.
\end{theorem}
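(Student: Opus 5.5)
The plan is to compute the weight of every nonzero codeword exactly and then exhibit one hyperplane that attains the lower bound of Theorem \ref{theorem2.1}. The length and the dimension are already settled by the construction and by Lemma \ref{lemma2.1}, and Theorem \ref{theorem2.1} gives $d\geq q^{k-1}-hq^{u-1}$. So only the reverse inequality is needed: some nonzero codeword must have weight exactly $q^{k-1}-hq^{u-1}$.

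First I make the weight formula from the proof of Theorem \ref{theorem2.1} explicit. Let $\mathbf{a}\neq 0$ and let $H$ be the corresponding hyperplane. The $U_i$ pairwise meet only in $0$, so the nonzero vectors of $U_1\cup\cdots\cup U_h$ split disjointly by $i$. Hence
$$|H\cap U|=(q^{k-1}-1)-\sum_{i=1}^h\bigl(q^{\dim(H\cap U_i)}-1\bigr).$$
The weight of $\mathbf{a}\widetilde{G}$ is $(|U|-|H\cap U|)/(q-1)$. Substituting $|U|=(q^k-1)-h(q^u-1)$ and simplifying gives
$$wt(\mathbf{a}\widetilde{G})=q^{k-1}-\sum_{i=1}^h\frac{q^u-q^{\dim(H\cap U_i)}}{q-1}.$$
Each $\dim(H\cap U_i)$ is either $u$ or $u-1$. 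If it is $u$, the $i$-th term is $0$; if it is $u-1$, the term is $q^{u-1}$. Therefore the weight equals $q^{k-1}-s\,q^{u-1}$, where $s$ is the number of indices $i$ with $U_i\not\subseteq H$. This recovers the bound of Theorem \ref{theorem2.1}, and equality holds exactly when $H$ contains none of the $U_i$.

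It remains to show that such a hyperplane exists when $h\leq q^u$; this counting step is the only real point. The number of hyperplanes of $\mathbf{F}_q^k$ containing a fixed $u$-dimensional subspace $U_i$ equals the number of hyperplanes of $\mathbf{F}_q^k/U_i$, namely $\frac{q^{k-u}-1}{q-1}$. By a union bound, at most $h\frac{q^{k-u}-1}{q-1}$ hyperplanes contain at least one $U_i$. Using $h\leq q^u$,
$$h(q^{k-u}-1)\leq q^k-q^u<q^k-1,$$
since $u\geq 2$ gives $q^u>1$. So strictly fewer than $\frac{q^k-1}{q-1}$ hyperplanes, the total number, contain some $U_i$. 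Hence some hyperplane $H$ contains no $U_i$, its codeword has weight exactly $q^{k-1}-hq^{u-1}$, and therefore $d=q^{k-1}-hq^{u-1}$.

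The only care needed is the algebra in the weight formula, specifically that the disjointness of the $U_i$ away from $0$ lets the intersection counts add. The existence step is a direct count and uses the hypothesis $h\leq q^u$ exactly at the inequality above.
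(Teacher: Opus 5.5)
Your proof is correct and follows essentially the same route as the paper: both reduce the claim to the existence of a hyperplane containing none of the $U_i$, and both get it from the union bound $h(q^{k-u}-1)\le q^k-q^u<q^k-1$. The paper argues by contradiction and counts vectors in $\bigcup_{i=1}^h U_i^\perp$ rather than hyperplanes; it also leaves the exact weight formula implicit, whereas you derive it explicitly.
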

    \begin{proof}
    Otherwise, we have $d>q^{k-1}-hq^{u-1}$. For any hyperplane $H\subset \mathbf{F}_q^k$, let $H$ be defined by
    \begin{equation*}
        a_1x_1+\dots +a_kx_k=0
    \end{equation*}
    where $\mathbf{a}=(a_1,\dots,a_k)$. From the proof  of \textbf{Theorem} \ref{theorem2.1}, it can be obtained that there exists index $i$ such that $U_i\subseteq H$ for any hyperplanes. Then $\mathbf{F}_q^k$ is contained by $\bigcup_{i=1}^h U_i^\perp$. On the other hand,
    \begin{equation*}
        \begin{aligned}
         |\bigcup_{i=1}^h U_i^\perp|&\leq\sum_{i=1}^{h} |U_i^\perp|-(h-1)\\
         &=hq^{k-u}-(h-1)\\
         &<q^k.
        \end{aligned}
    \end{equation*}
    This leads to the contradiction.
    \end{proof}

\begin{theorem}\label{theorem2.3}
     The linear code $\mathbf{C}$ with parameters $[\frac{(q^k-1)-h(q^u-1)]}{q-1},k,d=q^{k-1}-hq^{u-1}]_q$ is distance optimal if $i_h+u>\sum_{i=1}^{k-u}\lfloor\frac{h}{q^i}\rfloor$.
\end{theorem}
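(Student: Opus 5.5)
The plan is to show that the Griesmer bound already rules out any linear $[n,k,d+1]_q$ code with $n=\frac{(q^k-1)-h(q^u-1)}{q-1}$ and $d=q^{k-1}-hq^{u-1}$. In the language of the introduction, it suffices to prove $\delta_q(k,d,1)=g_q(k,d+1)-g_q(k,d)>GD$. Here $GD=\sum_{i=1}^{k-u}\lfloor\frac{h}{q^i}\rfloor$ is the Griesmer defect computed in Lemma \ref{lemma2.2} and the remark after it, for the case where $d$ equals $q^{k-1}-hq^{u-1}$ exactly, as assumed in the statement.

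\emph{Step 1.} Use the elementary identity
\begin{equation*}
\left\lceil \frac{d+1}{q^i}\right\rceil-\left\lceil \frac{d}{q^i}\right\rceil=\begin{cases}1, & q^i\mid d,\\ 0, & \text{otherwise}.\end{cases}
\end{equation*}
Summing over $0\le i\le k-1$ gives
\begin{equation*}
\delta_q(k,d,1)=\big|\{0\le i\le k-1: q^i\mid d\}\big|=\min\{v_q(d),k-1\}+1,
\end{equation*}
where $v_q$ denotes the $q$-adic valuation.

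\emph{Step 2.} Compute $v_q(d)$. Write $h=q^{i_h}h'$ with $q\nmid h'$, so that
\begin{equation*}
d=q^{u-1+i_h}\left(q^{k-u-i_h}-h'\right).
\end{equation*}
We must check that $u-1+i_h<k-1$. Condition (\ref{eq:rank1}) gives $h(q^u-1)<q^{k-1}(q-1)$, and since $u\ge 2$ this implies $hq^{u-1}<q^{k-1}$. Because $q^{i_h}\le h$, we get $q^{u-1+i_h}\le hq^{u-1}<q^{k-1}$, hence $u+i_h\le k-1$. The exponent $k-u-i_h$ is therefore positive, so $q\mid q^{k-u-i_h}$ while $q\nmid h'$. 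It follows that $q\nmid (q^{k-u-i_h}-h')$ and $v_q(d)=u-1+i_h$ exactly. Consequently $\delta_q(k,d,1)=u+i_h$.

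\emph{Step 3.} Conclude. Suppose a linear $[n,k,d+1]_q$ code existed. The Griesmer bound would give
\begin{equation*}
n\ge g_q(k,d+1)=g_q(k,d)+u+i_h>g_q(k,d)+\sum_{i=1}^{k-u}\left\lfloor\frac{h}{q^i}\right\rfloor=g_q(k,d)+GD=n,
\end{equation*}
where the strict inequality is exactly the hypothesis $i_h+u>\sum_{i=1}^{k-u}\lfloor\frac{h}{q^i}\rfloor$. This is a contradiction, so $\mathbf{C}$ is distance optimal.

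The only delicate point is Step 2. There one must make sure the valuation of $d$ is not accidentally larger than $u-1+i_h$, which would require $h'\equiv 0 \pmod q$ and cannot happen, and that it does not exceed the summation range $k-1$. The bound $hq^{u-1}<q^{k-1}$ derived from (\ref{eq:rank1}) handles the latter. Everything else is a routine ceiling computation.
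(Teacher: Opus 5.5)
Your proof is correct and follows the paper's route. Like the paper, you compare $\delta_q(k,d,1)=g_q(k,d+1)-g_q(k,d)$ with the Griesmer defect $\sum_{i=1}^{k-u}\lfloor h/q^i\rfloor$, then use the Griesmer bound to rule out an $[n,k,d+1]_q$ code. The paper simply asserts $\delta_q(k,d,1)=i_h+u$, whereas your $q$-adic valuation argument, including the check $u+i_h\le k-1$ from $q^k-q^{k-1}>h(q^u-1)$, actually justifies it.
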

    \begin{proof}
    Let $d_0=q^{k-1}-hq^{u-1}$.
    We have $\delta_q(k,d_0, d')=g_q(k,d_0+d')-g_q(k,d_0)=d'-1+i_h +u$ where $1\leq d'<q$. Then let $d'=1$ and we have $\delta_q(k,d_0, 1)=i_h +u$.
    So
    \begin{equation*}
    \begin{aligned}
     \sum_{i=0}^{k-1} \left\lceil \frac{d+1}{q^i}\right\rceil &\geq\sum_{i=0}^{k-1} \left\lceil \frac{d_0+1}{q^i}\right\rceil\\
     &=\sum_{i=0}^{k-1} \left\lceil \frac{d_0}{q^i}\right\rceil+i_h+u\\
     &=n-\sum_{i=1}^{k-u}\lfloor\frac{h}{q^i}\rfloor+i_h+u\\
     &>n.
     \end{aligned}
\end{equation*}
    We have the optimal distance of $[n, k]_q$ is at most $d=d_0$.
    \end{proof}

\begin{corollary}\label{corollary2.1}
    Let $q\geq u\geq2$. The linear code $\mathbf{C}$ with parameters $[\frac{(q^k-1)-h(q^u-1)]}{q-1},k,d=q^{k-1}-hq^{u-1}]_q$ is distance optimal when $h\leq uq$. In particular if $u=2$, the linear code $\mathbf{C}$ is distance optimal when $h\leq 2q$.
\end{corollary}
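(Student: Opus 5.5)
Since $h\le uq\le q^2\le q^u$, Theorem \ref{theorem2.2} already tells us that $\mathbf{C}$ has the stated parameters with $d=q^{k-1}-hq^{u-1}$. What remains is optimality, and for that I will apply Theorem \ref{theorem2.3}. So the whole proof reduces to the numerical inequality
\begin{equation*}
i_h+u>\sum_{i=1}^{k-u}\left\lfloor\frac{h}{q^i}\right\rfloor
\end{equation*}
for every $h\le uq$ that is admissible, i.e.\ every $h$ for which the construction exists and (\ref{eq:rank1}) holds.

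Since $i_h\ge 0$, it suffices to show that the right-hand side is at most $u-1$. I will bound the terms one at a time.
\begin{itemize}
\item The first term satisfies $\lfloor h/q\rfloor\le u$, since $h\le uq$.
\item For $i\ge 2$ we have $h/q^i\le u/q^{i-1}\le u/q\le 1$, using $u\le q$.
\end{itemize}
So I will split into cases according to how large these terms are.

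\emph{Case 1: $h<uq$.} Then $\lfloor h/q\rfloor\le u-1$. If also $h<q^2$, every term with $i\ge2$ vanishes, and the sum is at most $u-1<u\le i_h+u$, as required.

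\emph{Case 2: $h=uq$.} Then $h_0=0$, so $i_h\ge 1$. The first term equals $u$, which is strictly less than $u+1\le i_h+u$, provided the terms with $i\ge2$ vanish.

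\emph{Boundary case: $u=q$ and $h\ge q^2$.} This is the main obstacle, because here $h=q^2$ forces $\lfloor h/q^2\rfloor=1$. In this case I will compute directly. Take $h=q^2=uq$. Then $i_h=2$, and the sum is $\lfloor h/q\rfloor+\lfloor h/q^2\rfloor=q+1$ (when $k-u\ge2$). This is less than $i_h+u=q+2$, so the inequality still holds. The only $h$ with $h\ge q^2$ and $h\le uq=q^2$ is $h=q^2$ itself, so this single value is all that needs checking.

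With the inequality established in all cases, Theorem \ref{theorem2.3} gives optimality. The special case $u=2$ (where $q\ge2$ automatically) gives the bound $h\le 2q$.
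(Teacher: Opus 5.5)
Your proof is correct and follows the paper's argument: the same three-way split ($h<uq$; $h=uq$ with $u<q$; $h=uq=q^2$ with $u=q$), each reduced to the inequality of Theorem~\ref{theorem2.3}, with your extra appeal to Theorem~\ref{theorem2.2} simply confirming that $d$ is exactly $q^{k-1}-hq^{u-1}$. The proviso ``if also $h<q^2$'' in your Case~1 always holds, since $h<uq\le q^2$.
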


\begin{proof}
\begin{enumerate}
    \item[(a)] If $h<qu$, we have $i_h+u\geq u > u-1\geq\sum_{i=1}^{k-u}\lfloor\frac{h}{q^i}\rfloor$.
    \item[(b)] If $h=qu,u<q$, we have $i_h+u= u+1 > u=\sum_{i=1}^{k-u}\lfloor\frac{h}{q^i}\rfloor$.
    \item[(c)] If $h=qu,u=q$, we have $i_h+u= u+2 > u+1=\sum_{i=1}^{k-u}\lfloor\frac{h}{q^i}\rfloor$.
\end{enumerate}
    According to \textbf{Theorem} \ref{theorem2.3}, the conclusion is proved.
\end{proof}

In the following table, we list some codes constructed in Corollary 2.1.
\begin{center}
\begin{tabular}{|c|c|c|c|}
\hline
$q$&$[n,k,d]$&$(k,u,h)$&Optimality\\
\hline
$2$&$[22,5,10]$&$(5,2,3)$&optimal\\
\hline
$2$&$[19,5,8]$&$(5,2,4)$&optimal\\
\hline
$2$&$[54,6,26]$&$(6,2,3)$&optimal\\
\hline
$2$&$[51,6,24]$&$(6,2,4)$&optimal\\
\hline
$2$&$[118,7,58]$&$(7,2,3)$&optimal\\
\hline
$2$&$[115,7,56]$&$(7,2,4)$&optimal\\
\hline
$2$&$[246,8,122]$&$(8,2,3)$&optimal\\
\hline
$2$&$[243,8,120]$&$(8,2,4)$&optimal\\
\hline
$3$&$[24,4,15]$&$(4,2,4)$&optimal\\
\hline
$3$&$[20,4,12]$&$(4,2,5)$&optimal\\
\hline
$3$&$[16,4,9]$&$(4,2,6)$&optimal\\
\hline
$3$&$[105,5,69]$&$(5,2,4)$&optimal\\
\hline
$3$&$[101,5,66]$&$(5,2,5)$&optimal\\
\hline
$3$&$[97,5,63]$&$(5,2,6)$&optimal\\
\hline
\end{tabular}
\end{center}

\subsection{Weight distribution and subcode support weight distribution}

In this subsection, we determine weight distribution and subcode support weight distribution of these codes constructed in the previous subsection.

\begin{theorem}\label{theorem2.4}
 Assume the notations are as above. For integers $u,h$ satisfying $u\geq2, i_h+u>\sum_{i=1}^{k-u}\lfloor\frac{h}{q^i}\rfloor, k\geq hu$  and $q^k - q^{k-1} > h(q^u-1)$, we construct an optimal linear $[n,k,d]_q$ code $\mathbf{C}_1$ with
    \begin{equation*}
        n=\frac{(q^k-1)-h(q^u-1)}{q-1}
    \end{equation*}
    and
    \begin{equation*}
        d=q^{k-1}-hq^{u-1}.
    \end{equation*}

\begin{enumerate}
\item[(a)]
    The weight distribution of the linear code $\mathbf{C}_1$ is determined in the following table.
\begin{center}
\begin{tabular}{|c|c|}
\hline
weight & multiplicity \\
\hline
0 & 1 \\
\hline
$q^{k-1}-hq^{u-1}$ & $(q^u-1)^hq^{k-hu}$ \\
\hline
$q^{k-1}-(h-1)q^{u-1}$ & $h(q^u-1)^{h-1}q^{k-hu}$ \\
\hline
$\vdots$ & $\vdots$ \\
\hline
$q^{k-1}-q^{u-1}$ & $h(q^u-1)q^{k-hu}$ \\
\hline
$q^{k-1}$ & $q^{k-hu}-1$ \\
\hline
\end{tabular}
\end{center}

\item [(b)]
The $r$-GHW of $\mathbf{C}_1$ is as follows,
\begin{equation}
  d_r(\mathbf{C})= \begin{cases}
      \frac{(q^k-q^{k-r})-h(q^u-q^{u-r})}{q-1} ,1\leq r\leq u\\
      \frac{(q^k-q^{k-r})-h(q^u-1)}{q-1},r\geq u
  \end{cases}
\end{equation}

\end{enumerate}
\end{theorem}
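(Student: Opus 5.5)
Our approach counts, for each nonzero functional $\mathbf{a}$ (equivalently, each subspace $V$), how many of the deleted subspaces $U_i$ lie inside $\mathbf{a}^\perp$ (or $V^\perp$). The hypothesis $k\geq hu$ lets us assume $U_1,\dots,U_h$ are \emph{independent}: $U_1\oplus\cdots\oplus U_h$ is a direct sum of dimension $hu$. This is the natural reading needed for the stated multiplicities, and we would choose the $U_i$ this way, extending to a basis of $\mathbf{F}_q^k$ by a complement $W$ of dimension $k-hu$. Optimality is already given by Theorem \ref{theorem2.3} under the hypothesis $i_h+u>\sum_{i=1}^{k-u}\lfloor h/q^i\rfloor$. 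The value $d=q^{k-1}-hq^{u-1}$ follows from (a), since weight $q^{k-1}-hq^{u-1}$ has positive multiplicity.

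For (a), we use the weight formula from the proof of Theorem \ref{theorem2.1}. A nonzero $\mathbf{a}$ with hyperplane $H=\mathbf{a}^\perp$ gives a codeword of weight
\[
\frac{|U\setminus H|}{q-1}=q^{k-1}-\sum_{i:\,U_i\not\subseteq H} q^{u-1},
\]
since $|U_i\setminus H|=q^u-q^{u-1}$ when $U_i\not\subseteq H$ and $0$ otherwise. So if exactly $j$ of the $U_i$ fail to lie in $H$, the weight is $q^{k-1}-jq^{u-1}$. In the dual coordinates adapted to $\mathbf{F}_q^k=U_1\oplus\cdots\oplus U_h\oplus W$, write $\mathbf{a}=(\mathbf{a}_1,\dots,\mathbf{a}_h,\mathbf{b})$. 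Then $U_i\subseteq H$ iff $\mathbf{a}_i=0$. The number of $\mathbf{a}$ with a prescribed set of $j$ nonzero blocks is $(q^u-1)^jq^{k-hu}$, and summing over the $\binom{h}{j}$ choices of blocks gives multiplicity $\binom{h}{j}(q^u-1)^jq^{k-hu}$. For $j=0$ we subtract the zero vector, giving $q^{k-hu}-1$. This reproduces the table; the middle entries in the statement should read $\binom{h}{j}$, which matches the displayed $j=h,h-1,1$ rows.

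For (b), we use Lemma 1.1: $d_r=n-\max m_{\widetilde G}(V)$ over $V$ of dimension $k-r$. Counting projectively,
\[
m_{\widetilde G}(V)=\frac{(q^{k-r}-1)-\sum_i(|V\cap U_i|-1)}{q-1}.
\]
Maximizing $m$ therefore means minimizing $\sum_i \dim(V\cap U_i)$, weighted as $\sum_i(q^{\dim(V\cap U_i)}-1)$. For the lower bound we use $\dim(V\cap U_i)\geq u-r$, so each term is at least $q^{\max(u-r,0)}-1$. This gives
\[
m\leq \frac{q^{k-r}-1-h(q^{\max(u-r,0)}-1)}{q-1},
\]
and substituting into $d_r=n-m$ yields both displayed cases. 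Note that for $r\geq u$ the bound reduces to $(q^k-q^{k-r}-h(q^u-1))/(q-1)$.

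The main step is attainment: we must construct a $(k-r)$-dimensional $V$ meeting every $U_i$ in dimension exactly $\max(u-r,0)$ simultaneously. Using the direct sum, take a fixed $r$-dimensional quotient map. Concretely, pick $V$ as the kernel of a linear map $\phi:\mathbf{F}_q^k\to\mathbf{F}_q^r$ that restricts to a map of rank $\min(u,r)$ on each $U_i$. Since the $U_i$ are independent, $\phi$ can be defined blockwise, choosing each $\phi|_{U_i}$ to be a surjection onto $\mathbf{F}_q^r$ (if $r\leq u$) or an injection (if $r\geq u$), and setting $\phi|_W=0$. We need $\phi$ to be surjective so that $\dim V=k-r$; this holds because already $\phi|_{U_1}$ has rank $\min(u,r)$, and when $r>u$ we arrange the images of different blocks to span $\mathbf{F}_q^r$. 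This is possible since $hu\geq r$ whenever $r\leq k$ is meaningful here; if $hu<r$, we use $W$ to fill the remaining dimensions. This yields $\dim(V\cap U_i)=u-\min(u,r)$, so the bound is attained.
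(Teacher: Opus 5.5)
Your proposal is correct and follows essentially the paper's route: independent coordinate blocks $U_i$, optimality from Theorem \ref{theorem2.3}, the weight $q^{k-1}-jq^{u-1}$ with multiplicity $\binom{h}{j}(q^u-1)^jq^{k-hu}$ for (a), and for (b) the bound $\dim(V\cap U_i)\geq u-r$ plus a $V$ attaining it. Your $V=\ker\phi$ is a coordinate-free form of the paper's explicit spanning set, and it also covers the case $r\geq u$ that the paper leaves as ``similar''. One wording fix is needed: $hu\geq r$ can fail when $hu<r\leq k$, and then $\phi|_W$ cannot be zero, so state explicitly that surjectivity comes from $\dim(U_1\oplus\cdots\oplus U_h\oplus W)=k\geq r$ while each $\phi|_{U_i}$ stays injective.
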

\begin{proof}
 Let $\mathbf{e}_i$ be the vector with all $0$s coordinates except one $1$ coordinate at the $i$th position.
    Assume $U_i$ is the linear space generated by $\{\mathbf{e}_{(i-1)u+1},\dots,\mathbf{e}_{iu}\}$ when $i=1,\dots, h$. Let $U$ be the set of all vectors among these $q^k-1$ nonzero vectors in  $\mathbf{F}_q^k$ by deleting these nonzero vectors in $U_1 \cup \cdots \cup U_h$.
    If vector $\mathbf{v}$ is a column vector of $U$, $\lambda\mathbf{v}$ is also a column vector of $U$ where $\lambda \in \mathbf{F}_q^* $. If we pick up only one vector from $\{\lambda\mathbf{v}|\lambda \in \mathbf{F}_q^* \}$ for every vector $\mathbf{v}\in U$, these vectors form a $k\times \frac{|U|}{q-1}$ matrix $\widetilde{G}$.  According to \textbf{Theorem} \ref{theorem2.3}, we have $\mathbf{C}_1$ is optimal with
    \begin{equation*}
        n=\frac{(q^k-1)-h(q^u-1)}{q-1}
    \end{equation*}
    and
    \begin{equation*}
        d=q^{k-1}-hq^{u-1}.
    \end{equation*}
\begin{enumerate}
    \item[(a)]
    Let $H$ be a hyperplane of $\mathbf{F}_q^k$.
    Set $x=|\{i|i\in\{1,\dots ,q\}, dim(H\cap U_i)=u\}|$. We proceed to prove the conclusion in the following two cases listed in the following table.
    \begin{center}
\begin{tabular}{|c|c|c|c|}
\hline
&condition of $x$&weight & multiplicity \\
\hline
&&0 & 1 \\
\hline
case \romannumeral1&$x=h$&$q^{k-1}$ & $q^{k-hu}-1$ \\
\hline
case \romannumeral2&$x\neq h$&$q^{k-1}-(h-x)q^{u-1}$ & $\binom{h}{x}(q^u-1)^{h-x}q^{k-hu}$ \\
\hline

\end{tabular}
     \end{center}
\begin{enumerate}[(i)]
    \item It is clear that $dim(H\cap U_i)=u$ for each $i\in \{1,\dots,h\}$.\\
    Then we have
    \begin{equation*}
    \begin{aligned}
        wt(\mathbf{c})&=\frac{|U|-|H-H\cap (U_1\cup\dots\cup U_h)|}{q-1}\\
        &=q^{k-1}.
    \end{aligned}
    \end{equation*}
 And we know that
     \begin{equation}
        |\{\mathbf{a}|\mathbf{a}\neq\mathbf{0},U_i\subseteq H,i=1,\dots,h\}|=q^{k-hu}-1.
    \end{equation}
    The rest cases are similar.
\end{enumerate}
\item [(b)]
If $1\leq r\leq u$, we have
\begin{equation*}
    \begin{aligned}
        d_r(\mathbf{C}_1)&=n-max\{m_{\widetilde{G}}(V)|V\in SUB^{k-r}(\mathbf{F}_q^k)\}\\
        &=n-\frac{max\{m_{G}(V)|V\in SUB^{k-r}(\mathbf{F}_q^k)\}}{q-1}\\
        &=n-\frac{max\{(|V|-1)-\sum_{i=1}^{h}(|V\cap U_i|-1)|V\in SUB^{k-r}(\mathbf{F}_q^k)\}}{q-1}\\
        &\geq \frac{(q^k-q^{k-r})-h(q^u-q^{u-r})}{q-1}.
    \end{aligned}
\end{equation*}
Let $\boldsymbol{\alpha}_{(i-1)(u-r)+1}=\mathbf{e}_{(i-1)u+1},\dots,\boldsymbol{\alpha}_{i(u-r)}=\mathbf{e}_{iu-r}$ for $i=1,\dots,h$, and $\boldsymbol{\alpha}_{h(u-r)+1}=\mathbf{e}_{u-r+1}+\mathbf{e}_{2u-r+1},\dots,\boldsymbol{\alpha}_{h(u-r)+r}=\mathbf{e}_u+\mathbf{e}_{2u},\boldsymbol{\alpha}_{h(u-r)+r+1}=\mathbf{e}_{u-r+1}+\mathbf{e}_{3u-r+1},\dots,\boldsymbol{\alpha}_{h(u-r)+2r}=\mathbf{e}_u+\mathbf{e}_{3u},\dots,\boldsymbol{\alpha}_{h(u-r)+(h-2)r+1}=\mathbf{e}_{u-r+1}+\mathbf{e}_{hu-r+1},\dots,\boldsymbol{\alpha}_{h(u-r)+(h-1)r}=\mathbf{e}_u+\mathbf{e}_{hu},\boldsymbol{\alpha}_{hu-r+1}=\mathbf{e}_{hu+1},\dots, \boldsymbol{\alpha}_{k-r}=\mathbf{e}_{k}$. Let $V$ be the subspace generated by $\boldsymbol{\alpha}_1,\dots,\boldsymbol{\alpha}_{k-r}$. We have
\begin{equation}
    d_r(\mathbf{C}_1)=\frac{(q^k-q^{k-r})-h(q^u-q^{u-r})}{q-1}, 1\leq r\leq u .
\end{equation}
The rest cases are similar.

\end{enumerate}

\end{proof}

\begin{example}
    In the construction of Theorem 2.4, let $q=2,k=8,h=4$ and $u=2$. We construct an optimal linear $[243,8,120]_2$ code with the weight distribution as in the following table.
\begin{center}
\begin{tabular}{|c|c|}
\hline
weight & multiplicity \\
\hline
0 & 1 \\
\hline
120 & 81\\
\hline
122 & 108 \\
\hline
124 & 54 \\
\hline
126 & 12 \\
\hline

\end{tabular}
\end{center}

\end{example}

\begin{theorem}\label{theorem2.5}
    Assume the notation is as given above. For integers $u$ and $h$ satisfying $u=2,$ $h=2q$ and $k\geq 8$, we construct an optimal linear $[n,k,d]_q$ code $\mathbf{C}_2$ with
    \begin{equation*}
        n=\frac{(q^k-1)-2q(q^2-1)}{q-1}
    \end{equation*}
    and
    \begin{equation*}
        d=q^{k-1}-2q^2.
    \end{equation*}

\begin{enumerate}
\item[(a)]
    The weight distribution of the linear code $\mathbf{C}_2$ is determined in the following table.
\begin{center}
\begin{tabular}{|c|c|}
\hline
weight & multiplicity \\
\hline
0 & 1 \\
\hline
$q^{k-1}$ & $q^{k-8}-1$ \\
\hline
$q^{k-1}-q^2+q$ & $2q(q^2-1)q^{k-8}$ \\
\hline
$q^{k-1}-q^2$ & $2(q^{4}-q^{3}+q^{1}-1)q^{k-8}$ \\
\hline
$q^{k-1}-2q^2+2q$ & $(q^2-1)^2q^{k-6}$ \\
\hline
$q^{k-1}-2q^2+q$ & $2q(q^2-1)(q^{4}-q^{3}+q^{1}-1)q^{k-8}$ \\
\hline
$q^{k-1}-2q^2$ & $(q^{4}-q^{3}+q^{1}-1)^2q^{k-8}$ \\
\hline
\end{tabular}
\end{center}

\item [(b)]
The $r$-GHW of $\mathbf{C}_2$ is determined as follows,
\begin{equation}
  d_r(\mathbf{C})= \begin{cases}
      q^{k-1}-2q^2, \ r=1\\
      \frac{(q^k-q^{k-r})-2q(q^2-1)}{q-1}, \ r\geq2
  \end{cases}
\end{equation}

\end{enumerate}
\end{theorem}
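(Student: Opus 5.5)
The plan is to realize the $2q$ planes through an explicit spread configuration and then count, hyperplane by hyperplane, how many of the planes a hyperplane fails to contain. Write $\mathbf{F}_q^k=W_1\oplus W_2\oplus W_3$, where $W_1=\langle \mathbf{e}_1,\dots,\mathbf{e}_4\rangle$, $W_2=\langle \mathbf{e}_5,\dots,\mathbf{e}_8\rangle$ and $W_3=\langle \mathbf{e}_9,\dots,\mathbf{e}_k\rangle$ (this needs $k\ge 8$). Each $W_j$ carries a Desarguesian line spread $\{S^{(j)}_1,\dots,S^{(j)}_{q+1}\}$ consisting of $q+1$ planes that pairwise meet trivially. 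Take $U_1,\dots,U_q$ to be $S^{(1)}_1,\dots,S^{(1)}_q$ and $U_{q+1},\dots,U_{2q}$ to be $S^{(2)}_1,\dots,S^{(2)}_q$. These $2q$ planes pairwise intersect trivially. Condition (\ref{eq:rank1}) reads $q^k-q^{k-1}>2q(q^2-1)$, which is clear for $k\ge 8$. Optimality then follows from Corollary \ref{corollary2.1} with $u=2$, $h=2q$; Theorem \ref{theorem2.2} applies since $h=2q\le q^2$ and gives $d=q^{k-1}-2q^2$.

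\textbf{Weight distribution.} Let $H=\ker\mathbf{a}$. If $U_i\subseteq H$, the plane contributes nothing to the weight. If $U_i\not\subseteq H$, then $|U_i\setminus H|=q^2-q$. Hence
\begin{equation*}
wt(\mathbf{a}\widetilde{G})=q^{k-1}-q\cdot\#\{i: U_i\not\subseteq H\}.
\end{equation*}
Decompose $\mathbf{a}=(\mathbf{a}_1,\mathbf{a}_2,\mathbf{a}_3)$ according to $W_1\oplus W_2\oplus W_3$; whether $U_i\subseteq H$ depends only on $\mathbf{a}_1$ (for $i\le q$) or $\mathbf{a}_2$ (for $i>q$). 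In the $4$-dimensional space $W_1$, a nonzero functional has a $3$-dimensional kernel, and such a kernel contains exactly one line of the spread. So the number of the first-group planes not contained in $H$ is:
\begin{itemize}
\item $0$ when $\mathbf{a}_1=0$ (one choice);
\item $q-1$ when $\ker\mathbf{a}_1\supseteq U_i$ for some $i\le q$, which happens for $q(q^2-1)$ choices of $\mathbf{a}_1$;
\item $q$ otherwise, which happens for $q^4-1-q(q^2-1)=q^4-q^3+q-1$ choices of $\mathbf{a}_1$.
\end{itemize}
The same holds for $\mathbf{a}_2$, and $\mathbf{a}_3$ is free, contributing the factor $q^{k-8}$. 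Multiplying the possibilities for the pair $(\mathbf{a}_1,\mathbf{a}_2)$ and removing $\mathbf{a}=0$ reproduces every row of the table. For instance, the case $(q-1,q-1)$ gives weight $q^{k-1}-2q^2+2q$ with multiplicity $(q(q^2-1))^2q^{k-8}=(q^2-1)^2q^{k-6}$.

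\textbf{Generalized Hamming weights.} For $r=1$, $d_1=d$. For $r\ge 2$, use the Lemma on $m_G$:
\begin{equation*}
m_{\widetilde G}(V)=\frac{(|V|-1)-\sum_i(|V\cap U_i|-1)}{q-1}\le\frac{q^{k-r}-1}{q-1},
\end{equation*}
which gives $d_r\ge n-\frac{q^{k-r}-1}{q-1}=\frac{(q^k-q^{k-r})-2q(q^2-1)}{q-1}$. Equality holds exactly when some $V$ of dimension $k-r$ meets every $U_i$ trivially. Subspaces of a good $V$ remain good, so it suffices to construct one such $V$ for $r=2$; this is the main step.

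\textbf{Construction for $r=2$.} Choose $2$-dimensional complements $X_1$ of $S^{(1)}_{q+1}$ in $W_1$ and $Y_2$ of $S^{(2)}_{q+1}$ in $W_2$, with bases $x_1,x_2$ and $y_1,y_2$. Set
\begin{equation*}
V=S^{(1)}_{q+1}\oplus S^{(2)}_{q+1}\oplus\langle x_1+y_1,\,x_2+y_2\rangle\oplus W_3,
\end{equation*}
which has dimension $2+2+2+(k-8)=k-2$. To check trivial intersections, take $v\in V\cap W_1$. Its $W_2$-component is $b+c_1y_1+c_2y_2$ with $b\in S^{(2)}_{q+1}$, and this must vanish. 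Since $S^{(2)}_{q+1}\cap Y_2=0$, we get $b=0$ and $c_1=c_2=0$, and the $W_3$-part is $0$. Hence $V\cap W_1=S^{(1)}_{q+1}$, which meets $U_1,\dots,U_q$ trivially by the spread property. Symmetrically $V\cap W_2=S^{(2)}_{q+1}$. This gives equality in the bound for all $r\ge2$.

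The main obstacle is this existence step at $r=2$: a naive counting argument over codimension-$2$ subspaces is too weak here, because $2q(q+1)/q^2>1$. That is why I would use the explicit "diagonal" construction above rather than a counting argument.
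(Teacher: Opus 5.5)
Your proposal is correct and follows the paper's proof. It uses the same placement of the two groups of $q$ planes in $\langle\mathbf{e}_1,\dots,\mathbf{e}_4\rangle$ and $\langle\mathbf{e}_5,\dots,\mathbf{e}_8\rangle$, the same case split by how many planes of each group a hyperplane contains, and the same bound $m_G(V)\le |V|-1$. With $S^{(1)}_{q+1}=\langle\mathbf{e}_2,\mathbf{e}_4\rangle$, $X_1=\langle\mathbf{e}_1,\mathbf{e}_3\rangle$, $S^{(2)}_{q+1}=\langle\mathbf{e}_6,\mathbf{e}_8\rangle$ and $Y_2=\langle\mathbf{e}_5,\mathbf{e}_7\rangle$, your $V$ is exactly the paper's $\langle\mathbf{e}_2,\mathbf{e}_4,\mathbf{e}_6,\mathbf{e}_8,\mathbf{e}_1+\mathbf{e}_5,\mathbf{e}_3+\mathbf{e}_7,\mathbf{e}_9,\dots,\mathbf{e}_k\rangle$, and your product count over $(\mathbf{a}_1,\mathbf{a}_2,\mathbf{a}_3)$ and your check $V\cap W_1=S^{(1)}_{q+1}$ fill in what the paper leaves as ``similar''.

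One wording slip: a line spread of $\mathbf{F}_q^4$ has $q^2+1$ members, not $q+1$, so you should simply take $q+1$ pairwise skew planes (e.g.\ $q+1$ members of a spread). What your count $q(q^2-1)$ actually needs is only that $\ker\mathbf{a}_1$ contains at most one $U_i$, which holds because two distinct $U_i$ already span $W_1$.
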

\begin{proof}
    Let $f_1=0,f_2,\dots,f_{q-1},f_{q}=1$ be the elements of $\mathbf{F}_q$. Let $\mathbf{e}_i$ be the vector with all $0$s coordinates except one $1$ coordinate at the $i$th position.\\
    Assume $U_i$ is the linear space generated by $\{\mathbf{e}_1+f_i\mathbf{e}_2,\mathbf{e}_3+f_i\mathbf{e}_4\}$ when $i=1,\dots , q$ and $U_i$ is the linear space generated by $\{\mathbf{e}_5+f_{i-q}\mathbf{e}_6,\mathbf{e}_7+f_{i-q}\mathbf{e}_8\}$ when $i=q+1,\dots , 2q$. Let $U$ be the set of all vectors among these $q^k-1$ nonzero vectors in  $\mathbf{F}_q^k$ by deleting these nonzero vectors in $U_1 \cup \cdots \cup U_h$.
    If vector $\mathbf{v}$ is a column vector of $U$, $\lambda\mathbf{v}$ is also a column vector of $U$  where $\lambda \in \mathbf{F}_q^* $. If we pick up only one vector from $\{\lambda\mathbf{v}|\lambda \in \mathbf{F}_q^* \}$ for every vector $\mathbf{v}\in U$, these vectors form a $k\times \frac{|U|}{q-1}$ matrix $\widetilde{G}$.  According to \textbf{Corollary} \ref{corollary2.1}, $\mathbf{C}_2$ is optimal with
    \begin{equation*}
        n=\frac{(q^k-1)-2q(q^2-1)}{q-1}
    \end{equation*}
    and
    \begin{equation*}
        d=q^{k-1}-2q^2.
    \end{equation*}
\begin{enumerate}
    \item[(a)]
    Let $H$ be a hyperplane of $\mathbf{F}_q^k$. The linear space generated by $\{\mathbf{e}_1,\mathbf{e}_2,\mathbf{e}_3,\mathbf{e}_4\}$ is contained by $H$ if $U_i,U_j$ is contained by $H$ when $i,j\in\{1,\dots,q\}$ and $i\neq j$ since $U_i+U_j$ is the linear space generated by $\{\mathbf{e}_1,\mathbf{e}_2,\mathbf{e}_3,\mathbf{e}_4\}$. Then we have $U_i^\perp \cap U_j^\perp$ is the linear space generated by $\{\mathbf{e}_5,\mathbf{e}_6,\mathbf{e}_7,\mathbf{e}_8,\dots,\mathbf{e}_k\}$ when $i,j\in\{1,\dots,q\}$ and $i\neq j$. It is similar when $i,j\in\{q+1,\dots,2q\}$ and $i\neq j$.\\
    Set $$x_1=|\{i|i\in \{1,\dots ,q\}, dim(H\cap U_i)=2\}|$$ and $$x_2=|\{i|i\in \{q+1,\dots ,2q\}, dim(H\cap U_i)=2\}|.$$ The six cases are listed in the following table.
    \begin{center}
\begin{tabular}{|c|c|c|c|}
\hline
&condition of $x_1,x_2$&weight & multiplicity \\
\hline
&&0 & 1 \\
\hline
case \romannumeral1&$x_1=q,x_2=q$&$q^{k-1}$ & $q^{k-8}-1$ \\
\hline
case \romannumeral2&\makecell{$x_1=1,x_2=q$ \\or $x_1=q,x_2=1$}&$q^{k-1}-q^2+q$ & $2q(q^2-1)q^{k-8}$ \\
\hline
case \romannumeral3&\makecell{$x_1=0,x_2=q$ \\or $x_1=q,x_2=0$}&$q^{k-1}-q^2$ & $2(q^{4}-q^{3}+q^{1}-1)q^{k-8}$ \\
\hline
case \romannumeral4&$x_1=1,x_2=1$&$q^{k-1}-2q^2+2q$ & $(q^2-1)^2q^{k-6}$ \\
\hline
case \romannumeral5&\makecell{$x_1=0,x_2=1$\\or $x_1=1,x_2=0$}&$q^{k-1}-2q^2+q$ & $2q(q^2-1)(q^{4}-q^{3}+q^{1}-1)q^{k-8}$ \\
\hline
case \romannumeral6&$x_1=0,x_2=0$&$q^{k-1}-2q^2$ & $(q^{4}-q^{3}+q^{1}-1)^2q^{k-8}$ \\
\hline
\end{tabular}
     \end{center}
\begin{enumerate}[(i)]
    \item It is clear that $dim(H\cap U_i)=2$ for each $i\in \{1,\dots,2q\}$.\\
    Then we have
    \begin{equation*}
    \begin{aligned}
        wt(\mathbf{c})&=\frac{|U|-|H-H\cap (U_1\cup\dots\cup U_h)|}{q-1}\\
        &=q^{k-1}.
    \end{aligned}
    \end{equation*}
    And we know that
     \begin{equation}
        |\{\mathbf{a}|\mathbf{a}\neq\mathbf{0},U_i\subseteq H,i=1,\dots,2q\}|=q^{k-8}-1.
    \end{equation}
    The rest five cases are similar.
\end{enumerate}
\item [(b)]
If $r=1$, we have
\begin{equation*}
    d_1(\mathbf{C}_2)=d=q^{k-1}-2q^2.
\end{equation*}
If $r\geq2$, we have
\begin{equation*}
    \begin{aligned}
       d_r(\mathbf{C}_2)&=n-max\{m_{\widetilde{G}}(V)|V\in SUB^{k-r}(\mathbf{F}_q^k)\}\\
        &=n-\frac{max\{m_{G}(V)|V\in SUB^{k-r}(\mathbf{F}_q^k)\}}{q-1}\\
        &=n-\frac{max\{(|V|-1)-\sum_{i=1}^{2q}(|V\cap U_i|-1)|V\in SUB^{k-r}(\mathbf{F}_q^k)\}}{q-1}\\
        &\geq n-\frac{|V|-1}{q-1}\\
        &=\frac{(q^k-q^{k-r})-2q(q^2-1)}{q-1}.
    \end{aligned}
\end{equation*}
Let $\boldsymbol{\alpha}_1=\mathbf{e}_2,\boldsymbol{\alpha}_2=\mathbf{e}_4,\boldsymbol{\alpha}_3=\mathbf{e}_6,\boldsymbol{\alpha}_4=\mathbf{e}_8,\boldsymbol{\alpha}_5=\mathbf{e}_1+\mathbf{e}_5, \boldsymbol{\alpha}_6=\mathbf{e}_3+\mathbf{e}_7,\boldsymbol{\alpha}_7=\mathbf{e}_9,\dots , \boldsymbol{\alpha}_{k-2}=\mathbf{e}_k$ and $V$ be the subspace generated by $\boldsymbol{\alpha}_1,\dots,\boldsymbol{\alpha}_{k-r}$ for $r\geq 2$. Hence we have
\begin{equation}
    d_r(\mathbf{C}_2)=\frac{(q^k-q^{k-r})-2q(q^2-1)}{q-1}, \ r\geq 2.
\end{equation}

\end{enumerate}

\end{proof}

\begin{example}
    In the construction of Theorem 2.5, let $q=2,k=8,h=4$ and $u=2$. We construct an optimal linear $[243,8,120]_2$ code with the weight distribution as in the following table.
\begin{center}
\begin{tabular}{|c|c|}
\hline
weight & multiplicity \\
\hline
0 & 1 \\
\hline
120 & 81\\
\hline
122 & 108 \\
\hline
124 & 54 \\
\hline
126 & 12 \\
\hline

\end{tabular}
\end{center}

\end{example}

\section{Second family of optimal codes, their weight distributions and subcode support weight distributions}

\subsection{Construction}

Let $k$ be a positive integer satisfying $k\ge3$ and $u_1,\dots,u_h$ be integers satisfying $1\leq u_0<u_1\leq\cdots\leq u_h$ where $h\geq2$. Let $U_0,U_1,\dots,U_h$ be $h+1$ linear subspaces of $\mathbf{F}_q^k$ of dimension $u_0,u_1,\dots,u_h$ satisfying
\begin{equation}
    U_i \cap U_j = U_0,\  \text{if} \ i \neq j;
\end{equation}
\begin{equation}\label{eq:rank2}
     q^k - q^{k-1} > q^{u_0}-1+\sum_{i=1}^h (q^{u_i}-q^{u_0}).
\end{equation}

Let $\mathbf{g}_1, \dots, \mathbf{g}_{q^k-1}$ be all nonzero vectors of  $\mathbf{F}_q^k$. Let $U$ be the set of all vectors among these $q^k-1$ nonzero vectors in  $\mathbf{F}_q^k$ by deleting these nonzero vectors in $U_1 \cup \cdots \cup U_h$. Then we have
\begin{equation}
    |U|=(q^k-q^{u_0})-\sum_{i=1}^h (q^{u_i}-q^{u_0}).
\end{equation}
The column vectors in $U$ form a $k\times|U|$ matrix G.

If vector $\mathbf{v}$ is a column vector of $U$, $\lambda\mathbf{v}$ is also a column vector of $U$ where $\lambda \in \mathbf{F}_q^* $. If we pick up only one vector from $\{\lambda\mathbf{v}|\lambda \in \mathbf{F}_q^* \}$ for every vector $\mathbf{v}\in U$, these vectors form a $k\times \frac{|U|}{q-1}$ matrix $\widetilde{G}$.

\begin{lemma}\label{lemma3.1}
    The rank of $\widetilde{G}$ is k.
\end{lemma}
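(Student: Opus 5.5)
The plan is to mirror the argument of Lemma \ref{lemma2.1}. The key observation is that the rank of $\widetilde{G}$ equals the rank of $G$. Scaling columns by nonzero constants does not change the column span. Moreover, every column of $G$ is a nonzero multiple of some column of $\widetilde{G}$, and conversely every column of $\widetilde{G}$ is a column of $G$. So it suffices to show that the set $U$ spans $\mathbf{F}_q^k$, that is, $U$ is not contained in any $(k-1)$-dimensional subspace of $\mathbf{F}_q^k$.

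To show this, I will first justify the formula for $|U|$ by inclusion--exclusion on the nonzero vectors of $U_1\cup\cdots\cup U_h$. Since $U_i\cap U_j=U_0$ for $i\neq j$, the sets $U_i\setminus U_0$ are pairwise disjoint. Hence
$$|U_1\cup\cdots\cup U_h|=q^{u_0}+\sum_{i=1}^h (q^{u_i}-q^{u_0}),$$
and removing these vectors together with $\mathbf{0}$ from $\mathbf{F}_q^k$ gives
$$|U|=(q^k-q^{u_0})-\sum_{i=1}^h (q^{u_i}-q^{u_0}).$$

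Next I will rewrite this count using condition (\ref{eq:rank2}):
$$|U|=(q^k-1)-\Big(q^{u_0}-1+\sum_{i=1}^h (q^{u_i}-q^{u_0})\Big)>(q^k-1)-(q^k-q^{k-1})=q^{k-1}-1.$$
A hyperplane $H$ contains exactly $q^{k-1}-1$ nonzero vectors. Since $U$ consists of nonzero vectors and has more than $q^{k-1}-1$ elements, $U\not\subseteq H$ for every hyperplane $H$. Therefore $\mathrm{span}(U)=\mathbf{F}_q^k$, so $r(G)=k$, and consequently $r(\widetilde{G})=k$.

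No step here is a real obstacle. The only point needing care is the counting of $|U_1\cup\cdots\cup U_h|$, which relies on the hypothesis that all pairwise intersections equal the common subspace $U_0$. With that in hand, the proof is the same hyperplane-counting argument as in Lemma \ref{lemma2.1}.
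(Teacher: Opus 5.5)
Your proof is correct and matches the paper's: the paper also uses (\ref{eq:rank2}) to get $|U|>q^{k-1}-1$, concludes that $U$ lies in no hyperplane, and deduces $r(\widetilde{G})=r(G)=k$. Beyond that, you justify two things the paper takes for granted: the count $|U|=(q^k-q^{u_0})-\sum_{i=1}^h(q^{u_i}-q^{u_0})$, which relies on $U_0\subseteq U_i$ for every $i$ (true because $h\geq 2$), and the equality $r(\widetilde{G})=r(G)$.
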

    \begin{proof}
        $U$ is not contained in any $(k-1)$-dimensional subspace of $\mathbf{F}_q^k$, since
        $|U|=(q^k-q^{u_0})-\sum_{i=1}^h (q^{u_i}-q^{u_0})>q^{k-1}-1$ with (\ref{eq:rank2}). Then $r(\widetilde{G})=r(G)=k$.
    \end{proof}

From \textbf{Lemma} \ref{lemma3.1}, let $\mathbf{C}$ be the code with the generator matrix $\widetilde{G}$. Then we have the following result.

\begin{theorem}\label{theorem3.1}
        The above code $\mathbf{C}$ is a linear $[\frac{(q^k-q^{u_0})-\sum_{i=1}^h (q^{u_i}-q^{u_0})}{q-1},k,d\geq q^{k-1}-\sum_{i=1}^h (q^{u_i-1})]_q$ code.
\end{theorem}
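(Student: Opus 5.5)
The plan is to mirror the proof of Theorem \ref{theorem2.1}, working with hyperplanes. By Lemma \ref{lemma3.1} the matrix $\widetilde{G}$ has rank $k$, so $\mathbf{C}$ has dimension $k$. Its length is $|U|/(q-1)$, which is the stated formula because $|U|=(q^k-q^{u_0})-\sum_{i=1}^h(q^{u_i}-q^{u_0})$. This count itself follows from inclusion–exclusion on $U_1\cup\cdots\cup U_h$: the pairwise intersections all equal $U_0$, so the union has $q^{u_0}+\sum_i(q^{u_i}-q^{u_0})$ elements including $0$.

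For the minimum distance, take a nonzero $\mathbf{a}$ and let $H$ be the hyperplane $\mathbf{a}\cdot\mathbf{x}=0$. The weight of $\mathbf{a}\widetilde{G}$ is the number of columns of $\widetilde{G}$ outside $H$, that is,
\begin{equation*}
wt(\mathbf{a}\widetilde{G})=\frac{|U\setminus H|}{q-1}=\frac{(q^k-q^{k-1})-|(U_1\cup\cdots\cup U_h)\setminus H|}{q-1}.
\end{equation*}
So it suffices to show $|(U_1\cup\cdots\cup U_h)\setminus H|\le \sum_{i=1}^h(q^{u_i}-q^{u_i-1})$.

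To bound this, I would use
\begin{equation*}
(U_1\cup\cdots\cup U_h)\setminus H=\bigcup_{i=1}^h (U_i\setminus H),
\end{equation*}
together with the fact that for each $i$, $|U_i\setminus H|$ is either $0$ (if $U_i\subseteq H$) or $q^{u_i}-q^{u_i-1}$ (if $\dim(U_i\cap H)=u_i-1$). The crude union bound then gives exactly $\sum_i(q^{u_i}-q^{u_i-1})$. Dividing by $q-1$ yields
\begin{equation*}
wt(\mathbf{a}\widetilde{G})\ge q^{k-1}-\sum_{i=1}^h q^{u_i-1},
\end{equation*}
as claimed.

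No real obstacle is expected. The only point needing care is that the union bound overcounts vectors of $U_0\setminus H$, but overcounting only strengthens the inequality. Alternatively, one could compute the bound exactly: if $U_0\not\subseteq H$, the overlap removes $(m-1)(q^{u_0}-q^{u_0-1})$, where $m$ is the number of $U_i$ not contained in $H$. This shows the bound is not sharp in general, which explains why the statement only asserts $d\ge\cdot$.
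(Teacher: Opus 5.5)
Your proof is correct and is essentially the paper's argument. The paper also counts the columns off the hyperplane $H$. It writes the weight exactly as $\frac{|U|-|H|+\sum_{i}|H\cap U_i|-(h-1)|H\cap U_0|}{q-1}$, using the disjoint decomposition of the union into $U_0$ and the sets $U_i\setminus U_0$, and then bounds this from below. That last step amounts to discarding the nonnegative term $(h-1)(q^{u_0}-|H\cap U_0|)$, which is precisely the overcount your union bound ignores.

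One small quibble concerns your closing remark. Only $d\ge q^{k-1}-\sum_{i=1}^h q^{u_i-1}$ is asserted because equality requires a hyperplane containing $U_0$ but none of $U_1,\dots,U_h$; Theorem \ref{theorem3.2} supplies such a hyperplane when $h\le q^{u_1-u_0}$. It is not because the bound is strict when $U_0\not\subseteq H$.
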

    \begin{proof}
    For any hyperplane $H\subset \mathbf{F}_q^k$, let $H$ be defined by
    \begin{equation*}
        a_1x_1+\dots +a_kx_k=0
    \end{equation*}
    where $\mathbf{a}=(a_1,\dots,a_k)$. The weight of nonzero codeword $\mathbf{a}\cdot \widetilde{G}$ corresponding to $H$ is
    \begin{equation*}
    \begin{aligned}
        &\frac{|U|-|H-H\cap (U_1\cup\dots\cup U_h)|}{q-1}\\
        =&\frac{|U|-|H|+|(H\cap U_0)\cup(H\cap (U_1-U_0))\cup\dots(H\cap (U_h-U_0))|}{q-1}\\
        =&\frac{|U|-|H|+|H\cap U_1|+\dots+|H\cap U_h|-(h-1)|H\cap U_0|}{q-1}\\
        \geq&q^{k-1}-\sum_{i=1}^h (q^{u_i-1}).
    \end{aligned}
    \end{equation*}
    \end{proof}

\begin{theorem}\label{theorem3.2}
    The code $\mathbf{C}$ is a linear $[\frac{(q^k-q^{u_0})-\sum_{i=1}^h (q^{u_i}-q^{u_0})}{q-1},k,d= q^{k-1}-\sum_{i=1}^h (q^{u_i-1})]_q$ code if $h\leq q^{u_1-u_0}$.
\end{theorem}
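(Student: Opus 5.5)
The plan is to exhibit one hyperplane $H\subset\mathbf{F}_q^k$ whose codeword attains the lower bound of Theorem \ref{theorem3.1}. Combined with that theorem, this gives $d=q^{k-1}-\sum_{i=1}^h q^{u_i-1}$. The argument mirrors the covering argument in the proof of Theorem \ref{theorem2.2}, but it is carried out modulo $U_0$.

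\textbf{Step 1: which hyperplanes attain the bound.} Start from the weight formula in the proof of Theorem \ref{theorem3.1}:
\begin{equation*}
wt=\frac{|U|-|H|+\sum_{i=1}^h|H\cap U_i|-(h-1)|H\cap U_0|}{q-1}.
\end{equation*}
We also have $|U|=q^k-\sum_{i=1}^h q^{u_i}+(h-1)q^{u_0}$.

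Suppose $U_0\subseteq H$ and $\dim(H\cap U_i)=u_i-1$ for every $1\le i\le h$. Then $|H\cap U_0|=q^{u_0}$, and the numerator becomes
\begin{equation*}
q^k-q^{k-1}-\sum_{i=1}^h\left(q^{u_i}-q^{u_i-1}\right).
\end{equation*}
This is exactly $(q-1)\bigl(q^{k-1}-\sum_{i=1}^h q^{u_i-1}\bigr)$.

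For comparison, if $U_0\not\subseteq H$ the weight picks up an extra $(h-1)q^{u_0-1}$ term. So the extremal hyperplane must contain $U_0$. The task therefore reduces to finding a hyperplane $H\supseteq U_0$ with $U_i\not\subseteq H$ for all $i\ge1$.

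\textbf{Step 2: pass to the quotient.} Set $W=\mathbf{F}_q^k/U_0$, of dimension $k-u_0$. Hyperplanes containing $U_0$ correspond to nonzero linear functionals on $W$, up to scalar. Let $U_i'=U_i/U_0$, of dimension $u_i-u_0\ge u_1-u_0\ge 1$.

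A functional vanishes on $U_i'$ exactly when it lies in the annihilator $(U_i')^\perp\subseteq W^*$. This annihilator has dimension $k-u_i\le k-u_1$.

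\textbf{Step 3: counting (the key step).} Argue by contradiction, as in Theorem \ref{theorem2.2}. Suppose every nonzero functional on $W$ vanished on some $U_i'$. Then $W^*$ would equal $\bigcup_{i=1}^h (U_i')^\perp$. These subspaces all share the zero vector, so
\begin{equation*}
\Bigl|\bigcup_{i=1}^h (U_i')^\perp\Bigr|\le h\,(q^{k-u_1}-1)+1.
\end{equation*}
Using $h\le q^{u_1-u_0}$ and $h\ge2$, the right-hand side is at most $q^{k-u_0}-h+1<q^{k-u_0}=|W^*|$, which is a contradiction.

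Hence some functional is nonzero on every $U_i'$. Its kernel, pulled back to $\mathbf{F}_q^k$, is the required hyperplane $H$. The corresponding codeword has weight exactly $q^{k-1}-\sum_{i=1}^h q^{u_i-1}$, and together with Theorem \ref{theorem3.1} this proves the claimed minimum distance.

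The only genuinely delicate point is recognising in Step 1 that the extremal hyperplane must contain $U_0$. Once that is settled, the counting in Step 3 is routine.
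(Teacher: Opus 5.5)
Your proof is correct and is essentially the paper's argument. The paper argues by contradiction that otherwise every hyperplane containing $U_0$ contains some $U_i$, i.e.\ $U_0^\perp\subseteq\bigcup_{i=1}^h U_i^\perp$, and then bounds $|\bigcup_{i=1}^h U_i^\perp|\le hq^{k-u_1}-(h-1)<q^{k-u_0}$. This is exactly your count carried out in $(\mathbf{F}_q^k/U_0)^*\cong U_0^\perp$; your Step 1 simply makes explicit the weight computation for the extremal hyperplane that the paper leaves implicit.
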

    \begin{proof}
    Otherwise, we have $d>q^{k-1}-hq^{u-1}$. For any hyperplane $H\subset \mathbf{F}_q^k$, let $H$ be defined by
    \begin{equation*}
        a_1x_1+\dots +a_kx_k=0
    \end{equation*}
    where $\mathbf{a}=(a_1,\dots,a_k)$. From the proof  of \textbf{Theorem} \ref{theorem2.1}, it can be obtained that there exists $i$ such that $U_i\subseteq H$ when $U_0\subseteq H$. Then we have $U_0^\perp$ is contained by $\bigcup_{i=1}^h U_i^\perp$. On the other hand,
    \begin{equation*}
        \begin{aligned}
         |\bigcup_{i=1}^h U_i^\perp|&\leq\sum_{i=1}^{h} |U_i^\perp|-(h-1)\\
         &\leq hq^{k-u_1}-(h-1)\\
         &<q^{k-u_0}.
        \end{aligned}
    \end{equation*}
    This leads to the contradiction.
    \end{proof}
\begin{lemma}\label{lemma3.2}
    The Griesmer defect of $\mathbf{C}$ can be upper bounded in the following several cases.
    \begin{enumerate}[1)]
        \item If $u_1=\dots =u_{s_1}<u_{s_1+1}=\dots =u_{s_1+s_2}<\dots <u_{s_1+\dots +s_{t-1}+1}=\dots =u_{s_1+\dots +s_{t}} $ where $1\leq s_i<q,i=1,\dots, t $ and $h=s_1+\dots +s_t$, the Griesmer defect of $\mathbf{C}$ is smaller than $(h-1)\frac{q^{u_0}-1}{q-1}$.
        \item If $u_1=\dots =u_h=u$, the Griesmer defect of $\mathbf{C}$ is smaller than $(h-1)\frac{q^{u_0}-1}{q-1}+\sum_{i=1}^{k-u}\lfloor\frac{h}{q^i}\rfloor.$
    \end{enumerate}
\end{lemma}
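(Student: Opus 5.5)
The plan is to compute $g_q(k,d_0)$ exactly, where $d_0=q^{k-1}-D$ and $D=\sum_{i=1}^h q^{u_i-1}$. Since Theorem \ref{theorem3.1} gives $d\geq d_0$ and $g_q(k,\cdot)$ is nondecreasing in $d$, we have $GD=n-g_q(k,d)\leq n-g_q(k,d_0)$. So it suffices to show that $n-g_q(k,d_0)$ equals the stated quantity. The statement's phrase ``smaller than'' will be read as ``at most'', exactly as in Lemma \ref{lemma2.2}, with equality when $d=d_0$.

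\textbf{Step 1: rewrite $n$.} From $|U|=(q^k-q^{u_0})-\sum_{i=1}^h(q^{u_i}-q^{u_0})$ we get
\begin{equation*}
n=\frac{q^k-1}{q-1}-\sum_{j=1}^h\frac{q^{u_j}-1}{q-1}+(h-1)\frac{q^{u_0}-1}{q-1}.
\end{equation*}
This uses only $q^{u_i}-q^{u_0}=(q^{u_i}-1)-(q^{u_0}-1)$.

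\textbf{Step 2: rewrite $g_q(k,d_0)$.} Since $q^{k-1-i}$ is an integer for $0\le i\le k-1$, we have $\lceil d_0/q^i\rceil=q^{k-1-i}-\lfloor D/q^i\rfloor$. Summing over $i$ gives
\begin{equation*}
g_q(k,d_0)=\frac{q^k-1}{q-1}-\sum_{i=0}^{k-1}\left\lfloor \frac{D}{q^i}\right\rfloor .
\end{equation*}
Combining with Step 1,
\begin{equation*}
n-g_q(k,d_0)=\sum_{i=0}^{k-1}\left\lfloor \frac{D}{q^i}\right\rfloor-\sum_{j=1}^h\frac{q^{u_j}-1}{q-1}+(h-1)\frac{q^{u_0}-1}{q-1}.
\end{equation*}
For each $j$, since $u_j\le k$, we have $\sum_{i=0}^{k-1}\lfloor q^{u_j-1}/q^i\rfloor=(q^{u_j}-1)/(q-1)$. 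The problem therefore reduces to comparing $\sum_i\lfloor D/q^i\rfloor$ with $\sum_j\sum_i\lfloor q^{u_j-1}/q^i\rfloor$.

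\textbf{Step 3: the two cases.} This is the key step.

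In case 1), $D=\sum_{l=1}^t s_l q^{v_l-1}$, where the $v_l$ are the distinct values among the $u_j$ and $1\le s_l<q$. This is precisely the $q$-adic expansion of $D$, with no carries. Hence for every $i$, $\lfloor D/q^i\rfloor=\sum_l s_l\lfloor q^{v_l-1}/q^i\rfloor=\sum_j\lfloor q^{u_j-1}/q^i\rfloor$. The two sums cancel, leaving $n-g_q(k,d_0)=(h-1)\frac{q^{u_0}-1}{q-1}$.

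In case 2), $D=hq^{u-1}$.
\begin{itemize}
\item For $i\le u-1$, $\lfloor D/q^i\rfloor=hq^{u-1-i}=\sum_j\lfloor q^{u-1}/q^i\rfloor$, so these terms cancel exactly.
\item For $u\le i\le k-1$, each $\lfloor q^{u-1}/q^i\rfloor=0$, while $\lfloor D/q^i\rfloor=\lfloor h/q^{i-u+1}\rfloor$.
\end{itemize}
Reindexing the surviving terms gives the extra $\sum_{i=1}^{k-u}\lfloor h/q^i\rfloor$, mirroring Lemma \ref{lemma2.2}.

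The main obstacle is the no-carry argument in case 1). One must check that the condition $s_l<q$ is exactly what makes $\lfloor D/q^i\rfloor$ additive over the summands. This follows because truncating a $q$-adic expansion is additive across digit positions.
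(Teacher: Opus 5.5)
Your proposal is correct and follows the same route as the paper: bound $GD\le n-g_q(k,d_0)$ using $d\ge d_0$, then evaluate $n-g_q(k,d_0)$ exactly in each case. Your no-carry argument for case 1) and your split at $i=u$ for case 2) supply the justification that the paper's one-line displayed computation leaves out. Your reading of ``smaller than'' as ``at most'' is also the one the paper's proof establishes.
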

    \begin{proof}
    Let $d_0=q^{k-1}-\sum_{i=1}^h (q^{u_i-1})$.
        \begin{equation*}
        \begin{aligned}
        1)\ GD&\leq n-g_q(k,d_0)\\
        &= \frac{(q^k-q^{u_0})-\sum_{i=1}^h (q^{u_i}-q^{u_0})}{q-1}-\sum_{i=0}^{k-1} \left\lceil \frac{q^{k-1}-\sum_{i=1}^h (q^{u_i-1})}{q^i} \right\rceil\\
        &= \frac{(q^k-q^{u_0})-\sum_{i=1}^t s_i(q^{u_{\sum_{j=1}^i s_j}}-q^{u_0})}{q-1}-\sum_{i=0}^{k-1} \left\lceil \frac{q^{k-1}-\sum_{i=1}^t s_iq^{u_{\sum_{j=1}^i s_j}-1}}{q^i} \right\rceil\\
        &=(h-1)\frac{q^{u_0}-1}{q-1}.\\
        2)\ GD&\leq n-g_q(k,d_0)\\
        &= \frac{(q^k-q^{u_0})-h (q^u-q^{u_0})}{q-1}-\sum_{i=0}^{k-1} \left\lceil \frac{q^{k-1}-hq^{u-1}}{q^i} \right\rceil\\
        &=(h-1)\frac{q^{u_0}-1}{q-1}+\sum_{i=1}^{k-u}\lfloor\frac{h}{q^i}\rfloor.
        \end{aligned}
        \end{equation*}
    \end{proof}

\begin{theorem}\label{theorem3.3}
    The linear $[\frac{(q^k-q^{u_0})-\sum_{i=1}^h (q^{u_i}-q^{u_0})}{q-1},k,d=q^{k-1}-\sum_{i=1}^h (q^{u_i-1})]_q$ code $\mathbf{C}$ is distance optimal in the  following cases.
    \begin{enumerate}[1)]
    \item If $u_1=\dots =u_{s_1}<u_{s_1+1}=\dots =u_{s_1+s_2}<\dots <u_{s_1+\dots +s_{t-1}+1}=\dots =u_{s_1+\dots +s_{t}} $ where $1\leq s_i<q,i=1,\dots, t $ and $h=s_1+\dots +s_t$, the linear code $\mathbf{C}$ is distance optimal when $u_1>(h-1)\frac{q^{u_0}-1}{q-1}$.
    \item If $u_1=\dots =u_h=u$, the linear code $\mathbf{C}$ is distance optimal when $i_h+u>(h-1)\frac{q^{u_0}-1}{q-1}+\sum_{i=1}^{k-u}\lfloor\frac{h}{q^i}\rfloor$.
    \end{enumerate}
\end{theorem}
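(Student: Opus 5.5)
We argue exactly as in the proof of Theorem \ref{theorem2.3}: we show that a linear $[n,k,d_0+1]_q$ code cannot exist because the Griesmer bound would be violated. Here $n$ is the length of $\mathbf{C}$ and $d_0=q^{k-1}-\sum_{i=1}^h q^{u_i-1}$. By Theorem \ref{theorem3.2} the minimum distance of $\mathbf{C}$ equals $d_0$. (Each case implicitly needs $h\le q^{u_1-u_0}$; this is presumably assumed, or can be checked.) It therefore suffices to prove
\[
g_q(k,d_0+1)-g_q(k,d_0)>n-g_q(k,d_0).
\]
Lemma \ref{lemma3.2} bounds the right-hand side in each of the two cases. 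The work is to compute the increment $\delta_q(k,d_0,1)$ on the left.

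For the increment, note that $\lceil (d_0+1)/q^i\rceil-\lceil d_0/q^i\rceil$ equals $1$ exactly when $q^i$ divides $d_0$, and $0$ otherwise. Hence $\delta_q(k,d_0,1)$ counts the indices $0\le i\le k-1$ with $q^i\mid d_0$, which is $\min\{v_q(d_0)+1,k\}$, where $v_q$ is the $q$-adic valuation. So everything reduces to computing $v_q(d_0)$.

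\textbf{Case 1).} The exponents $u_i-1$ come in blocks of sizes $s_j<q$, and $d_0=q^{k-1}-\sum_j s_j q^{u_{(j)}-1}$, where $u_{(j)}$ denotes the common value of $u_i$ in the $j$-th block. Since each $s_j<q$, the lowest term $s_1q^{u_1-1}$ has coefficient not divisible by $q$. All other terms, including $q^{k-1}$ (because $u_h<k$), are divisible by $q^{u_1}$. Hence $v_q(d_0)=u_1-1$ and $\delta_q=u_1$. Lemma \ref{lemma3.2}(1) gives $n-g_q(k,d_0)=(h-1)\frac{q^{u_0}-1}{q-1}$, and the hypothesis $u_1>(h-1)\frac{q^{u_0}-1}{q-1}$ then yields
\[
g_q(k,d+1)\ge g_q(k,d_0+1)=n-(h-1)\tfrac{q^{u_0}-1}{q-1}+u_1>n
\]
for any $d\ge d_0$. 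Therefore no $[n,k,d_0+1]_q$ code exists.

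\textbf{Case 2).} Here $d_0=q^{k-1}-hq^{u-1}$, so $v_q(d_0)=u-1+i_h$ (provided $u+i_h\le k-1$, which holds under condition (\ref{eq:rank2})). This gives $\delta_q=i_h+u$, as already used in Theorem \ref{theorem2.3}. Lemma \ref{lemma3.2}(2) gives $n-g_q(k,d_0)=(h-1)\frac{q^{u_0}-1}{q-1}+\sum_{i=1}^{k-u}\lfloor h/q^i\rfloor$, and the stated hypothesis gives the strict inequality. The conclusion follows as in Case 1.

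The main obstacle is the valuation computation: confirming that no carries occur and that $v_q(d_0)<k-1$, so the increment is not truncated at $k$. A second point needing care is that Lemma \ref{lemma3.2} is stated with ``smaller than''. Its proof actually gives an equality for $n-g_q(k,d_0)$, and that equality is the form we need, since we work directly with $g_q(k,d_0)$ rather than with the defect itself.
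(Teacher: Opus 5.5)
Your proposal is correct and follows the paper's proof: the increment $\delta_q(k,d_0,1)$ equals $u_1$ in case 1) and $i_h+u$ in case 2), and combining it with the exact value of $n-g_q(k,d_0)$ computed in Lemma~\ref{lemma3.2} (which the paper also uses as an equality) gives $g_q(k,d_0+1)>n$; your valuation argument justifies these increments, which the paper only asserts, and it avoids the sign slip in the paper's case 2). One small point: you do not need Theorem~\ref{theorem3.2} or the condition $h\le q^{u_1-u_0}$, because Theorem~\ref{theorem3.1} gives $d\ge d_0$, and the nonexistence of an $[n,k,d_0+1]_q$ code then forces $d=d_0$.
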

    \begin{proof}
Let $d_0=q^{k-1}-\sum_{i=1}^h (q^{u_i-1})$.\\
For $1)$, we have $\delta_q(k,d_0, 1)=g_q(k,d_0+1)-g_q(k,d_0)=u_1.$
    So
    \begin{equation*}
    \begin{aligned}
     \sum_{i=0}^{k-1} \left\lceil \frac{d+1}{q^i}\right\rceil &\geq\sum_{i=0}^{k-1} \left\lceil \frac{d_0+1}{q^i}\right\rceil\\
     &=\sum_{i=0}^{k-1} \left\lceil \frac{d_0}{q^i}\right\rceil+u_1\\
     &=n-(h-1)\frac{q^{u_0}-1}{q-1}+u_1\\
     &>n.
     \end{aligned}
\end{equation*}
    We have the optimal distance of $[n, k]_q$ is at most $d=d_0$.\\
    For $2)$, we have $\delta_q(k,d_0, 1)=g_q(k,d_0+1)-g_q(k,d_0)=i_h +u.$
    So
    \begin{equation*}
    \begin{aligned}
     \sum_{i=0}^{k-1} \left\lceil \frac{d+1}{q^i}\right\rceil &\geq\sum_{i=0}^{k-1} \left\lceil \frac{d_0+1}{q^i}\right\rceil\\
     &=\sum_{i=0}^{k-1} \left\lceil \frac{d_0}{q^i}\right\rceil+i_h+u\\
     &=n-(h-1)\frac{q^{u_0}-1}{q-1}+\sum_{i=1}^{k-u}\lfloor\frac{h}{q^i}\rfloor+i_h+u\\
     &>n.
     \end{aligned}
\end{equation*}
    We have the optimal distance of $[n, k]_q$ is at most $d=d_0$.

    \end{proof}

\begin{corollary}\label{corollary3.1}
    Let $h\leq q, u_1\leq \dots \leq u_h$. The linear code $\mathbf{C}$ is distance optimal if $u_1>(h-1)\frac{q^{u_0}-1}{q-1}$.
\end{corollary}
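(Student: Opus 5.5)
The plan is to reduce Corollary~\ref{corollary3.1} to case 1) of Theorem~\ref{theorem3.3}. The hypotheses $h\leq q$ and $u_1\leq\dots\leq u_h$ should supply the block structure that case 1) requires.

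\emph{Step 1: the block decomposition.} Group the nondecreasing sequence $u_1\leq\dots\leq u_h$ into maximal runs of equal values:
\[
u_1=\dots=u_{s_1}<u_{s_1+1}=\dots=u_{s_1+s_2}<\dots,
\]
with run lengths $s_1,\dots,s_t$ and $s_1+\dots+s_t=h$.

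\emph{Step 2: checking $s_i<q$.} Each $s_i\geq 1$ holds automatically. Since $s_i\leq h\leq q$, the condition $s_i<q$ can only fail when $h=q$ and all $u_i$ are equal, i.e.\ $t=1$ and $s_1=q$.

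\emph{Step 3: the generic case.} Whenever every $s_i<q$, case 1) of Theorem~\ref{theorem3.3} applies verbatim. Its hypothesis $u_1>(h-1)\frac{q^{u_0}-1}{q-1}$ is exactly ours, so the code is distance optimal.

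\emph{Step 4: the boundary case $h=q$, all $u_i=u$.} Here I would fall back on case 2) of Theorem~\ref{theorem3.3}, which needs
\[
i_h+u>(h-1)\frac{q^{u_0}-1}{q-1}+\sum_{i=1}^{k-u}\left\lfloor\frac{h}{q^i}\right\rfloor .
\]
With $h=q$ we have $i_h=1$, and the sum equals $1$, because only the $i=1$ term is nonzero (the sum is nonempty since $u<k$). The required inequality therefore becomes $u+1>(h-1)\frac{q^{u_0}-1}{q-1}+1$. This is exactly our hypothesis $u_1>(h-1)\frac{q^{u_0}-1}{q-1}$ with $u_1=u$, so optimality follows.

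An equivalent route is to recompute $\delta_q(k,d_0,1)$ directly for $d_0=q^{k-1}-q\cdot q^{u-1}=q^{k-1}-q^u$. This gives $\delta_q(k,d_0,1)=u+1$, while the Griesmer defect bound from Lemma~\ref{lemma3.2} 2) increases by $1$, so the net margin is unchanged.

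\emph{Main obstacle.} The delicate point is the boundary case in Step 4, and more generally confirming that Lemma~\ref{lemma3.2} 1) gives the exact defect $(h-1)\frac{q^{u_0}-1}{q-1}$ whenever the run lengths are below $q$. That exactness holds because no carries occur in the $q$-adic expansion of $\sum_i q^{u_i-1}$, so the ceiling sums telescope exactly. I would also verify there that $\delta_q(k,d_0,1)=u_1$ in case 1). The smallest nonzero digit of $\sum_i q^{u_i-1}$ sits at position $u_1-1$, so adding $1$ to $d_0$ increases exactly the ceilings with index $0\leq i\leq u_1-1$, which gives $u_1$.
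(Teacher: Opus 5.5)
Your proposal is correct and follows the paper's proof: the paper also treats $h=q$ with $u_1=\dots=u_h$ as the only exceptional case, handling it via case 2) of Theorem~\ref{theorem3.3} with $i_h=1$ and $\sum_{i=1}^{k-u}\lfloor h/q^i\rfloor=1$, and handles every other case (where all run lengths are below $q$) via case 1). Your additional checks are sound and make explicit what the paper takes on faith from Theorem~\ref{theorem3.3}: the carry-free exactness of the defect in Lemma~\ref{lemma3.2} 1), the value $\delta_q(k,d_0,1)=u_1$, and $u<k$ so that the sum is nonempty.
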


\begin{proof}

    If $h= q $ and $u_1=u_h$, we have
    \begin{equation}
        i_h+u_1=u_1+1>(h-1)\frac{q^{u_0}-1}{q-1}+1=(h-1)\frac{q^{u_0}-1}{q-1}+\sum_{i=1}^{k-u}\lfloor\frac{h}{q^i}\rfloor.
    \end{equation}
    In other cases, we have
    \begin{equation}
        u_1>(h-1)\frac{q^{u_0}-1}{q-1}.
    \end{equation}
   According to \textbf{Theorem} \ref{theorem3.3}, the conclusion is proved.
\end{proof}

\subsection{Weight distribution and subcode support weight distribution}

In this subsection, we determine weight distribution and subcode support weight distribution of codes constructed in the previous subsection.

\begin{lemma}\label{lemma3.3}[see \cite{Shi} Lemma 1,2,3.]
    Let $U_1$ be a $u_1$-dimensional subspace of $\mathbf{F}_q^k$ and $U_2$ be a $u_2$-dimensional subspace of $\mathbf{F}_q^k$ such that $dim(U_1 \cap U_2)=0.$
    \begin{enumerate}
        \item [(i)]
        Let $0\leq t \leq u_1 \leq k-1$ and $N_{l,t}^{k,u_1}$ denote the number of $l$-dimensional subspaces $V$ of $\mathbf{F}_q^k$ such that $dim(V\cap U_1)=t$. Then
        \begin{equation}
            N_{l,t}^{k,u_1}=q^{(u_1-t)(l-t)}Gaussian(k-u_1,l-t)_qGaussian(u_1,t)_q.
        \end{equation}
        \item [(ii)]
        Let $N_t(u_1,u_2,v_1,v_2)$ denote the number of $t+v_1+v_2$-dimensional subspaces $V$ of $U_1 \oplus U_2$ such that $dim(V \cap U_1)=v_1$ and $dim(V \cap U_2)=v_2$. Then
         \begin{equation*}
            N_t(u_1,u_2,v_1,v_2)=
            \begin{cases}
            \begin{aligned}
                (q^t-1)\cdots (q^t-q^{t-1})&Gaussian(u_1-v_1,t)_qGaussian(u_2-v_2,t)_q\\&Gaussian(u_1,v_1)_qGaussian(u_2,v_2)_q ,\ \mbox{if}\ t\neq 0.
                  \end{aligned}
                  \\
                Gaussian(u_1,v_1)_qGaussian(u_2,v_2)_q ,\  \mbox{if} \ t=0.
            \end{cases}
        \end{equation*}
        \item [(iii)]
         Let $N_{l,v_1,v_2}^{k,u_1,u_2}$ denote the number of $l$-dimensional subspaces $V$ of $\mathbf{F}_q^k$ such that $dim(V \cap U_1)=v_1$ and $dim(V \cap U_2)=v_2$. Then
         \begin{equation*}
         \begin{aligned}
         N_{l,v_1,v_2}^{k,u_1,u_2}=\sum_{t=0}^{t'}q^{(u_1+u_2-t-v_1-v_2)(l-t-v_1-v_2)}&N_t(u_1,u_2,v_1,v_2)\\&Gaussian(k-u_1-u_2,l-t-v_1-v_2)_q
         \end{aligned}
         \end{equation*}
         where $t'=min\{u_1-v_1,u_2-v_2\}$.
    \end{enumerate}
\end{lemma}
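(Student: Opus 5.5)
The plan is to prove all three parts by one counting device: fix the intersection of $V$ with the relevant subspace, then pass to the quotient by that intersection. In the quotient the remaining part of $V$ meets the image of the relevant subspace trivially, and such subspaces are counted by describing them as graphs of linear maps. The basic fact I will establish first is the following. Let $A$ be an $a$-dimensional subspace of an $N$-dimensional space $E$. Then the number of $m$-dimensional subspaces $V'\subseteq E$ with $V'\cap A=\{0\}$ is $q^{am}Gaussian(N-a,m)_q$. To see this, fix a complement $B$ with $E=A\oplus B$. A subspace with $V'\cap A=\{0\}$ projects isomorphically onto an $m$-dimensional subspace $P\subseteq B$, so $V'$ is the graph $\{p+\varphi(p):p\in P\}$ of a unique linear map $\varphi:P\to A$. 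Conversely, every pair $(P,\varphi)$ gives such a $V'$. This yields $Gaussian(N-a,m)_q\cdot q^{am}$ subspaces.

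For (i), I map each $V$ to the pair $(W,\overline V)$, where $W=V\cap U_1$ and $\overline V=V/W\subseteq \mathbf{F}_q^k/W$. The space $W$ is one of the $Gaussian(u_1,t)_q$ $t$-dimensional subspaces of $U_1$. The space $\overline V$ has dimension $l-t$ and meets $U_1/W$ (of dimension $u_1-t$) trivially inside a space of dimension $k-t$. Conversely, any such pair $(W,\overline V)$ lifts to a unique $V\supseteq W$ with $V\cap U_1=W$, since the condition $V\cap U_1=W$ is equivalent to $\overline V\cap (U_1/W)=0$. Applying the basic fact with $N=k-t$, $a=u_1-t$, $m=l-t$ gives $q^{(u_1-t)(l-t)}Gaussian(k-u_1,l-t)_qGaussian(u_1,t)_q$.

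For (ii), I work inside $U_1\oplus U_2$. First choose $V_1=V\cap U_1$ and $V_2=V\cap U_2$, which gives $Gaussian(u_1,v_1)_qGaussian(u_2,v_2)_q$ choices. Then $V\supseteq V_1\oplus V_2$, and $V'=V/(V_1\oplus V_2)$ is a $t$-dimensional subspace of $(U_1/V_1)\oplus(U_2/V_2)$. The conditions on $V$ translate exactly into $V'$ meeting both summands trivially. Equivalently, both coordinate projections restricted to $V'$ are injective. Hence $V'$ is the graph of a linear isomorphism between a $t$-dimensional $P_1\subseteq U_1/V_1$ and a $t$-dimensional $P_2\subseteq U_2/V_2$. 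This gives $Gaussian(u_1-v_1,t)_qGaussian(u_2-v_2,t)_q\,|GL_t(\mathbf{F}_q)|$ choices, with $|GL_t(\mathbf{F}_q)|=(q^t-1)\cdots(q^t-q^{t-1})$. When $t=0$ only the first factor remains.

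For (iii), I set $W=U_1\oplus U_2$ and stratify the $l$-dimensional subspaces $V$ by $X=V\cap W$. Since $V\cap U_i=X\cap U_i$, the subspace $X$ must have $\dim(X\cap U_1)=v_1$ and $\dim(X\cap U_2)=v_2$, so its dimension is $t+v_1+v_2$ for some $t$. Its first summand $(X\cap U_1)\oplus(X\cap U_2)$ has dimension $v_1+v_2$, and $t$ is the dimension of $X$ modulo that sum. The quotient is counted in (ii), which forces $t\le \min\{u_1-v_1,u_2-v_2\}$. So part (ii) counts the choices of $X$ for each $t$. Given $X$, the space $V/X$ is an $(l-t-v_1-v_2)$-dimensional subspace of $\mathbf{F}_q^k/X$ meeting $W/X$ trivially, and it lifts uniquely to $V$. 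The basic fact with $a=u_1+u_2-t-v_1-v_2$ and $N-a=k-u_1-u_2$ supplies the remaining factor, and summing over $t$ gives the formula. The main point needing care is checking in each part that the map $V\mapsto$ (intersection, quotient) is a bijection onto the stated set of pairs, in particular that the intersection conditions are preserved when lifting back. This is where I will be most explicit; the counting itself is routine.
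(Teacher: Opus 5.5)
Your proof is correct in all three parts. The paper does not prove this lemma; it quotes it from the work of Shi et al. (Lemmas 1--3), so there is no in-paper argument to compare against directly.

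The closest comparison is the paper's proof of Lemma~\ref{lemma3.4}, which uses the same mechanism you do:
\begin{itemize}
\item fix the intersection with the relevant subspace;
\item pass to a quotient;
\item count the lifts.
\end{itemize}
There, the free choice of the $\boldsymbol{\delta}_i\in U_0/V_0$ when lifting a basis of $\overline{V}$ is exactly a choice of linear map into $U_0/V_0$. This gives the fibre size $q^{(u_0-v_0)\dim\overline{V}}$, which is your ``graph of a linear map'' count. Your description in (ii) of $V'$ as the graph of an isomorphism $P_1\to P_2$ also explains the factor $|GL_t(\mathbf{F}_q)|=(q^t-1)\cdots(q^t-q^{t-1})$ directly, which the bare citation does not.

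One small point to make explicit in (iii). The sum runs up to $t'=\min\{u_1-v_1,u_2-v_2\}$ with no reference to $l$ or $k$. Terms with $l-t-v_1-v_2<0$ or $l-t-v_1-v_2>k-u_1-u_2$ must therefore contribute zero. This holds under the convention $Gaussian(n,m)_q=0$ for $m<0$ or $m>n$. With that convention stated, your bijection $V\mapsto (V\cap W,\,V/(V\cap W))$ yields the displayed formula verbatim.
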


\begin{lemma}\label{lemma3.4}
    Let $U_1$ be a $u_1$-dimensional subspace of $\mathbf{F}_q^k$ and $U_2$ be a $u_2$-dimensional subspace of $\mathbf{F}_q^k$ such that $U_0=U_1 \cap U_2$ and $dim(U_0)=u_0<min\{u_1,u_2\}$.
    \begin{enumerate}
        \item [(i)]
        Let $N_t(u_0,u_1,u_2,v_0,v_1,v_2)$ denote the number of $t+v_1+v_2-v_0$-dimensional subspaces $V$ of $U_1 + U_2$ such that $dim(V \cap U_0)=v_0, dim(V \cap U_1)=v_1$ and $dim(V \cap U_2)=v_2$. Then
         \begin{equation*}
         \begin{aligned}
            N_t(u_0,u_1,u_2,v_0,v_1,v_2)=Gaussian(u_0,v_0)_q q^{(u_0-v_0)(t-v_1+v_2-v_0)}\\N_t(u_1-u_0,u_2-u_0,v_1-v_0,v_2-v_0).
         \end{aligned}
        \end{equation*}
        \item [(ii)]
         Let $N_{l,v_0,v_1,v_2}^{k,u_0,u_1,u_2}$ denote the number of $l$-dimensional subspaces $V$ of $\mathbf{F}_q^k$ such that $dim(V \cap U_0)=v_0, dim(V \cap U_1)=v_1$ and $dim(V \cap U_2)=v_2$. Then
         \begin{equation*}
         \begin{aligned}
         N_{l,v_0,v_1,v_2}^{k,u_0,u_1,u_2}=\sum_{t=0}^{t'}&q^{((u_1-v_1)+(u_2-v_2)-(u_0-v_0)-t)(l-t-v_1-v_2+v_0)}\\&Gaussian(k-u_1-u_2+u_0,l-t-u_1-u_2+v_0)_q N_t(u_0,u_1,u_2,v_0,v_1,v_2),
         \end{aligned}
         \end{equation*}
         where $t'=min\{u_1-v_1,u_2-v_2\}$.
    \end{enumerate}
\end{lemma}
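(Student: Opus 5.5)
The plan is to reduce Lemma \ref{lemma3.4} to Lemma \ref{lemma3.3} by splitting off the common part $U_0$. Choose complements $W_1\subset U_1$ and $W_2\subset U_2$ with $U_i=U_0\oplus W_i$, so $\dim W_i=u_i-u_0$. Since $U_1\cap U_2=U_0$, we get $W_1\cap W_2=\{0\}$ and $U_1+U_2=U_0\oplus W_1\oplus W_2$. Then work in the quotient $\overline{U_1+U_2}=(U_1+U_2)/U_0\cong W_1\oplus W_2$. For a subspace $V\subseteq U_1+U_2$ with $\dim(V\cap U_0)=v_0$, set $\overline V=(V+U_0)/U_0$. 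A direct check gives $\dim\overline V=\dim V-v_0$ and $\dim(\overline V\cap \overline{U_i})=v_i-v_0$, because $V\cap U_i\supseteq V\cap U_0$ and $(V+U_0)\cap U_i=(V\cap U_i)+U_0$. So $\overline V$ is one of the subspaces counted by $N_t(u_1-u_0,u_2-u_0,v_1-v_0,v_2-v_0)$ in Lemma \ref{lemma3.3}(ii), with $\dim\overline V=t+(v_1-v_0)+(v_2-v_0)$.

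The key step is counting the fibres of the map $V\mapsto(V\cap U_0,\overline V)$. Fix a $v_0$-dimensional $A\subseteq U_0$; there are $Gaussian(u_0,v_0)_q$ choices. Also fix $\overline V$ of dimension $m=t+v_1+v_2-2v_0$. The subspaces $V$ with $V\cap U_0=A$ and $(V+U_0)/U_0=\overline V$ correspond to complements of $U_0/A$ in the preimage $(\overline V+U_0)/A$, i.e.\ to graphs of linear maps $\overline V\to U_0/A$. There are $q^{(u_0-v_0)m}$ of them.

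I must make sure the intersection conditions with $U_1,U_2$ are determined by $\overline V$ alone. This should hold because $V\cap U_i\supseteq A$ and its image is $\overline V\cap\overline{U_i}$, which forces $\dim(V\cap U_i)=v_0+\dim(\overline V\cap\overline{U_i})$. This gives (i), with the exponent coming out as $(u_0-v_0)(t+v_1+v_2-2v_0)$. That may differ superficially from the printed exponent, and I would reconcile the two there.

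For (ii), mirror the proof of Lemma \ref{lemma3.3}(iii). Any $l$-dimensional $V\subseteq\mathbf F_q^k$ determines $V'=V\cap(U_1+U_2)$, and the intersections of $V$ with $U_0,U_1,U_2$ equal those of $V'$. Write $\dim V'=t+v_1+v_2-v_0$ and count $V'$ by part (i). The number of $l$-dimensional $V$ with a prescribed trace $V'$ on $S=U_1+U_2$, where $\dim S=u_1+u_2-u_0$, is the standard count: $q^{(\dim S-\dim V')(l-\dim V')}$ times the Gaussian coefficient $Gaussian(k-\dim S,\,l-\dim V')_q$. This is the same formula underlying Lemma \ref{lemma3.3}(i). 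Summing over $t$ from $0$ to $\min\{u_1-v_1,u_2-v_2\}$ gives (ii). The range of $t$ is bounded by the dimensions of $W_i/(\overline V\cap W_i)$.

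The main obstacle is the fibre argument in (i): verifying that lifting from the quotient preserves the prescribed intersection dimensions, and keeping the exponent bookkeeping consistent with the stated formulas.
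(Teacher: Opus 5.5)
Your proposal is correct and follows the paper's proof essentially step for step. The paper also splits by $V_0=V\cap U_0$ to get the factor $Gaussian(u_0,v_0)_q$, passes to $(V+U_0)/U_0$ inside $(U_1+U_2)/U_0$, where $\overline{U_1}\cap\overline{U_2}=0$, to invoke Lemma~\ref{lemma3.3}(ii), counts each fibre as $q^{(u_0-v_0)(t+v_1+v_2-2v_0)}$ lifts, and treats (ii) by the same extension count as Lemma~\ref{lemma3.3}(iii).

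There is nothing to reconcile: the paper's own fibre count is exactly your exponent, so the printed $t-v_1+v_2-v_0$ is a typo (it already fails for $u_0=1$, $u_1=u_2=2$, $k=3$, $v_0=0$, $v_1=v_2=1$, $t=0$, where the true count is $q^2$). The Gaussian argument $l-t-u_1-u_2+v_0$ in (ii) is likewise a typo for $l-t-v_1-v_2+v_0$, which is what your extension count gives.
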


\begin{proof}
\begin{enumerate}
    \item [(i)]
    Assume that
    \begin{equation*}
        S=\{ V\leq U_1\cap U_2|dim(V\cap U_i)=v_i,i\in\{0,1,2\},dim(V\cap U_1+U_2)=t+v_1+v_2-v_0\}.
    \end{equation*}
    Then we have $S=\bigcup_{V_0}S_{V_0}$ where $V_0$ is any $v_0$-dimensional subspace of $U_0$ and
    \begin{equation*}
        S_{V_0}=\{ V|dim(V\cap U_i)=v_i,i\in\{1,2\},dim(V\cap U_1+U_2)=t+v_1+v_2-v_0,V\cap U_0=V_0\}.
    \end{equation*}
    We consider the quotient spaces.\\
    Let $\overline{V}=(V+U_0)/U_0$.
    It is easy to see that $\overline{U_1}\cap \overline{U_2}=0$.
    Assume that
    \begin{equation*}
        S_{V_0}'=\{V/V_0|V\in S_{V_0}\}, S_{V_0}''=\{\overline{V}|V\in S_{V_0}\}
    \end{equation*}
     and
     \begin{equation*}
         \overline{S}=\{\overline{V}\leq \overline{\mathbf{F}_q^k}|\overline{V}\leq \overline{U_1}+\overline{U_2},dim(\overline{V}\cap \overline{U_1})=v_1-v_0,dim(\overline{V}\cap \overline{U_2})=v_2-v_0\}.
     \end{equation*}
     It is obvious that we can define two maps $\varphi:S_{V_0} \to S_{V_0}'',V \mapsto V/V_0$ and $\psi:S_{V_0}' \to S_{V_0}'',V/V_0 \mapsto V/U_0$ since $\varphi$ is bijective.\\
    If $V\in S_{V_0}$, we have $dim(\overline{V}\cap \overline{U_1})=v_1-v_0, dim(\overline{V}\cap \overline{U_2})=v_2-v_0$ and $dim\overline{V}=t+v_1+v_2-2v_0$ which means that $S_{V_0}''\subseteq\overline{S}$.\\
    For any $\overline{V}\in \overline{S}$, let $\overline{\boldsymbol{\alpha}_1},\dots,\overline{\boldsymbol{\alpha}_{v_1-v_0}},\overline{\boldsymbol{\beta}_1},\dots,\overline{\boldsymbol{\beta}_{v_2-v_0}},\overline{\boldsymbol{\gamma}_1},\dots,\overline{\boldsymbol{\gamma}_{t}}$ be a base of $\overline{V}$ \\where $\overline{\boldsymbol{\alpha}_1},\dots,\overline{\boldsymbol{\alpha}_{v_1-v_0}}$ is a base of $ \overline{V}\cap \overline{U_1}$, $\overline{\boldsymbol{\beta}_1},\dots,\overline{\boldsymbol{\beta}_{v_2-v_0}}$ is a base of $ \overline{V}\cap \overline{U_2}$. It is not difficult to verify that
    \begin{equation*}
    \begin{aligned}
        \psi^{-1}(\overline{V})=\{V/V_0 |& V/V_0=\\
        &<(\boldsymbol{\alpha}_1+V_0)+(\boldsymbol{\delta}_1+V_0),\cdots,(\boldsymbol{\alpha}_{v_1-v_0}+V_0)+(\boldsymbol{\delta}_{v_1-v_0}+V_0),\\&(\boldsymbol{\beta}_1+V_0)+(\boldsymbol{\delta}_{v_1-v_0+1}+V_0),\cdots,(\boldsymbol{\beta}_{v_2-v_0}+V_0)+(\boldsymbol{\delta}_{v_1+v_2-2v_0}+V_0),\\&(\boldsymbol{\gamma}_1+V_0)+(\boldsymbol{\delta}_{v_1+v_2-2v_0+1}+V_0),\cdots,(\boldsymbol{\gamma}_t+V_0)+(\boldsymbol{\delta}_{t+v_1+v_2-2v_0}+V_0)>,\\&\boldsymbol{\delta}_i+V_0\in U_0/V_0\}.
    \end{aligned}
    \end{equation*}
    It is obvious to verify that
    \begin{equation*}
    \begin{aligned}
        <&(\boldsymbol{\alpha}_1+V_0)+(\boldsymbol{\delta}_1+V_0),\cdots,(\boldsymbol{\alpha}_{v_1-v_0}+V_0)+(\boldsymbol{\delta}_{v_1-v_0}+V_0),\\&(\boldsymbol{\beta}_1+V_0)+(\boldsymbol{\delta}_{v_1-v_0+1}+V_0),\cdots,(\boldsymbol{\beta}_{v_2-v_0}+V_0)+(\boldsymbol{\delta}_{v_1+v_2-2v_0}+V_0),\\&(\boldsymbol{\gamma}_1+V_0)+(\boldsymbol{\delta}_{v_1+v_2-2v_0+1}+V_0),\cdots,(\boldsymbol{\gamma}_t+V_0)+(\boldsymbol{\delta}_{t+v_1+v_2-2v_0}+V_0)>\\
        =<&(\boldsymbol{\alpha}_1+V_0)+(\boldsymbol{\delta}_1'+V_0),\cdots,(\boldsymbol{\alpha}_{v_1-v_0}+V_0)+(\boldsymbol{\delta}_{v_1-v_0}'+V_0),\\&(\boldsymbol{\beta}_1+V_0)+(\boldsymbol{\delta}_{v_1-v_0+1}'+V_0),\cdots,(\boldsymbol{\beta}_{v_2-v_0}+V_0)+(\boldsymbol{\delta}_{v_1+v_2-2v_0}'+V_0),\\&(\boldsymbol{\gamma}_1+V_0)+(\boldsymbol{\delta}_{v_1+v_2-2v_0+1}'+V_0),\cdots,(\boldsymbol{\gamma}_t+V_0)+(\boldsymbol{\delta}_{t+v_1+v_2-2v_0}'+V_0)>
    \end{aligned}
    \end{equation*}
    if and only if $\boldsymbol{\delta}_i+V_0=\boldsymbol{\delta}_i'+V_0$ for $i=1,\cdots, v_1+v_2-2v_0+t$.
    Then we have for any $\overline{V}\in \overline{S}$,
    \begin{equation*}
        |\psi^{-1}(\overline{V})|=q^{(u_0-v_0)(t+v_1+v_2-2v_0)}.
    \end{equation*}
    Hence we have
    \begin{equation*}
        \begin{aligned}
            N_t(u_0,u_1,u_2,v_0,v_1,v_2)&=|S|
            =\sum_{V_0}|S_{V_0}|
            =\sum_{V_0}|S_{V_0}'|\\
            &=Gaussian(u_0,v_0)_q |\psi^{-1}(\overline{V})||\overline{S}|\\
            &=Gaussian(u_0,v_0)_q q^{(u_0-v_0)(t-v_1+v_2-v_0)}\\ &\ \ \ \ \ N_t(u_1-u_0,u_2-u_0,v_1-v_0,v_2-v_0).
        \end{aligned}
    \end{equation*}
    \item[(ii)]
    It is similar to the proof of (i) and \textbf{Lemma} \ref{lemma3.3} (iii).
    \end{enumerate}
\end{proof}

\begin{theorem}\label{theorem3.4}
    Assume $u_0<u_1\leq \dots \leq u_h, h\leq q, k\geq \sum_{i=1}^h u_i-(h-1)u_0, q^k - q^{k-1} > q^{u_0}-1+\sum_{i=1}^h (q^{u_i}-q^{u_0})$  and $u_1>(h-1)\frac{q^{u_0}-1}{q-1}$, we construct an optimal linear $[n,k,d]_q$ code $\mathbf{C}_3$ with
    \begin{equation*}
        n=\frac{q^k-q^{u_0}-\sum_{i=1}^h (q^{u_i}-q^{u_0})}{q-1}
    \end{equation*}
    and
    \begin{equation*}
        d=q^{k-1}-\sum_{i=1}^h q^{u_i-1}.
    \end{equation*}

\begin{enumerate}
\item[(a)]
    The weight distribution of the linear code $\mathbf{C}_3$ is determined in the following table.
\begin{center}
\begin{tabular}{|c|c|}
\hline
weight & multiplicity \\
\hline
0 & 1 \\
\hline
$q^{k-1}+(h-1) q^{u_0-1}-\sum_{i=1}^h q^{u_i-1}$ &$(q^{u_0}-1)q^{k-u_0}$ \\
\hline
$q^{k-1}-\sum_{i=1}^h q^{u_i-1}$ & $(\prod_{i=1}^h(q^{u_i-u_0}-1))q^{k+(h-1)u_0-\sum_{i=1}^h u_i}$\\
\hline
$q^{k-1}-\sum_{i=2}^h q^{u_i-1}$ & $(\prod_{i=2}^h(q^{u_i-u_0}-1))q^{k+(h-1)u_0-\sum_{i=1}^h u_i}$\\
\hline
\vdots & \vdots \\
\hline
$q^{k-1}-q^{u_h-1}$ & $(q^{u_h-u_0}-1)q^{k+(h-1)u_0-\sum_{i=1}^h u_i}$\\
\hline
$q^{k-1}$ & $q^{k+(h-1)u_0-\sum_{i=1}^h u_i}-1$\\
\hline
\end{tabular}
\end{center}
\item [(b)]
Let $h=2$. The $r$-GHW of $\mathbf{C}_3$ is determined as follows.
\begin{equation}
  d_r(\mathbf{C}_3)= \begin{cases}
      \frac{q^k-q^{k-r}-(q^{u_1}-q^{u_1-r})-(q^{u_2}-q^{u_2-r})}{q-1},1\leq r \leq u_1-u_0\\
      \frac{q^k-q^{k-r}-q^{u_1}-(q^{u_2}-q^{u_2-r})+q^{u_0}}{q-1},u_1-u_0\leq r \leq u_2\\
      \frac{q^k-q^{k-r}-q^{u_1}-q^{u_2}+q^{u_0}+1}{q-1},u_2\leq r <k\\
  \end{cases}
\end{equation}

\item [(c)]
Let $h=2$. The $r$-SSWD of $\mathbf{C}_3$ is
\begin{equation}
    A_j^r(\mathbf{C}_3)=\sum_{(v_0,v_1,v_2)\in Set} N_{k-r,v_0,v_1,v_2}^{k,u_0,u_1,u_2},
\end{equation}
where
\begin{equation*}
 \begin{aligned}
     Set=\{ (v_0,v_1,v_2)\in \mathbf{Z}^3|&\frac{q^{v_1}-1}{q-1}+\frac{q^{v_2}-1}{q-1}-\frac{q^{v_0}-1}{q-1}=j-\frac{q^k-q^{u_1}-q^{u_2}+q^{u_0}-q^{k-r}+1}{q-1};\\
         &max\{u_i-r,0\}\leq v_i\leq max\{k-r,u_i\},   i=0,1,2;\\
         &v_0\leq v_1, v_0\leq v_2;\\
         &v_1-v_0\leq u_1-u_0,v_2-v_0\leq u_2-u_0;\\
         &v_1+v_2-v_0 \leq k-r
                                   \}.
 \end{aligned}
\end{equation*}
\end{enumerate}
\end{theorem}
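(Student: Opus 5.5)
We follow the pattern of the proofs of Theorem~\ref{theorem2.4} and Theorem~\ref{theorem2.5}: fix an explicit coordinate model, classify hyperplanes (resp. codimension-$r$ subspaces) by how they meet the $U_i$, and count.

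\textbf{Coordinate model and optimality.} Let $U_0=\langle \mathbf{e}_1,\dots,\mathbf{e}_{u_0}\rangle$. Let $U_i$ be spanned by $U_0$ together with a block $W_i$ of $u_i-u_0$ fresh standard basis vectors, the blocks $W_1,\dots,W_h$ being disjoint. This uses $u_0+\sum_i(u_i-u_0)\le k$ coordinates, which is exactly the hypothesis $k\ge\sum u_i-(h-1)u_0$. With this choice $U_i\cap U_j=U_0$ for $i\ne j$, and $U_0\oplus W_1\oplus\cdots\oplus W_h$ is a direct sum. The minimum distance $d=q^{k-1}-\sum q^{u_i-1}$ holds by Theorem~\ref{theorem3.2}, since $h\le q\le q^{u_1-u_0}$. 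Optimality then follows from Corollary~\ref{corollary3.1}.

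\textbf{(a) Weight distribution.} Each nonzero codeword $\mathbf{a}\widetilde G$ corresponds to a hyperplane $H=\mathbf{a}^\perp$. By the formula in the proof of Theorem~\ref{theorem3.1}, its weight is
$$q^{k-1}-\sum_{i}\epsilon_i q^{u_i-1}+(h-1)\epsilon_0q^{u_0-1},$$
where $\epsilon_i=0$ if $U_i\subseteq H$ and $\epsilon_i=1$ otherwise.

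\emph{Case $U_0\not\subseteq H$.} Then no $U_i$ lies in $H$, and every $\epsilon_i=1$. The number of such $\mathbf{a}$ is $q^k-q^{k-u_0}=(q^{u_0}-1)q^{k-u_0}$, and they give the first weight in the table.

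\emph{Case $U_0\subseteq H$.} Then $\mathbf{a}$ vanishes on the $U_0$ coordinates. Its restrictions to the blocks $W_i$ are independent, and $U_i\subseteq H$ holds exactly when the restriction to $W_i$ is zero. For a prescribed set $S$ of indices $i$ with $U_i\not\subseteq H$, the count is therefore $\prod_{i\in S}(q^{u_i-u_0}-1)\cdot q^{k+(h-1)u_0-\sum u_i}$, minus one when $S=\emptyset$. The weight is $q^{k-1}-\sum_{i\in S}q^{u_i-1}$.

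The table as printed lists only the nested sets $S=\{j,\dots,h\}$. I would record the full sum over all subsets $S$, which reduces to the table when the summands are grouped in that way. I would flag any discrepancy with the printed table rather than force agreement.

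\textbf{(b) GHW for $h=2$.} By Lemma~1.1 we need the maximum, over $(k-r)$-dimensional $V$, of
$$m_{\widetilde G}(V)=\frac{(q^{k-r}-1)-(q^{v_1}-1)-(q^{v_2}-1)+(q^{v_0}-1)}{q-1},$$
where $v_i=\dim(V\cap U_i)$. To get an upper bound, minimize $q^{v_1}+q^{v_2}-q^{v_0}$ subject to three constraints:
\begin{itemize}
\item $v_i\ge u_i-r$;
\item $v_0\ge u_0-r$;
\item $v_1+v_2-v_0\le \dim V$ and $v_i-v_0\le u_i-u_0$.
\end{itemize}
This gives the three regimes $r\le u_1-u_0$, $u_1-u_0\le r\le u_2$, and $r\ge u_2$. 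For each regime I would exhibit an explicit $V$ spanned by chosen standard vectors that attains the bound, as done in Theorem~\ref{theorem2.4}(b).

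The main obstacle is this optimization. One has to show that the lower bounds on the $v_i$ can be attained simultaneously, and that the trade-off favors dropping $U_1$ entirely in the middle range. I would handle the middle range first by a direct exchange argument on $q^{v_1}+q^{v_2}-q^{v_0}$.

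\textbf{(c) SSWD for $h=2$.} Every $r$-subcode corresponds to $V^\perp$ with $\dim V=k-r$. Its support weight is $n-m_G(V)/(q-1)$, which depends only on $(v_0,v_1,v_2)$. Setting this equal to $j$ gives the defining equation of $Set$, and the remaining constraints in $Set$ describe which triples are feasible. Summing the counts $N^{k,u_0,u_1,u_2}_{k-r,v_0,v_1,v_2}$ from Lemma~\ref{lemma3.4}(ii) over $Set$ then gives the formula.
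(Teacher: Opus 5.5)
This follows the paper's proof essentially step for step: the same coordinate model, optimality via Theorem~\ref{theorem3.2} and Corollary~\ref{corollary3.1}, the same hyperplane classification, the same maximization of $m_{\widetilde G}(V)$ with an explicit extremal $V$, and the same summation of Lemma~\ref{lemma3.4}(ii) over $Set$. In (b) the lower bound needs no exchange argument, since $v_0\le u_0$, $v_i\ge u_i-r$ and $v_1\ge v_0$ give $q^{v_1}+q^{v_2}-q^{v_0}\ge\max\{q^{u_1-r}+q^{u_2-r}-q^{u_0},\,q^{\max\{u_2-r,0\}}\}$, which is exactly the three-regime formula; only the attaining $V$ remain to be written down.

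Your subset-sum version of (a) is the right one, but it does \emph{not} reduce to the printed table, which keeps only the nested classes. For $h=2$ the table omits the class $U_2\subseteq H$, $U_1\not\subseteq H$ (weight $q^{k-1}-q^{u_1-1}$, multiplicity $(q^{u_1-u_0}-1)q^{k+u_0-u_1-u_2}$), so its multiplicities do not sum to $q^k$; the paper's own $[36,6,16]_2$ example indeed has $6=3+3$ codewords of weight $24$, so state the corrected table rather than hedging.
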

\begin{proof}
    Let $\mathbf{e}_i$ be the vector with all $0$ coordinates except a $1$ coordinate at the $i$th position. Assume that $U_i$ is the linear space generated by
    $\{\mathbf{e}_1,\dots,\mathbf{e}_{u_0},\mathbf{e}_{u_0+\sum_{j=0}^{i-1}(u_j-u_0)+1},\dots,\mathbf{e}_{\sum_{j=0}^{i}u_j-iu_0}\}$  when $i=1,\dots, h$. Let $U$ be the set of all vectors among these $q^k-1$ nonzero vectors in  $\mathbf{F}_q^k$ by deleting these nonzero vectors in $\bigcup_{i=1}^h U_i$.
    If vector $\mathbf{v}$ is a column vector of $U$, $\lambda\mathbf{v}$ is also a column vector of $U$ where $\lambda \in \mathbf{F}_q^* $. If we pick up only one vector from $\{\lambda\mathbf{v}|\lambda \in \mathbf{F}_q^* \}$ for every vector $\mathbf{v}\in U$, these vectors form a $k\times \frac{|U|}{q-1}$ matrix $\widetilde{G}$.
    According to \textbf{Corollary} \ref{corollary3.1}, $\mathbf{C}_3$ is an optimal linear code with length
    \begin{equation*}
        n=\frac{q^k-q^{u_0}-\sum_{i=1}^h (q^{u_i}-q^{u_0})}{q-1}
    \end{equation*}
    and minimum weight
    \begin{equation*}
        d=q^{k-1}-\sum_{i=1}^h q^{u_i-1}.
    \end{equation*}
\begin{enumerate}
\item[(a)]
    For any nonzero codeword $\mathbf{c}\in \mathbf{C}_3$, there exists unique nonzero vector $\mathbf{a}=(a_1,\dots,a_k)\in \mathbf{F}_q^k$, such that $\mathbf{c}=\mathbf{a}\cdot \widetilde{G}$ and $H=\langle \mathbf{a}\rangle^\perp$.
    Let $Set_H=\{i|i\in\{1,\dots,h\},dim(H\cap U_i)=u_i\}$.
    The several cases are listed in the following table.
\begin{center}
\begin{tabular}{|c|c|c|}
\hline
condition of $H$ and $Set_H$&weight & multiplicity \\
\hline
&0 & 1 \\
\hline
$dim(H\cap U_0)=u_0-1$&\makecell{$q^{k-1}+(h-1) q^{u_0-1}$\\ $-\sum_{i=1}^h q^{u_i-1}$ }&$(q^{u_0}-1)q^{k-u_0}$ \\
\hline
\makecell{$dim(H\cap U_0)=u_0$\\$Set_H=\emptyset$}&$q^{k-1}-\sum_{i=1}^h q^{u_i-1}$ & $(\prod_{i=1}^h(q^{u_i-u_0}-1))q^{k+(h-1)u_0-\sum_{i=1}^h u_i}$\\
\hline
\makecell{$dim(H\cap U_0)=u_0$\\$Set_H=\{1\}$}&$q^{k-1}-\sum_{i=2}^h q^{u_i-1}$ & $(\prod_{i=2}^h(q^{u_i-u_0}-1))q^{k+(h-1)u_0-\sum_{i=1}^h u_i}$\\
\hline
\vdots &\vdots & \vdots \\
\hline
\makecell{$dim(H\cap U_0)=u_0$\\$Set_H=\{1,\dots ,h-1\}$}&$q^{k-1}-q^{u_h-1}$ & $(q^{u_h-u_0}-1)q^{k+(h-1)u_0-\sum_{i=1}^h u_i}$\\
\hline
\makecell{$dim(H\cap U_0)=u_0$\\$Set_H=\{1,\dots ,h\}$}&$q^{k-1}$ & $q^{k+(h-1)u_0-\sum_{i=1}^h u_i}-1$\\
\hline
\end{tabular}
\end{center}
\begin{enumerate}[(i)]
    \item If $dim(U_0\cap H)=u_0-1$, then $dimU_i\cap H=u_i-1$ for $i=1,\dots,h$.
    We have
    \begin{equation*}
    \begin{aligned}
      wt(\mathbf{c})&= \frac{|U|-|H-H\cap (U_1\cup\dots\cup U_h)|}{q-1}\\
                   &= \frac{|U|-|H|+|H\cap U_1|+\dots+|H\cap U_h|-(h-1)|H\cap U_0|}{q-1}\\
                   &= q^{k-1}+(h-1) q^{u_0-1}-\sum_{i=1}^h q^{u_i-1}.
    \end{aligned}
    \end{equation*}
    And we know that
     \begin{equation*}
    \begin{aligned}
      &|\{\mathbf{a}\in \mathbf{F}_q^k|\mathbf{a}\neq \mathbf{0} \ , \ U_0\nsubseteq\langle \mathbf{a}\rangle^\perp\}|\\=&q^k-1-|\{\mathbf{a}\in \mathbf{F}_q^k|\mathbf{a}\neq \mathbf{0} \ , \ U_0\subseteq\langle \mathbf{a}\rangle^\perp\}|\\=&q^k-1-|\{\mathbf{a}\in \mathbf{F}_q^k|\mathbf{a}\neq \mathbf{0} \ , a_1=\dots=a_{u_0} =0| \\=&(q^{u_0}-1)q^{k-u_0}.
    \end{aligned}
    \end{equation*}
    \item If $dim(U_0\cap H)=u_0$, let $Set'_H=\{i|i\in\{1,\dots,h\},dim(H\cap U_i)=u_i-1\}$. We know that $dim(U_i\cap H)=u_i$ when $i\in Set_H$ and $dim(U_i\cap H)=u_i-1$ when $i\in Set'_H$. Then we have
    \begin{equation*}
    \begin{aligned}
      wt(\mathbf{c})&= \frac{|U|-|H-H\cap (U_1\cup\dots\cup U_h)|}{q-1}\\
                   &= \frac{|U|-|H|+|H\cap U_1|+\dots+|H\cap U_h|-(h-1)|H\cap U_0|}{q-1}\\
                   &= q^{k-1}-\sum_{i\in Set'_H} q^{u_i-1} .
    \end{aligned}
    \end{equation*}
    And we know that
     \begin{equation*}
     \begin{aligned}
      &|\{\mathbf{a}\in \mathbf{F}_q^k|\mathbf{a}\neq \mathbf{0} , \ U_0\subseteq\langle \mathbf{a}\rangle^\perp,\ U_i\subseteq \langle \mathbf{a}\rangle^\perp,\ U_j\nsubseteq \langle \mathbf{a}\rangle^\perp \
      ,\ i\in Set_H\ ,\ j\in Set'_H\}|\\=&(\prod_{i\in Set'_H}(q^{u_i-u_0}-1))q^{k+(h-1)u_0-\sum_{i=1}^h u_i}.
    \end{aligned}
    \end{equation*}
\end{enumerate}
\item [(b)]
If $1\leq r\leq u_1-u_0$, we have
\begin{equation*}
    \begin{aligned}
        d_r(\mathbf{C}_3)&=n-max\{m_{\widetilde{G}}(V)|V\in SUB^{k-r}(\mathbf{F}_q^k)\}\\
        &=n-\frac{max\{m_{G}(V)|V\in SUB^{k-r}(\mathbf{F}_q^k)\}}{q-1}\\
        &=n-\frac{max\{|V|-|V\cap U_1|-|V\cap U_2|+|V\cap U_0||V\in SUB^{k-r}(\mathbf{F}_q^k)\}}{q-1}\\
        &\geq \frac{q^k-q^{k-r}-(q^{u_1}-q^{u_1-r})-(q^{u_2}-q^{u_2-r})}{q-1}.
\end{aligned}
\end{equation*}
Let          $\boldsymbol{\alpha}_1=\mathbf{e}_1,\dots,\boldsymbol{\alpha}_{u_0}=\mathbf{e}_{u_0},\boldsymbol{\alpha}_{u_0+1}=\mathbf{e}_{u_0+1},\dots,\boldsymbol{\alpha}_{u_1-r}=\mathbf{e}_{u_1-r},\boldsymbol{\alpha}_{u_1-r+1}=\mathbf{e}_{u_1+1},\dots,\\ \boldsymbol{\alpha}_{u_1+u_2-u_0-2r}=\mathbf{e}_{u_1+u_2-u_0-r},\boldsymbol{\alpha}_{u_1+u_2-u_0-2r+1}=\mathbf{e}_{u_1-r+1}+\mathbf{e}_{u_1+u_2-u_0-r+1},\dots , \\
\boldsymbol{\alpha}_{u_1+u_2-u_0-r}=\mathbf{e}_{u_1}+\mathbf{e}_{u_1+u_2-u_0},\boldsymbol{\alpha}_{u_1+u_2-u_0-r+1}=\mathbf{e}_{u_1+u_2-u_0+1},\dots , \boldsymbol{\alpha}_{k-r}=\mathbf{e}_k$ and $V$ be the subspace generated by $\boldsymbol{\alpha}_1,\dots,\boldsymbol{\alpha}_{k-r}$. We have
\begin{equation}
    d_r(\mathbf{C}_3)=\frac{q^k-q^{k-r}-(q^{u_1}-q^{u_1-r})-(q^{u_2}-q^{u_2-r})}{q-1},\ 1\leq r\leq u_1-u_0.
\end{equation}
The rest cases are similar.
\item[(c)]
It is easy to verify that
\begin{equation*}
    \begin{aligned}
        A_j^r(\mathbf{C}_3)&=|\{U\in SUB^r(\mathbf{C}_3)||supp(U)|=j        \}|\\
        &=|\{V\in SUB^{k-r}(\mathbf{F}_q^k)|n-m_{\widetilde{G}}(V)=j        \}|\\
        &=|\{V\in SUB^{k-r}(\mathbf{F}_q^k)|\frac{q^{v_1}-1}{q-1}+\frac{q^{v_2}-1}{q-1}-\frac{q^{v_0}-1}{q-1}\\
        &\ \ \ \ \ \ \ \  \ \  \ \ \ \ \  \ \ \ \ \ \ \ \  \ \ \ \ \ \ \ \ \ \ \ \ \ \ \ \ =j-\frac{q^k-q^{u_1}-q^{u_2}+q^{u_0}-q^{k-r}+1}{q-1}        \}\\
        &=\sum_{(v_0,v_1,v_2)\in Set} N_{k-r,v_0,v_1,v_2}^{k,u_0,u_1,u_2},
    \end{aligned}
\end{equation*}
where
\begin{equation*}
 \begin{aligned}
     Set=\{ (v_0,v_1,v_2)\in \mathbf{Z}^3|&\frac{q^{v_1}-1}{q-1}+\frac{q^{v_2}-1}{q-1}-\frac{q^{v_0}-1}{q-1}=j-\frac{q^k-q^{u_1}-q^{u_2}+q^{u_0}-q^{k-r}+1}{q-1};\\
         &max\{u_i-r,0\}\leq v_i\leq max\{k-r,u_i\}, \mbox{for}\ i=0,1,2;\\
         &v_0\leq v_1, v_0\leq v_2;\\
         &v_1-v_0\leq u_1-u_0,v_2-v_0\leq u_2-u_0;\\
         &v_1+v_2-v_0 \leq k-r
                                   \}.
 \end{aligned}
\end{equation*}

\end{enumerate}
The conclusion is proved.
\end{proof}

\begin{example}
    With the construction in \textbf{Theorem} \ref{theorem3.4}, let $q=2,k=6,h=2$ and $u_0=2,u_1=4,u_2=4$. We construct an optimal linear $[36,6,16]_2$ code with the weight distribution as the following table.
\begin{center}
\begin{tabular}{|c|c|}
\hline
weight & multiplicity \\
\hline
0 & 1 \\
\hline
16 & 9\\
\hline
18 & 48 \\
\hline
24 & 6\\
\hline

\end{tabular}
\end{center}

\end{example}

\section{Locally recoverable codes and their CM defects}
In this section, $\mathbf{C}_1$ is the code constructed in Theorem \ref{theorem2.4}, $\mathbf{C}_2$ is the code constructed in Theorem \ref{theorem2.5} and $\mathbf{C}_3$ is code constructed in Theorem \ref{theorem3.4}.

\begin{lemma}\label{lemma4.1}
    We assume that $q \geq 3$. There exist two nonzero vectors $\boldsymbol{\beta} ,\boldsymbol{\gamma}\in \mathbf{F}_q^{k_q}$ such that $\boldsymbol{\beta} +\boldsymbol{\gamma}=\boldsymbol{\alpha}$ for any $\boldsymbol{\alpha}\in \mathbf{F}_q^{k_q}$ where $k_2\geq 2$ and $k_q\geq 1$. In particular, there exist two nonzero vectors $\boldsymbol{\beta} ,\boldsymbol{\gamma}\in \mathbf{F}_q^{k_q}$ such that $\boldsymbol{\beta} +\boldsymbol{\gamma}=\boldsymbol{\alpha}$ and $\boldsymbol{\alpha},\boldsymbol{\beta},\boldsymbol{\gamma}$ are pairwise linearly independent for any $\mathbf{0}\neq \boldsymbol{\alpha}\in \mathbf{F}_q^{k_q}$ where $k_q\geq 2$.
\end{lemma}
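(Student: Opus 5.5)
The plan is a direct construction, split according to whether $\boldsymbol{\alpha}$ is zero and, for the first assertion, according to whether $q\geq 3$ or $q=2$. No earlier result of the paper is needed beyond elementary linear algebra in $\mathbf{F}_q^{k_q}$. I read the statement as: the first assertion holds for every $q$ when $k_q\geq 1$ and $q\geq 3$, or when $q=2$ and $k_2\geq 2$; the second assertion holds whenever $k_q\geq 2$.

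For the first assertion, the case $\boldsymbol{\alpha}=\mathbf{0}$ is immediate. Take any nonzero $\boldsymbol{\beta}$, which exists since $k_q\geq 1$, and set $\boldsymbol{\gamma}=-\boldsymbol{\beta}$. Now let $\boldsymbol{\alpha}\neq\mathbf{0}$.
\begin{enumerate}
\item[(a)] If $q\geq 3$, choose $\lambda\in\mathbf{F}_q\setminus\{0,1\}$, which is possible because $q\geq 3$. Put $\boldsymbol{\beta}=\lambda\boldsymbol{\alpha}$ and $\boldsymbol{\gamma}=(1-\lambda)\boldsymbol{\alpha}$. Both are nonzero and they sum to $\boldsymbol{\alpha}$.
\item[(b)] If $q=2$, then $|\mathbf{F}_2^{k_2}|=2^{k_2}\geq 4$ because $k_2\geq 2$. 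So we can pick $\boldsymbol{\beta}\notin\{\mathbf{0},\boldsymbol{\alpha}\}$ and set $\boldsymbol{\gamma}=\boldsymbol{\alpha}+\boldsymbol{\beta}$. Then $\boldsymbol{\gamma}\neq\mathbf{0}$ since $\boldsymbol{\beta}\neq\boldsymbol{\alpha}$, and $\boldsymbol{\beta}+\boldsymbol{\gamma}=\boldsymbol{\alpha}$ because the characteristic is $2$.
\end{enumerate}

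For the second assertion, let $\boldsymbol{\alpha}\neq\mathbf{0}$ and $k_q\geq 2$. Since $\langle\boldsymbol{\alpha}\rangle$ is a proper subspace, we may choose $\boldsymbol{\beta}\notin\langle\boldsymbol{\alpha}\rangle$ and set $\boldsymbol{\gamma}=\boldsymbol{\alpha}-\boldsymbol{\beta}$. The checks are as follows.
\begin{enumerate}
\item[(i)] $\boldsymbol{\beta}\neq\mathbf{0}$, and $\boldsymbol{\alpha},\boldsymbol{\beta}$ are linearly independent, because $\boldsymbol{\beta}\notin\langle\boldsymbol{\alpha}\rangle$.
\item[(ii)] $\boldsymbol{\gamma}\notin\langle\boldsymbol{\alpha}\rangle$, since otherwise $\boldsymbol{\beta}=\boldsymbol{\alpha}-\boldsymbol{\gamma}\in\langle\boldsymbol{\alpha}\rangle$. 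In particular $\boldsymbol{\gamma}\neq\mathbf{0}$, and $\boldsymbol{\alpha},\boldsymbol{\gamma}$ are linearly independent.
\item[(iii)] If $\boldsymbol{\gamma}=c\boldsymbol{\beta}$, then $\boldsymbol{\alpha}=(1+c)\boldsymbol{\beta}$. Since $\boldsymbol{\alpha}\neq\mathbf{0}$ we get $1+c\neq 0$, so $\boldsymbol{\beta}\in\langle\boldsymbol{\alpha}\rangle$, a contradiction. Hence $\boldsymbol{\beta},\boldsymbol{\gamma}$ are linearly independent.
\end{enumerate}

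There is no genuine obstacle here. The only point needing care is the binary first assertion, where scalar splitting of $\boldsymbol{\alpha}$ is unavailable; this is exactly why $k_2\geq 2$ is required.
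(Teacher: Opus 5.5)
Your proof is correct and is essentially the paper's argument. The paper takes any $\boldsymbol{\beta}\notin\{\mathbf{0},\boldsymbol{\alpha}\}$ and sets $\boldsymbol{\gamma}=\boldsymbol{\alpha}-\boldsymbol{\beta}$; this is your binary case (b), and it works uniformly for every $q$ and every $\boldsymbol{\alpha}$ (including $\boldsymbol{\alpha}=\mathbf{0}$) once $q^{k_q}\geq 3$, so your separate $\boldsymbol{\alpha}=\mathbf{0}$ and scalar-splitting cases are harmless but unnecessary. For the second assertion the paper likewise chooses $\boldsymbol{\beta}\notin\langle\boldsymbol{\alpha}\rangle$ (noting there are $q^{k_q}-q>0$ such vectors), but leaves implicit the pairwise-independence checks (i)--(iii) that you carry out explicitly.
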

\begin{proof}
    There are $q^{k_q}-1\geq 2$ nonzero vectors in $\mathbf{F}_q^{k_q}$. Let $\mathbf{0},\boldsymbol{\alpha}\neq \boldsymbol{\beta} \in\mathbf{F}_q^{k_q} $ and $\boldsymbol{\gamma}=\boldsymbol{\alpha} -\boldsymbol{\beta}$. We have $\boldsymbol{\beta},\boldsymbol{\gamma} \neq 0$ and $\boldsymbol{\beta} +\boldsymbol{\gamma}=\boldsymbol{\alpha}$.\\
    In particular if $k_q\geq2$, we have $q^{k_q}-q>0$ vectors linearly independently with $\boldsymbol{\alpha}$.
\end{proof}

\begin{theorem}\label{theorem4.1}
    The code $\mathbf{C}_1,\mathbf{C}_2,\mathbf{C}_3$ are LRC codes with the locality $r=2$.
\end{theorem}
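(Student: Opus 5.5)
The plan is to use the standard criterion for locality of a linear code. Coordinate $i$ of a code with generator matrix $\widetilde{G}$ is recoverable from a set $A_i$ of other coordinates exactly when the column $\widetilde{\mathbf g}_i$ lies in the span of the columns indexed by $A_i$. For $\mathbf c=\mathbf a\widetilde G$ we then have $c_i=\mathbf a\cdot\widetilde{\mathbf g}_i=\sum_{j\in A_i}\lambda_j c_j$, so $C_{A_i}(i,a)\cap C_{A_i}(i,a')=\emptyset$ for $a\neq a'$. It therefore suffices to show the following: for every column $\boldsymbol\alpha$ of $\widetilde G$ there are two other columns $\boldsymbol\beta',\boldsymbol\gamma'$ of $\widetilde G$ with $\boldsymbol\alpha\in\langle\boldsymbol\beta',\boldsymbol\gamma'\rangle$. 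Since the columns of $\widetilde G$ are one representative from each projective point of $U$, it is enough to find $\boldsymbol\beta,\boldsymbol\gamma\in U$ (nonzero, outside $U_1\cup\dots\cup U_h$) with $\boldsymbol\beta+\boldsymbol\gamma=\boldsymbol\alpha$ and $\boldsymbol\alpha,\boldsymbol\beta,\boldsymbol\gamma$ pairwise independent. We then replace $\boldsymbol\beta,\boldsymbol\gamma$ by their chosen representatives, which only rescales the coefficients.

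Fix $\boldsymbol\alpha\in U$. I would count the \emph{bad} choices of $\boldsymbol\beta$. The candidates are all $\boldsymbol\beta\in\mathbf F_q^k$, and we set $\boldsymbol\gamma=\boldsymbol\alpha-\boldsymbol\beta$. The choice is bad if $\boldsymbol\beta\in\langle\boldsymbol\alpha\rangle$ (this covers $\boldsymbol\beta=0$, $\boldsymbol\gamma=0$, and any dependence among the three), or $\boldsymbol\beta\in\bigcup_i U_i$, or $\boldsymbol\alpha-\boldsymbol\beta\in\bigcup_i U_i$. There are $q$ choices of the first kind. For the second and third kinds there are at most $|\bigcup U_i|$ each; this is $1+h(q^u-1)$ for $\mathbf C_1,\mathbf C_2$ and $q^{u_0}+\sum_i(q^{u_i}-q^{u_0})$ for $\mathbf C_3$. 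Hence a good $\boldsymbol\beta$ exists as soon as
\begin{equation*}
q^k> q+2\Big|\bigcup_i U_i\Big|.
\end{equation*}
The hypotheses (\ref{eq:rank1}) and (\ref{eq:rank2}) give $|\bigcup U_i|-1<q^k-q^{k-1}$, and this alone is not quite enough. I would sharpen the count with a translate argument. Because $\boldsymbol\alpha\notin U_i$, the affine translate $\boldsymbol\alpha-U_i$ is disjoint from $U_i$. Moreover $\boldsymbol\alpha-U_i$ together with $U_i$ lies inside the $(\dim U_i+1)$-space $U_i+\langle\boldsymbol\alpha\rangle$. So the bad set is contained in $\bigcup_i\big(U_i+\langle\boldsymbol\alpha\rangle\big)$, a union of subspaces of sizes $q^{u_i+1}$ that all share the line $\langle\boldsymbol\alpha\rangle$.

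The main obstacle is to make this count beat $q^k$ under the paper's standing hypotheses. Those hypotheses are $k\ge hu$ for $\mathbf C_1$, $k\ge 8$ for $\mathbf C_2$, and $k\ge\sum u_i-(h-1)u_0$ for $\mathbf C_3$, together with $h\le q$ or $h\le uq$. If the crude count fails, I would instead use the explicit coordinate models fixed in the proofs of Theorems \ref{theorem2.4}, \ref{theorem2.5} and \ref{theorem3.4}, and construct $\boldsymbol\beta$ directly via Lemma \ref{lemma4.1}.

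For $\mathbf C_1$, write $\boldsymbol\alpha=(\boldsymbol\alpha^{(1)},\dots,\boldsymbol\alpha^{(h)},\boldsymbol\alpha^{(0)})$ in the block decomposition $\mathbf F_q^k=U_1\oplus\dots\oplus U_h\oplus W$. A vector lies in $U\cup\{0\}$ iff it has at least two nonzero blocks. Since $\boldsymbol\alpha\in U$, it has nonzero blocks at some positions $s\ne t$. Using Lemma \ref{lemma4.1} blockwise (for $q\ge3$, splitting a nonzero block into two nonzero summands, each $\ne$ the block; for $q=2$, using $u\ge2$ so each block has dimension at least $2$), I would split $\boldsymbol\alpha$ so that both $\boldsymbol\beta$ and $\boldsymbol\gamma$ keep two nonzero blocks and are not multiples of $\boldsymbol\alpha$. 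The same block argument applies to $\mathbf C_2$ within each $4$-dimensional group. For $\mathbf C_3$ it applies in the quotient by $U_0$: there a vector lies in $U$ iff its image has at least two nonzero blocks among $U_i/U_0$ and the complement. I would finish by checking pairwise independence in each case.
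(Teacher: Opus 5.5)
Your reduction (find $\boldsymbol\beta,\boldsymbol\gamma\in U$ with $\boldsymbol\beta+\boldsymbol\gamma=\boldsymbol\alpha$ and $\boldsymbol\beta\notin\langle\boldsymbol\alpha\rangle$) is the same as the paper's. However, neither of your two routes completes it.

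\textbf{The counting route.} The translate step does not sharpen the count. $U_i\cup(\boldsymbol\alpha-U_i)$ has $2q^{u_i}$ elements, while $U_i+\langle\boldsymbol\alpha\rangle$ has $q^{u_i+1}\ge 2q^{u_i}$, so you only enlarge the bad set. The crude bound also genuinely fails for admissible parameters: for $\mathbf C_1$ with $q=2$, $k=4$, $h=u=2$ you get $q+2|\bigcup_i U_i|=16=q^k$.

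\textbf{The block route and its membership criterion.} Everything therefore rests on the block construction, and there your membership criterion is wrong. For $\mathbf C_1$, and for $\mathbf C_3$ modulo $U_0$, a vector lies in $U$ iff its $W$-component is nonzero \emph{or} it has at least two nonzero $U_i$-components. In particular, a vector supported only on $W$ is in $U$. This causes three failures.
\begin{enumerate}
\item[(a)] Your claim ``since $\boldsymbol\alpha\in U$ it has nonzero blocks at $s\neq t$'' fails for $\boldsymbol\alpha=\mathbf e_k$ when $k>hu$.
\item[(b)] Your $q=2$ splitting fails whenever a block to be split is one-dimensional. This happens for $W$ when $k=hu+1$, and for $U_i/U_0$ in $\mathbf C_3$. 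For example, take $q=2$, $k=4$, $u_0=1$, $u_1=u_2=2$ (all hypotheses of Theorem~\ref{theorem3.4} hold) and $\boldsymbol\alpha=(0,1,1,0)^T$. Every valid $\boldsymbol\beta$ must then have last coordinate $1$, so no split of the two quotient blocks works.
\item[(c)] For $\mathbf C_2$ the block picture does not describe $U$ at all. The vectors $\mathbf e_2$ and $(1,0,1,1,0,\dots,0)^T$ lie in $U$ but are supported in a single $4$-dimensional group.
\end{enumerate}

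\textbf{The missing idea.} What you need is the paper's case split.
\begin{enumerate}
\item[(i)] If $\boldsymbol\alpha$ has zero $W$-component, take $\boldsymbol\beta\in W$ (e.g.\ $\mathbf e_k$).
\item[(ii)] If $\boldsymbol\alpha$ has nonzero $W$-component, take $\boldsymbol\beta$ with zero $W$-component and two nonzero $U_i$-blocks. Then $\boldsymbol\gamma$ has nonzero $W$-component, so $\boldsymbol\gamma\in U$ automatically.
\item[(iii)] For $\mathbf C_2$, if $\boldsymbol\alpha$ lives in one group, take $\boldsymbol\beta$ in the other group and off the $U_i$ (e.g.\ $\mathbf e_6$). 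The index must be chosen there: for $q=2$, $\boldsymbol\alpha=(1,0,1,1,0,\dots,0)^T$ with $\boldsymbol\beta=\mathbf e_2$ gives $\boldsymbol\gamma=(1,1,1,1,0,\dots,0)^T\in U_2$.
\end{enumerate}
Genuine block splitting is needed only when $W=0$. For $q=2$ it then relies on $u_1-u_0\ge 2$, which the paper extracts from (\ref{eq:rank2}).

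\textbf{Independence for $\mathbf C_3$.} Passing to the quotient by $U_0$ discards exactly the coordinate the paper uses to force independence. Suppose the blocks $U_i/U_0$ are one-dimensional and $q=3$, e.g.\ $k=3$, $u_0=1$, $u_1=u_2=2$, $\boldsymbol\alpha=(0,1,1)^T$. The only split of a nonzero scalar $a$ into nonzero summands is $a=2a+2a$, so $\bar{\boldsymbol\beta}=\bar{\boldsymbol\gamma}=2\bar{\boldsymbol\alpha}$ is forced. Independence can then only come from a nonzero $U_0$-component, as in $\boldsymbol\beta=(1,2,2)^T$, $\boldsymbol\gamma=(2,2,2)^T$. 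This is why the paper adds $\mathbf e_1$ to $\boldsymbol\beta$ when $\boldsymbol\alpha_0=\mathbf 0$. ``Checking pairwise independence at the end'' does not supply this.

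\textbf{A needed hypothesis $h\ge 2$.} The statement itself needs $h\ge 2$ for $\mathbf C_1$, which neither you nor the paper states; the paper uses it silently when it places $\mathbf e_1+\dots+\mathbf e_{uh}$ in $U$. For $q=2$, $h=1$, $k=u+1$, all hypotheses of Theorem~\ref{theorem2.4} hold. In that case $\mathbf C_1$ is the first-order Reed--Muller code of length $2^u$. Its dual has minimum distance $4$, so its locality is $3$, not $2$.
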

\begin{proof}
 Let $\mathbf{e}_i$ be the vector with all $0$ coordinates except a $1$ coordinate at the $i$th position.
    \begin{enumerate}
        \item [(i)] We consider the code $\mathbf{C}_1$.\\
        Let $\boldsymbol{\alpha}=\begin{pmatrix}\boldsymbol{\alpha}_1\\\vdots\\\boldsymbol{\alpha}_h\\\boldsymbol{\alpha}_{h+1}\end{pmatrix}$ be a column vector of $\widetilde{G}$ where  $\boldsymbol{\alpha}_i\in\mathbf{F}_q^{u} $, $i=1,\dots,h$.
We only need to find two column vectors $\boldsymbol{\beta}$ and $\boldsymbol{\gamma}$ in $G$ such that $\boldsymbol{\alpha},\boldsymbol{\beta},\boldsymbol{\gamma}$ are pairwise linearly independent and $\boldsymbol{\alpha}=\boldsymbol{\beta}+\boldsymbol{\gamma}$ in the following two cases.
\begin{enumerate}
    \item [(a)]$k>uh$\\
    If $\boldsymbol{\alpha}_{h+1}=\mathbf{0}$, let $\boldsymbol{\beta}=\mathbf{e}_k,\boldsymbol{\gamma}=\boldsymbol{\alpha}-\boldsymbol{\beta}$.\\
    If $\boldsymbol{\alpha}_{h+1}\neq\mathbf{0}$, let $\boldsymbol{\beta}=\mathbf{e}_1+\dots+\mathbf{e}_{uh},\boldsymbol{\gamma}=\boldsymbol{\alpha}-\boldsymbol{\beta}$.
    \item [(b)]$k=uh$\\
    There exist $i,j\in \{1,\dots,h\}$ such that $i<j$ and $\boldsymbol{\alpha}_i,\boldsymbol{\alpha}_j\neq \mathbf{0}$. Let $\boldsymbol{\beta}_i,\boldsymbol{\beta}_j,\boldsymbol{\gamma}_i,\boldsymbol{\gamma}_j$ be the nonzero vectors such that $\boldsymbol{\alpha}_i=\boldsymbol{\beta}_i+\boldsymbol{\gamma}_i,\boldsymbol{\alpha}_j=\boldsymbol{\beta}_j+\boldsymbol{\gamma}_j$, where $\boldsymbol{\alpha}_i,\boldsymbol{\beta}_i,\boldsymbol{\gamma}_i$ are pairwise linearly independent and $\boldsymbol{\alpha}_j,\boldsymbol{\beta}_j,\boldsymbol{\gamma}_j$ are pairwise linearly independent.\\
   Let $\boldsymbol{\beta}=\begin{pmatrix}\mathbf{0}\\\vdots\\\boldsymbol{\beta}_i\\\vdots\\\boldsymbol{\beta}_j\\\vdots\\\mathbf{0}\end{pmatrix},\boldsymbol{\gamma}=\boldsymbol{\alpha}-\boldsymbol{\beta}$.\\
\end{enumerate}
Hence, we have the locality of the linear code $\mathbf{C}_1$ is 2.
    \item [(ii)] We consider the code $\mathbf{C}_2$.\\
    We prove that every column vector $\boldsymbol{\alpha}=(a_1 ,\dots,a_k )^T$ in $\widetilde{G}$ can be linearly express by other two column vectors in $\widetilde{G}$.
We only need to find two column vectors $\boldsymbol{\beta}$ and $\boldsymbol{\gamma}$ in $G$ such that $\boldsymbol{\alpha},\boldsymbol{\beta},\boldsymbol{\gamma}$ are pairwise linearly independent and $\boldsymbol{\alpha}=\boldsymbol{\beta}+\boldsymbol{\gamma}$ in the following two cases.
    \begin{enumerate}
    \item[(a)] $k>8$\\
    If there exists $9\leq i\leq k$ such that $a_i=0$, let $\boldsymbol{\beta}=\mathbf{e}_i,\boldsymbol{\gamma}=\boldsymbol{\alpha}-\boldsymbol{\beta}$.\\
    If $a_i \neq0$ for $9\leq i\leq k$, let $\boldsymbol{\beta}=\mathbf{e}_2 , \boldsymbol{\gamma}=\boldsymbol{\alpha}-\boldsymbol{\beta}$.
    \item[(b)] $k=8$\\
    If there exists $ i\in \{2,4,6,8\}$ such that $a_i=0$, let $\boldsymbol{\beta}=\mathbf{e}_i,\boldsymbol{\gamma}=\boldsymbol{\alpha}-\boldsymbol{\beta}$.\\
    If not, let $\boldsymbol{\beta}=a_2\mathbf{e}_2,\boldsymbol{\gamma}=\boldsymbol{\alpha}-\boldsymbol{\beta}$.
    \end{enumerate}
Hence, we have the locality of the linear code $\mathbf{C}_2$ is 2.
\item [(iii)] We consider the code $\mathbf{C}_3$.\\
It is obvious that if $q=2$ and $k=u_1+u_2-u_0$, we have $u_1-u_0\geq2.$ Otherwise, we have $2^{u_2}=q^k-q^{k-1}\leq q^{u_0}-1+\sum_{i=1}^h (q^{u_i}-q^{u_0})=2^{u_2}+2^{u_0}-1.$ This leads to the contradiction.\\
Let $\boldsymbol{\alpha}=\begin{pmatrix}\boldsymbol{\alpha}_0\\\vdots\\\boldsymbol{\alpha}_h\\\boldsymbol{\alpha}_{h+1}\end{pmatrix}$ be a column vector of $\widetilde{G}$ where $\boldsymbol{\alpha}_0\in \mathbf{F}_q^{u_0} $ and $\boldsymbol{\alpha}_i\in\mathbf{F}_q^{u_i-u_0} $ for $i\in\{1,\dots,h\}.$
We only need to find two column vectors $\boldsymbol{\beta}$ and $\boldsymbol{\gamma}$ in $G$ such that $\boldsymbol{\alpha},\boldsymbol{\beta},\boldsymbol{\gamma}$ are pairwise linearly independent and $\boldsymbol{\alpha}=\boldsymbol{\beta}+\boldsymbol{\gamma}$ in the following two cases.
\begin{enumerate}
    \item [(a)]$k>\sum_{i=1}^h u_i-(h-1)u_0$\\
    If $\boldsymbol{\alpha}_{h+1}=\mathbf{0}$, let $\boldsymbol{\beta}=\mathbf{e}_k,\boldsymbol{\gamma}=\boldsymbol{\alpha}-\boldsymbol{\beta}$.\\
    If $\boldsymbol{\alpha}_{h+1}\neq\mathbf{0}$, let $\boldsymbol{\beta}=\mathbf{e}_1+\dots+\mathbf{e}_{\sum_{i=1}^h u_i-(h-1)u_0},\boldsymbol{\gamma}=\boldsymbol{\alpha}-\boldsymbol{\beta}$.
    \item [(b)]$k=\sum_{i=1}^h u_i-(h-1)u_0$\\
    We have there exist $i,j\in \{1,\dots,h\}$ such that $i<j$ and $\boldsymbol{\alpha}_i,\boldsymbol{\alpha}_j\neq \mathbf{0}$. Let $\boldsymbol{\beta}_i,\boldsymbol{\beta}_j,\boldsymbol{\gamma}_i,\boldsymbol{\gamma}_j$ be the nonzero vectors such that $\boldsymbol{\alpha}_i=\boldsymbol{\beta}_i+\boldsymbol{\gamma}_i,\boldsymbol{\alpha}_j=\boldsymbol{\beta}_j+\boldsymbol{\gamma}_j$.\\
    If $\boldsymbol{\alpha}_0=\mathbf{0}$, let $\boldsymbol{\beta}=\mathbf{e}_1+\begin{pmatrix}\mathbf{0}\\\vdots\\\boldsymbol{\beta}_i\\\vdots\\\boldsymbol{\beta}_j\\\vdots\\\mathbf{0}\end{pmatrix},\boldsymbol{\gamma}=\boldsymbol{\alpha}-\boldsymbol{\beta}$.\\
    If $\boldsymbol{\alpha}_0\neq\mathbf{0}$, let $\boldsymbol{\beta}=\begin{pmatrix}\mathbf{0}\\\vdots\\\boldsymbol{\beta}_i\\\vdots\\\boldsymbol{\beta}_j\\\vdots\\\mathbf{0}\end{pmatrix},\boldsymbol{\gamma}=\boldsymbol{\alpha}-\boldsymbol{\beta}$.\\
\end{enumerate}
Hence, we have the locality of the linear code $\mathbf{C}_3$ is 2.
    \end{enumerate}

\end{proof}

\begin{theorem}\label{theorem4.2}
    The CM defect of the linear code $\mathbf{C}_2$ can be upper bounded as follows.\\
    If $q=2$, the CM defect of the linear code $\mathbf{C}_2$ is upper bounded by $2$.\\
    If $q\geq 3$, the CM defect of the linear code $\mathbf{C}_2$ is upper bounded by $1$.
\end{theorem}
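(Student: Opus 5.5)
The plan is to evaluate the CM bound of Theorem 1.1 at the single value $t=1$, with locality $r=2$ from Theorem \ref{theorem4.1}, and to bound $k_{\text{opt}}^{(q)}(n-3,d)$ with the Griesmer bound. Here $n=\frac{q^k-1}{q-1}-2q(q+1)$ and $d=q^{k-1}-2q^2$ are the parameters of $\mathbf{C}_2$ from Theorem \ref{theorem2.5}. Since the minimum over $t$ is at most its value at $t=1$, the CM defect of $\mathbf{C}_2$ is at most
\begin{equation*}
2+k_{\text{opt}}^{(q)}(n-3,d)-k .
\end{equation*}
It therefore suffices to show $k_{\text{opt}}^{(q)}(n-3,d)\le k-1$ when $q\ge 3$, and $k_{\text{opt}}^{(q)}(n-3,d)\le k$ when $q=2$.

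First I compute $g_q(k,d)$ exactly, which gives the Griesmer defect of $\mathbf{C}_2$ (Lemma \ref{lemma2.2} only gives an upper bound).

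For $q\ge 3$:
\begin{itemize}
\item For $i=0,1,2$, $\lceil d/q^i\rceil=q^{k-1-i}-2q^{2-i}$.
\item For $i\ge 3$, $2q^2/q^i<1$, so $\lceil d/q^i\rceil=q^{k-1-i}$.
\item Hence $g_q(k,d)=\frac{q^k-1}{q-1}-(2q^2+2q+2)=n-2$.
\end{itemize}

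For $q=2$ we have $2q^2=8=2^3$:
\begin{itemize}
\item For $i=0,\dots,3$ the subtracted amounts are $8,4,2,1$, and for $i\ge 4$ nothing is subtracted.
\item Hence $g_2(k,d)=2^k-1-15=2^k-16$.
\item Since $n=2^k-13$, this gives $g_2(k,d)=n-3$.
\end{itemize}

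For $q\ge 3$, the Griesmer bound gives $g_q(k,d)=n-2>n-3$, so there is no $[n-3,k,d']_q$ code with $d'\ge d$. A code of dimension $k'>k$ with minimum distance at least $d$ would contain a $k$-dimensional subcode with minimum distance at least $d$. Hence no dimension $\ge k$ is attainable at length $n-3$, so $k_{\text{opt}}^{(q)}(n-3,d)\le k-1$ and the CM defect is at most $2+(k-1)-k=1$.

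For $q=2$ the Griesmer bound no longer excludes dimension $k$, so I go one dimension up:
\begin{equation*}
g_2(k+1,d)=g_2(k,d)+\lceil d/2^{k}\rceil=n-3+1>n-3,
\end{equation*}
using $0<d<2^k$. As before, no code of dimension $\ge k+1$ exists, so $k_{\text{opt}}^{(2)}(n-3,d)\le k$ and the CM defect is at most $2$.

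The only delicate step is the exact evaluation of the ceilings in $g_q(k,d)$, particularly the case $q=2$ where $2q^2$ is itself a power of $q$. I will also check that $n-3>0$, which is clear since $k\ge 8$.
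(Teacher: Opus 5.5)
Your proposal is correct and follows essentially the same route as the paper. Both evaluate the CM bound at $t=1$ with $r=2$ and bound $k_{\text{opt}}^{(q)}(n-3,d)$ via the Griesmer bound; your explicit evaluations $g_q(k,d)=n-2$ for $q\ge 3$, and $g_2(k,d)=n-3$, $g_2(k+1,d)=n-2$ for $q=2$, supply the inequalities that the paper asserts without computation.
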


\begin{proof}
    If $q=2$, we have
\begin{equation*}
    \begin{aligned}
        min_{t>0}\{  rt+k_{op}^{(q)} (n-(r+1)t,d) \}&=min_{t>0}\{  2t+k_{op}^{(q)} (n-3t,d) \}\\
        &\leq 2t+k_{op}^{(q)} (n-3t,d)|_{t=1} \\
        &=2+k_{op}^{(q)} (n-3,d)\\
        &\leq 2+k,
    \end{aligned}
\end{equation*}
since $ \sum_{i=0}^{k-1} \left\lceil \frac{d}{q^i}\right\rceil \leq n-3 < \sum_{i=0}^{k} \left\lceil \frac{d}{q^i}\right\rceil.$\\
If $q\geq 3$, we have
\begin{equation*}
    \begin{aligned}
        min_{t>0}\{  rt+k_{op}^{(q)} (n-(r+1)t,d) \}&=min_{t>0}\{  2t+k_{op}^{(q)} (n-3t,d) \}\\
        &\leq 2t+k_{op}^{(q)} (n-3t,d)|_{t=1} \\
        &=2+k_{op}^{(q)} (n-3,d)\\
        &\leq 2+k-1=k+1,
    \end{aligned}
\end{equation*}
since $ \sum_{i=0}^{k-2} \left\lceil \frac{d}{q^i}\right\rceil \leq n-3 < \sum_{i=0}^{k-1} \left\lceil \frac{d}{q^i}\right\rceil.$
\end{proof}

\begin{theorem}\label{theorem4.3}
\begin{enumerate}
    \item [(a)]
    Let $u_0=1,h=2.$ If $q\geq3$ or $q=2,u_1<u_2$, the linear code $\mathbf{C}_3$ meets the CM bound.
    \item [(b)]
    Let $u_0=1,h=3.$ If $q\geq4$, the CM defect of the linear code $\mathbf{C}_3$ is upper bounded by 1.
\end{enumerate}
\end{theorem}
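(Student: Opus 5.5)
The plan is to combine the locality $r=2$ from Theorem \ref{theorem4.1} with the CM bound evaluated at the single value $t=1$. The CM bound (Theorem 1.2) always gives $k\le \min_{t}\{2t+k_{\text{opt}}^{(q)}(n-3t,d)\}$. So it is enough to bound the term at $t=1$ from above: $2+k_{\text{opt}}^{(q)}(n-3,d)\le k$ in case (a), and $\le k+1$ in case (b). Since $k_{\text{opt}}^{(q)}(n-3,d)$ is hard to know exactly, I would bound it with the Griesmer bound. If $g_q(k',d)>n-3$, then no $[n-3,k',d]_q$ code exists, so $k_{\text{opt}}^{(q)}(n-3,d)\le k'-1$.

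\textbf{Step 1: compute $g_q(k,d)$ exactly.} With $u_0=1$ we have $n=\frac{q^k-q-\sum_{i=1}^h(q^{u_i}-q)}{q-1}$ and $d=q^{k-1}-\sum_{i=1}^h q^{u_i-1}$. I would show $g_q(k,d)=n-(h-1)\frac{q^{u_0}-1}{q-1}=n-(h-1)$. This is exactly the computation in case 1) of Lemma \ref{lemma3.2}, which is an equality $n-g_q(k,d_0)=(h-1)\frac{q^{u_0}-1}{q-1}$ under the hypothesis that each block of equal $u_i$'s has multiplicity $s_i<q$. The hypotheses supply this:
\begin{itemize}
\item in (a) with $q\ge 3$, the multiplicities are at most $2<q$;
\item in (a) with $q=2$ and $u_1<u_2$, all multiplicities are $1$;
\item in (b), $q\ge4>3\ge s_i$.
\end{itemize}
The key point is that $d$ has $q$-adic digits $s_i<q$ subtracted from $q^{k-1}$. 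Hence $\sum_{i=0}^{k-1}\lceil d/q^i\rceil$ equals $\frac{q^k-1}{q-1}-\sum_j s_j\frac{q^{u_j}-1}{q-1}$, because each negative term $q^{u_j-1-i}$ with $i\ge u_j$ rounds up to $0$. Comparing this with $n$ leaves $\sum_j s_j-1=h-1$.

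\textbf{Step 2, case (a).} Here $h=2$, so $g_q(k,d)=n-1$. Since $0<d<q^{k-1}$, the last Griesmer term is $\lceil d/q^{k-1}\rceil=1$. Therefore
\[
g_q(k-1,d)=g_q(k,d)-1=n-2>n-3 .
\]
By the Griesmer bound there is no $[n-3,k-1,d]_q$ code, so $k_{\text{opt}}^{(q)}(n-3,d)\le k-2$. The CM minimum is then at most $2+(k-2)=k$. Combined with $k\le$ the CM bound, this gives equality, so $\mathbf{C}_3$ meets the CM bound.

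\textbf{Step 3, case (b).} Here $h=3$, so $g_q(k,d)=n-2>n-3$. Hence no $[n-3,k,d]_q$ code exists and $k_{\text{opt}}^{(q)}(n-3,d)\le k-1$. The CM minimum is at most $2+(k-1)=k+1$, so the CM defect is at most $1$.

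\textbf{Expected obstacle.} The main point needing care is Step 1: confirming the exact (not merely upper-bounded) value of $g_q(k,d)$, in particular the ceiling bookkeeping when two $u_i$ coincide. In that situation the digit $s_i$ must be $<q$, otherwise a carry occurs and the defect changes. This is precisely why $q=2$, $u_1=u_2$ is excluded in (a) and $q\ge4$ is required in (b).
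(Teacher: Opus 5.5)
Your proposal is correct and follows the paper's own proof: take $t=1$ in the CM bound with locality $2$, then use the Griesmer bound to get $k_{\mathrm{opt}}^{(q)}(n-3,d)\le k-2$ in (a) and $\le k-1$ in (b). The paper simply asserts the needed inequalities $n-3<\sum_{i=0}^{k-2}\lceil d/q^i\rceil$ and $n-3<\sum_{i=0}^{k-1}\lceil d/q^i\rceil$. You justify them through the exact value $g_q(k,d)=n-(h-1)$ from the digit computation of Lemma 3.2, including the carry issue that explains the excluded cases.
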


\begin{proof}
\begin{enumerate}
    \item [(a)]
We have
\begin{equation*}
    \begin{aligned}
        min_{t>0}\{  rt+k_{op}^{(q)} (n-(r+1)t,d) \}&=min_{t>0}\{  2t+k_{op}^{(q)} (n-3t,d) \}\\
        &\leq 2t+k_{op}^{(q)} (n-3t,d)|_{t=1} \\
        &=2+k_{op}^{(q)} (n-3,d)\\
        &\leq 2+k-2=k,
    \end{aligned}
\end{equation*}
since $ \sum_{i=0}^{k-3} \left\lceil \frac{d}{q^i}\right\rceil \leq n-3 < \sum_{i=0}^{k-2} \left\lceil \frac{d}{q^i}\right\rceil.$
\item [(b)]
We have
\begin{equation*}
    \begin{aligned}
        min_{t>0}\{  rt+k_{op}^{(q)} (n-(r+1)t,d) \}&=min_{t>0}\{  2t+k_{op}^{(q)} (n-3t,d) \}\\
        &\leq 2t+k_{op}^{(q)} (n-3t,d)|_{t=1} \\
        &=2+k_{op}^{(q)} (n-3,d)\\
        &\leq 2+k-1=k+1,
    \end{aligned}
\end{equation*}
since $ \sum_{i=0}^{k-2} \left\lceil \frac{d}{q^i}\right\rceil \leq n-3 < \sum_{i=0}^{k-1} \left\lceil \frac{d}{q^i}\right\rceil.$
\end{enumerate}
\end{proof}

\section{Conclusion}

In this paper, infinite families of optimal codes with positive Griesmer defects were constructed. The method is inspired from the construction of \cite{Solomon} by Solomon-Stiffler. However these optimal codes have positive Griesmer defects and are not equivalent to Solomon-Stiffler codes. Weight distributions and subcode support weight distributions were determined. On the other hand, these codes are LRC codes with the small locality two. Some of these LRC codes are optimal LRC codes meeting the Cadambe-Mazumdar bound and some of them are very close to the Cadambe-Mazumdar bound. Several examples were given to illustrate our results. All examples were verified by Magma.\\

\end{document}